\documentclass[a4paper,11pt,fleqn]{article}
\usepackage[dvipsnames,svgnames,table,cmyk]{xcolor}

\usepackage{amsmath,amssymb,amsthm}
\usepackage{mathtools}
\usepackage{mathrsfs}
\usepackage{amsbsy}
\usepackage{bm}
\usepackage{cancel}
\usepackage{slashed}
\usepackage{multirow}

\usepackage[T1]{fontenc}
\usepackage[expansion=false]{microtype}

\usepackage[most,skins]{tcolorbox}
\usepackage{mdframed}

\newtcolorbox{resultbox}[1][]{
  enhanced,colback=blue!5,colframe=blue!50!black,
  fonttitle=\bfseries,title={#1},breakable}
\newtcolorbox{critbox}[1][]{
  enhanced,colback=red!5,colframe=red!50!black,
  fonttitle=\bfseries,title={#1},breakable}
\newtcolorbox{warnbox}[1][]{
  enhanced,colback=orange!8,colframe=orange!60!black,
  fonttitle=\bfseries,title={#1},breakable}

\usepackage{tikz}
\usetikzlibrary{calc, decorations.markings, arrows.meta,
  positioning, shapes.geometric}
\usepackage{tikz-cd}

\usepackage[
  a4paper,
  textwidth=17.2cm,
  textheight=21.9cm,
  left=1.5cm,
  right=1.5cm,
  top=1.5cm
]{geometry}

\usepackage{graphicx}
\usepackage{subfigure}
\usepackage[final]{pdfpages}

\usepackage{booktabs}
\usepackage{dcolumn}

\usepackage[numbers,sort&compress]{natbib}

\usepackage{verbatim}
\usepackage{appendix}

\theoremstyle{plain}
\newtheorem{theorem}{Theorem}[section]
\newtheorem{lemma}[theorem]{Lemma}
\newtheorem{corollary}[theorem]{Corollary}
\newtheorem{proposition}[theorem]{Proposition}
\theoremstyle{definition}
\newtheorem{definition}[theorem]{Definition}

\newcommand{\HomG}{\mathrm{Hom}_{G_2}}
\newcommand{\be}{\begin{eqnarray}}
\newcommand{\ee}{\end{eqnarray}}

\newcommand{\vp}{\varphi}
\newcommand{\Om}{\Omega}
\newcommand{\ps}{\psi}
\newcommand{\ta}{\tau}
\newcommand{\Nf}{{\cal B}_{\rm KK}(r)}
\newcommand{\vthree}{\mathrm{vol}_3}
\newcommand{\er}{e^r}
\newcommand{\Pint}{\Pi_{\mathrm{int}}}
\newcommand{\gone}{g_{11d}}
\newcommand{\gthree}{g_{3d}}
\newcommand{\gseven}{g_{7d}}
\newcommand{\grr}{g_{rr}}
\newcommand{\kap}{\kappa_{11}}
\newcommand{\Leff}{\mathcal{L}_{\mathrm{eff}}}
\newcommand{\iot}[1]{\iota_{#1}}
\newcommand{\str}{\star_7}
\newcommand{\sEl}{\star_{11}}
\newcommand{\Ical}{\mathcal{I}_7}

\definecolor{db5}{cmyk}{0.5,0.5,0,0.5}
\definecolor{mauve}{cmyk}{0.3,0.7,0.1,0.3}
\definecolor{palemauve}{cmyk}{0.3,0.7,0.1,0.0}
\definecolor{pb}{cmyk}{0.4,0.1,0,0.1}
\definecolor{pgreen}{cmyk}{0.4,0.0,0.3,0.0}
\definecolor{pink}{cmyk}{0.0,0.5,0.3,0.0}

\catcode`@=11
\def\seceqaa{\@addtoreset{equation}{section}
\def\theequation{A\arabic{equation}}}
\def\seceqbb{\@addtoreset{equation}{section}
\def\theequation{B\arabic{equation}}}
\def\seceqcc{\@addtoreset{equation}{section}
\def\theequation{C\arabic{equation}}}
\def\seceqdd{\@addtoreset{equation}{section}
\def\theequation{D\arabic{equation}}}
\def\seceqee{\@addtoreset{equation}{section}
\def\theequation{E\arabic{equation}}}
\def\seceqff{\@addtoreset{equation}{section}
\def\theequation{F\arabic{equation}}}
\def\seceqgg{\@addtoreset{equation}{section}
\def\theequation{G\arabic{equation}}}
\def\seceqhh{\@addtoreset{equation}{section}
\def\theequation{H\arabic{equation}}}
\def\seceqii{\@addtoreset{equation}{section}
\def\theequation{I\arabic{equation}}}
\def\seceqjj{\@addtoreset{equation}{section}
\def\theequation{J\arabic{equation}}}
\catcode`@=12
\newcommand{\rh}{r_h}
\newcommand{\anu}{\alpha_1}
\newcommand{\ade}{\alpha_2}
\newcommand{\ath}{\alpha_3}

\usepackage[colorlinks=true, linktocpage=true,
  linkcolor=black, citecolor=black, urlcolor=black]{hyperref}
\usepackage{url}
\newcommand{\logN}{\log N}
\newcommand{\gsNf}{g_s N_f}

\newcommand{\logrh}{\log r_h}
\newcommand{\logr}{\log r}
\begin{document}
\title{Contact Structure-induced Symplectic Nearly Half-Flat $SU(3)$ Structure in Finite-$N$ Thermal ${\cal M}$-Theory}
\author{Aalok Misra\\
Department of Physics, Indian Institute of Technology Roorkee \\
	~~Roorkee - 247667, Uttarakhand, India\\
	~~{\tt aalok.misra@ph.iitr.ac.in}}
\date{}
\maketitle
\noindent

\begin{abstract}
The ${\cal M}$-theory dual of thermal QCD-like theories constructed in~\cite{MQGP,OR4}
involves, as shown in~\cite{ACMS}, a closed $G_2$-structure seven-fold $\mathcal{M}_7$
supporting Contact 3-Structures~(C3S) that induce a ``transverse'' $SU(3)$-structure. We prove that this induced structure belongs to the symplectic nearly half-flat~(SHF)
class: $d\omega_\Phi=0$ exactly, while $d\Omega_+$ is suppressed as
$\mathcal{O}(e^{-N^{1/3}/\mathcal{O}(1)})$ in the infrared "MQGP limit",
leaving $W_\Phi^{SU(3)}=W_2^-$ as the sole non-vanishing $SU(3)$-structure
torsion class and completing the $G$-structure classification of the MQGP dual
initiated in~\cite{NPB,ACMS}.
As a physical application, Kaluza--Klein reduction of the eleven-dimensional Hodge dual of the differential of the $G_2$-structure positive three-form on $\mathcal{M}_7$, organised through the $G_2$/SHF torsion data, yields a radial
``magnetic field'' $\mathcal{B}_{\rm KK}(r)$ whose squared norm factorises exactly into external and internal factors.  $\mathcal{B}_{\rm KK}(r)$ satisfies a Bessel-type ODE with a regular singular point at the horizon; normalisability and infrared regularity select $J_2$, with harmonicity of the aforementioned internal factor recovered from the large-$N$ suppression. $\mathcal{B}_{\rm KK}$ is related to the magnetic field of the type IIA flavour $D6$-branes via a non-linear power law, furnishing a top-down holographic mechanism for generating weak magnetic fields in finite-temperature string theory.
\end{abstract}

\textbf{2020 Mathematics Subject Classification.}
Primary: 81T30 (string and superstring theories), 53C10 ($G$-Structures), 53C15 (General Geometric Structures).
Secondary: 83E15 (Kaluza--Klein and other geometric theories),  34B30 (special ordinary differential equations).

\newpage
\tableofcontents

\section{Introduction}

The gauge/gravity duality, first proposed by Maldacena in 1997~\cite{Maldacena:1997re} and
subsequently elaborated by Gubser, Klebanov and Polyakov~\cite{Gubser:1998bc} and by
Witten~\cite{Witten:1998zw}, has provided the most powerful non-perturbative handle on
strongly coupled quantum field theories available within string theory. In its canonical form
the correspondence equates type IIB superstring theory on $\mathrm{AdS}_5\times S^5$ with
$\mathcal{N}=4$ supersymmetric Yang--Mills theory in the planar limit, and it has since been
generalized to a broad class of non-conformal, flavoured and finite-temperature backgrounds.
Of particular phenomenological urgency is the holographic description of Quantum Chromodynamics
(QCD) at finite temperature, motivated by the experimental discovery at the Relativistic
Heavy-Ion Collider (RHIC) and at the Large Hadron Collider (LHC) that the deconfined QCD
plasma formed in heavy-ion collisions behaves, in the temperature window $T_c<T\lesssim 2T_c$,
as a strongly coupled, near-perfect liquid --- the so-called strongly coupled Quark--Gluon
Plasma (sQGP)~\cite{Shuryak:2006se,Muller:2008zz} --- with a shear-viscosity-to-entropy-density
ratio $\eta/s$ close to the holographic lower bound $1/4\pi$~\cite{Kovtun:2004de}.

A central challenge in holographic QCD is the construction of a top-down gravity dual that
faithfully captures the dynamics of thermal QCD-like theories --- UV-conformal, IR-confining
and populated with fundamental-representation quarks --- at finite gauge coupling. Standard
AdS/CFT computations operate in the limit $g_s\to 0$ with the 't~Hooft coupling
$\lambda\equiv g_s N_c\to\infty$, suppressing both string-loop and $\alpha'$ corrections. In
the sQGP, however, the gauge coupling is finite; a holographic treatment faithful to this
regime therefore requires an ${\cal M}$-theory perspective in which $g_s\sim\mathcal{O}(1)$. The UV-complete type IIB holographic dual of large-$N$ thermal QCD-like theories
was constructed in \cite{Mia:2009wj}, incorporating $N_f$ flavor
D7-branes via the Ouyang embedding~\cite{Ouyang} on a fluxed resolved warped deformed conifold.
The Klebanov--Strassler (KS) solution~\cite{Klebanov:2000hb}, dual to the cascading
$SU(N+M)\times SU(N)$ gauge theory that confines in the IR after a sequence of Seiberg-duality
transformations, provides the infrared backbone of this construction. The delocalized
Strominger--Yau--Zaslow (SYZ) mirror~\cite{Strominger:1996it} of this type IIB background, obtained via three T-dualities along a special Lagrangian $T^3$ fibred over a large base, was
carried out by the author and collaborators~\cite{MQGP,OR4}, yielding a type IIA background
subsequently uplifted to eleven-dimensional ${\cal M}$-theory.

The ${\cal M}$-theory uplift is valid in the intermediate-$N$/coupling MQGP limit~\cite{MQGP}, \cite{ACMS} --- a combined
intermediate-string-coupling limit defined by $g_s\sim\frac{1}{{\cal O}(1)}$,
$N>1$ and $N_f, M\sim\mathcal{O}(1): g_s M^2<1, g_s N_f<1, \frac{\left(g_sM^2\right)^{m_1}\left(g_s N_f\right)^{m_2}}{N}<1, m_{1,2}\in\mathbb{Z}_+$, with $N_c$ given by the sum of the effective color $D3$-branes and the fractional $D3$-branes ($D5$-branes wrapping a vanising $S^2$), reducing to $3$ in the infrared after the end of the Seiberg-duality cascade. In this limit all flavour $D6$-branes of the type IIA SYZ mirror are uplifted to ${\cal M}$-theoretic Kaluza--Klein (KK) monopoles, geometrizing the matter content entirely into fluxes and geometry, and one obtains ${\cal M}$-theory on a seven-manifold $\mathcal{M}_7$ equipped with a
$G_2$~structure. This constitutes the first genuinely finite-coupling, top-down holographic
framework for sQGP phenomenology. Several key results have been derived within this
framework: NLO-in-$N$ corrections to shear viscosity and transport coefficients~\cite{Sil:2016jet};
glueball spectroscopy at finite gauge coupling~\cite{Sil:2019vbo};
vector and scalar meson spectra in close agreement with Particle Data Group
values~\cite{Misra:2017iig}; and $\mathcal{O}(l_p^6)$ corrections to the MQGP background
arising from $\mathcal{O}(R^4)$ terms in eleven-dimensional
supergravity ~\cite{OR4}. A systematic $G$-structure torsion-class classification of the
underlying non-supersymmetric geometries, particularly in the infrared and non-conformal sector,
was initiated in~\cite{NPB} and extended in~\cite{ACMS}.

The $G$-structure classification of non-supersymmetric string/${\cal M}$-theory backgrounds is
mathematically foundational. For supersymmetric compactifications, preserved spinors enforce
torsion conditions encoded in the $G$-structure classes. For a six-manifold with SU(3)
structure, the intrinsic torsion decomposes into five classes $W_1,W_2,W_3,W_4,W_5$,
catalogued by Chiossi and Salamon~\cite{Chiossi:2002}; this decomposition was related to
$\mathcal{N}=1$ supersymmetry conditions in non-K\"{a}hler string backgrounds by Cardoso
{\it et al.}~\cite{Cardoso:2002hd}, with a comprehensive review of flux compactifications
and $G$-structures provided in~\cite{Grana:2005jc}. The vanishing or non-vanishing of subsets
of these classes determines whether the geometry is Calabi--Yau (all vanish), complex
non-K\"{a}hler ($W_3\neq 0$), nearly K\"{a}hler ($W_1$ non-vanishing only), half-flat
($d\,\mathrm{Re}\,\Omega=0$, $d(J\wedge J)=0$), or symplectic half-flat ($dJ=0$,
$d\,\mathrm{Re}\,\Omega=0$). Each class carries distinct consequences for supersymmetry breaking, moduli stabilization and the spectrum of four-dimensional fields obtained by
Kaluza--Klein reduction. For non-supersymmetric thermal backgrounds the torsion classes are
generically non-vanishing; determining them is therefore essential for the
geometric consistency of the holographic vacuum and for extracting reliable physical predictions.

Manifolds of $G_2$ holonomy, and more generally $G_2$-structure seven-manifolds, are the
natural internal spaces for ${\cal M}$-theory compactifications to four dimensions~\cite{Acharya:2004qe,
Papadopoulos:1995da}. A torsion-free $G_2$~structure, defined by a closed and co-closed
associative three-form $\varphi$, yields a Ricci-flat internal space and a four-dimensional
$\mathcal{N}=1$ supersymmetric effective theory~\cite{Acharya:2004qe,Papadopoulos:1995da}.
The presence of background fluxes --- unavoidable in the thermal QCD duals under consideration
--- promotes the holonomy to a $G_2$~structure with intrinsic torsion; ${\cal M}$-theory
compactifications on such manifolds with $G_2$~structure have been studied
in~\cite{Lukas:2004ip,House:2004pm,Dall'Agata:2005ff}. The ${\cal M}$-theory seven-fold $\mathcal{M}_7$
of the MQGP construction was shown in~\cite{ACMS} to be a closed seven-fold supporting a
$G_2$~structure that induces Contact 3-Structures (C3S). Concretely, at a fixed conifold radial coordinate, $\mathcal{M}_7$ is the ${\cal M}$-theory circle $S^1_{\mathcal{M}}$ times a non-K\"{a}hler six-fold
$\mathcal{M}_6$, which is itself a thermal-circle fibration over a non-Einsteinian deformation
of the Sasaki--Einstein $T^{1,1}$. Reference~\cite{ACMS}
derived explicit Almost Contact (3) Metric Structures and a three-tuple of transverse SU(3)
structures induced on $\mathcal{M}_6$, and furthermore established a genuine Contact Structure
on $\mathcal{M}_7$ --- a stronger datum than an almost contact structure, relevant in the
context of~\cite{Arikan:2011zm,Arikan:2012gt,Todd:2015apo,Sasaki,Kuo-AC3S,3-Sasakian-geometry,
Friedrich:1997} --- with non-trivial topological consequences for the fibration. This geometric
infrastructure provides the starting point for the present analysis.

The symplectic half-flat (SHF) SU(3) structure class~\cite{Chiossi:2002,Conti:2006} are defined
by $dJ=0$ and $d\,\mathrm{Re}\,\Omega=0$. SHF manifolds, via the Hitchin flow~\cite{Hitchin:2001rw}, can be evolved over a transverse direction to
construct a seven-manifold with $G_2$ holonomy, and conversely the transverse geometry of a
$G_2$-structure manifold fibred over a line generically inherits a half-flat SU(3)
structure~\cite{Chiossi:2002}. Symplectic half-flat structures arise as type IIA supersymmetric
compactification backgrounds~\cite{Conti:2006}, appear as SYZ mirror partners of half-flat
manifolds, and have been fully classified in the non-compact homogeneous
setting~\cite{Podesta:2017}, where the Ricci tensor is shown to be Hermitian with respect to
the induced almost complex structure in every example. The mathematical analysis of these
structures connects to the broader context of $G_2$-structure moduli studied
in~\cite{Ossa et al[2013]}. In the Chiossi--Salamon decomposition~\cite{Chiossi:2002}, the SHF
condition restricts the torsion to the sub-classes $W_1^+$ and $W_2^+$ (with imaginary parts
forced to vanish), so that the geometry is simultaneously symplectic ($dJ=0$) and satisfies the
real part of the half-flat condition. 

A further Physics motivation for the present work comes from the phenomenology of magnetic fields in heavy-ion collisions. Peripheral collisions at RHIC and LHC generate transient magnetic fields
of unprecedented magnitude --- $|eB|\sim m_\pi^2\sim 10^{18}$~Gauss during the early collision
stage~\cite{Kharzeev:2007jp} --- that drive anomaly-induced transport phenomena. 
Holographic models of magnetized QCD plasma have been studied in frameworks using the Sakai--Sugimoto model~\cite{SS},~\cite{Rebhan:2010bd};
a derivation of the relevant magnetic field from a fully consistent top-down ${\cal M}$-theory
compactification --- respecting all eleven-dimensional supergravity equations of motion  --- has been absent. The KK reduction of the MQGP ${\cal M}$-theory dual on
$\mathcal{M}_7$ provides a systematic mechanism to generate this field from purely geometric
data, controlled by the $G_2$~torsion of $\mathcal{M}_7$ and the SHF structure of the
transverse $\mathcal{M}_6$.

The present paper establishes two main results. First, we show that the transverse SU(3)
structure  induced from the Contact 3-Structures of $\mathcal{M}_7$, as
constructed in~\cite{ACMS}, is symplectic half-flat up to exponential-in-$N$-suppressed corrections in the MQGP limit (\ref{eq:MQGP}) and in the infrared. The deviation from exact SHF is
governed by the ratio $e^{-N^{1/3}/\kappa_{r_h}}, \kappa_{r_h}\sim{\cal O}(1)$  --- so that the geometry is precisely termed \emph{nearly} symplectic
half-flat. This torsion-class identification completes the $G$-structure classification program
for the MQGP ${\cal M}$-theory dual initiated in~\cite{NPB,ACMS}, and reveals that the
$G_2$-structure seven-fold $\mathcal{M}_7$ encodes, through its Contact 3-Structure, a
distinguished and physically motivated SHF geometry on the ``transverse'' $SU(3)$. Second, as a
concrete application, we derive a weak magnetic field from the KK reduction of the
eleven-dimensional dual compactified on $\mathcal{M}_7$, demonstrating explicitly how the SHF
torsion structure governs the flux decomposition and the residual four-dimensional
electromagnetic content, thereby providing a top-down holographic mechanism for the generation
of magnetic fields relevant to sQGP phenomenology. The remainder of the paper is organized as
follows: Section~2 reviews the MQGP setup and $G_2$~geometry of $\mathcal{M}_7$. Section~3 briefly reviews Contact 3-Structures from a $G_2$-structure seven-fold. Section~4
establishes the SHF character of the transverse SU(3) structure. Section~5 presents the KK
reduction and magnetic field derivation. Section~6 concludes. There are two appendices - appendix \ref{Omegas} lists out the non-vanishing generalized spin connection components (quoting from \cite{ACMS}) and appendix \ref{HomG27270} derives ${\rm Hom}_{G_2}({\bf 27}, {\bf 7})=0$.

These results may be consolidated into a single statement, which constitutes the main
theorem of this paper; in what follows $r/r_h$ represents the dimensionless $\frac{r/r_h}{{\cal R}_{D5/\overline{D5}}}, {\cal R}_{D5/\overline{D5}}$ being the $D5-\overline{D5}$ separation of the parent type IIB dual \cite{metrics} of thermal QCD-like theories (the equivalence class of theories that are UV-conformal, IR-confining and have quarks in the fundamental representation of "color" and "flavor"). It combines the torsion-class identification of
Proposition~\ref{proposition} with the chain of results
Propositions~\ref{prop:normfact}, \ref{prop:ODE}, governing the
descended weak magnetic field.

\begin{theorem}[Main result]
\label{thm:mainresult}
Let $\mathcal{M}_7$ be the $G_2$-structure seven-fold that at a fixed Infra-Red valued (i.e. small) conifold radial coordinate is a product of the ${\cal M}$-theory circle $S^1_{\mathcal{M}}$ and a non-K\"{a}hler six-fold which is itself a thermal-circle fibration over a non-Einsteinian deformation of the Sasaki--Einstein $T^{1,1}$, in the context of ${\cal M}$-theory dual of thermal QCD-like theories carrying the Contact 3-Structures of~\cite{ACMS} in the intermediate-$N$ MQGP limit (\ref{eq:MQGP}), and let $(\omega_\Phi,\Omega)$ be the transverse $SU(3)$-structure it induces via
\eqref{Transverse-conditions}. Throughout, $\rh$ denotes the
horizon radius, $g_s$ the  string coupling, $N$ the number of color $D3$-branes, $M$ ($N_f$) the number of fractional $D3$-(flavor $D7$-) branes of the parent type IIB dual of \cite{Mia:2009wj}, and $\kappa_{r_h}=\bigl[3(6\pi)^{1/3}(g_sN_f)^{2/3}(g_sM^2)^{1/3}\bigr]^{-1}$
the $\mathcal{O}(1)$ constant in $|\log\rh|=\kappa_{r_h}N^{1/3}$ \cite{IITR-McGill-bulk-viscosity}. Then:
\begin{enumerate}
\item[\textup{(i)}] \emph{(Symplectic nearly half-flat geometry.)} In the IR and in the MQGP
limit the transverse $SU(3)$-structure is symplectic, $d\omega_\Phi = 0$, and the holomorphic
$(3,0)$-form is closed up to exponentially large-$N$-suppressed terms,
\begin{equation}
d\Omega_+\Big|_{r\in\mathrm{IR}}
= \kappa_{r_h}^{3}\,g_s^{1/12} M N_f^{4/3}\,
\frac{e^{-\frac{\kappa_{r_h}}{2}N^{1/3}}\log^2 N}{|\log\rh|^{5/3}}\,
dt\wedge e^{245}
= \mathcal{O}\!\left(e^{-N^{1/3}/\mathcal{O}(1)}\right),
\label{eq:thm-dOmega}
\end{equation}
so that, dropping the $e^{-\kappa_{r_h}N^{1/3}}$-suppressed terms, the only non-vanishing
intrinsic torsion class is
\begin{equation}
W_\Phi^{SU(3)} = W_2^-;
\label{eq:thm-Ws}
\end{equation}
equivalently $(\omega_\Phi,\Omega)$ is symplectic and nearly half-flat, the deviation from
exact half-flatness being exponentially suppressed in $N^{1/3}$.

\item[\textup{(ii)}] \emph{(magnetic field from KK reduction and its norm.)} The eleven-dimensional Hodge dual of the differential of the $G_2$-structure positive three-form, under KK reduction on $\mathcal{M}_7$ through the $G_2$/symplectic
half-flat $SU(3)$ structure of \textup{(i)}, generates a weak four-dimensional ``magnetic field'' ${\cal B}_{\rm KK}$* whose norm squared factorizes, into an external and an internal factor,
\begin{equation}
|\mathcal{M}|^2_{11d}
= |F_2\wedge\omega_1|^2_{g_{\mu\nu},\grr}\cdot
|\ta_1\wedge\Om_+|^2_{g_{mn}},
\label{eq:thm-normfact}
\end{equation}
with
\begin{equation}
|F_2\wedge\omega_1|^2
= \frac{|\omega_1|^2_{\gthree}}{\grr}(x^1)^2\bigl[(\partial_r\Nf)^2\bigr]
+ (\Nf)^2\mathcal{N}_\omega,
\qquad
\mathcal{N}_\omega = \left(g^{x^1x^1}\right)^2\bigl(|\omega_1|^2_{\gthree}
-\omega_\mu\omega_\nu g^{\mu x^1}g^{\nu x^1}g_{x^1x^1}\bigr),
\label{eq:thm-extnorm}
\end{equation}
and ${\cal B}_{\rm KK}$ given by (\ref{KK-Hodge dPhi}) - (\ref{B-M-theory-2}),
\begin{equation}
\Ical \equiv |\ta_1\wedge\Om_+|^2_{g_{mn}}
= |\ta_1|^2_{g_{mn}}|\Om_+|^2_{g_{mn}}
- \tfrac{1}{2}(\ta_1)^m(\ta_1)^{m'}(\Om_+)_{mnp}(\Om_+)_{m'}{}^{np}.
\label{eq:thm-I7def}
\end{equation}

\item[\textup{(iii)}] \emph{(Bessel-type radial equation.)} The radial profile $\mathcal{B}_{\rm KK}(r)$
of the descended field obeys the Bessel-type ordinary differential equation with a regular
singular point at the horizon $r=\rh$,
\begin{equation}
\anu(r-\rh)^2\mathcal{B}_{\rm KK}'(r)
+ \ade(r-\rh)^3\mathcal{B}_{\rm KK}''(r)
+ \ath\,\mathcal{B}_{\rm KK}(r)(r-\rh)^2 = 0,
\label{eq:thm-NeffEOM}
\end{equation}
where the real constants $\anu,\ade,\ath$ encode the integrated metric and torsion data of the
background. Its general (normalizable at the horizon) solution is given by: 
\begin{equation}
\mathcal{B}_{\rm KK}(r) = \mathcal{B}_{\rm KK}(r=r_h)
\frac{J_2\!\left(2\sqrt{\frac{\ath}{\ade}(r-\rh)}\right)}{r - r_h},
\quad r > \rh,
\label{eq:thm-solution}
\end{equation}
where one is free to choose $\mathcal{B}_{\rm KK}(r=r_h)$.
\end{enumerate}
\end{theorem}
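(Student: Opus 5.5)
The theorem consolidates Proposition~\ref{proposition}, Proposition~\ref{prop:normfact} and Proposition~\ref{prop:ODE}, so the plan is to prove (i)--(iii) in that order, each from the explicit MQGP metric and the Contact 3-Structure data of~\cite{ACMS}.

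\textbf{Part (i).} I would start from the transverse conditions \eqref{Transverse-conditions}, which fix $\omega_\Phi=\iota_{\xi}\Phi$ and $\Omega_+ = \Phi-\eta\wedge\omega_\Phi$, where $\eta$ is the contact one-form and $\xi$ its Reeb field. Closure of $\omega_\Phi$ would follow from two facts: the $G_2$ form is closed, and $\xi$ is a symmetry of $\Phi$. Together these give $d\iota_\xi\Phi = \mathcal{L}_\xi\Phi = 0$. To check $\mathcal{L}_\xi\Phi=0$ I would use the isometries of the deformed $T^{1,1}$ and the thermal/M-theory circles. For $\Omega_+$ I would use
\[
d\Omega_+ = -\,d\eta\wedge\omega_\Phi,
\]
and expand $d\eta$ in the frame $e^a$ using the generalized spin connection components of appendix~\ref{Omegas}. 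The surviving piece is proportional to $dt\wedge e^{245}$. Its coefficient depends on $r_h$ through warp factors such as $h(r_h)^{-1/4}$ and $r_h^{3/2}$. Substituting $r_h=e^{-\kappa_{r_h}N^{1/3}}$ and keeping the leading term in the MQGP limit should give \eqref{eq:thm-dOmega}. Finally I would read off the torsion classes with the Chiossi--Salamon decomposition:
\begin{itemize}
\item $dJ=0$ kills $W_1,W_3,W_4$;
\item dropping $d\Omega_+$ kills $W_1^+,W_2^+$ and the real part of $W_5$;
\item $\mathrm{Im}\,W_5$ should vanish because $d\Omega_-$ has no $(3,1)$ part in this frame;
\item this leaves $W_2^-$, which is nonzero because $d\Omega_-$ has a primitive $(2,2)$ component.
\end{itemize}
\textbf{Main obstacle.} The $r_h$-bookkeeping in $d\eta$ is where I expect the most difficulty. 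Powers of $\log r_h$ and $\log N$ must combine to the $|\log r_h|^{-5/3}$ and $\log^2N$ structure. This is the step I would verify term by term against the metric components.

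\textbf{Part (ii).} I would decompose $d\Phi$ using $d\Phi=\tau_0\star\Phi+3\tau_1\wedge\Phi+\star\tau_3$. Appendix~\ref{HomG27270} gives $\mathrm{Hom}_{G_2}(\mathbf{27},\mathbf{7})=0$, which forbids $\tau_3$ from mixing into the $\mathbf 7$ channel. With (i), the relevant term of $\star_{11}d\Phi$ is then $F_2\wedge\omega_1\wedge\tau_1\wedge\Omega_+$, with $F_2 = d(x^1\mathcal{B}_{\rm KK}(r))\wedge dx^1$-type. Because the 11d metric is block-diagonal between $(x^\mu,r)$ and $\mathcal M_7$, the norm of a wedge of forms on the two orthogonal factors factorizes, which gives \eqref{eq:thm-normfact}. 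The external factor \eqref{eq:thm-extnorm} is a direct contraction with $g^{x^1x^1}$ and $g^{rr}$. The internal factor \eqref{eq:thm-I7def} is the standard identity $|\alpha\wedge\beta|^2=|\alpha|^2|\beta|^2-\tfrac{1}{p!}\ldots$ for a one-form wedged with a three-form.

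\textbf{Part (iii).} I would insert the factorized norm into the effective action $\int\sqrt{-g}\,|\mathcal M|^2$. The next step is to integrate over $\mathcal M_7$ and the transverse directions, absorbing these integrals into $\alpha_{1,2,3}$. Near-horizon expansion of $g_{rr}\sim (r-r_h)^{-1}$ gives the Euler--Lagrange equation \eqref{eq:thm-NeffEOM}. After dividing by $(r-r_h)^2$ and setting $x=r-r_h$, it has the form $\alpha_2 xB''+\alpha_1B'+\alpha_3B=0$. The substitution $B=x^{-\nu/2}f(2\sqrt{\alpha_3x/\alpha_2})$ reduces this to Bessel's equation of order $\nu=\alpha_1/\alpha_2-1$. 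I expect the integrated data to give $\nu=2$; this is the second check. Normalizability at $x=0$ then discards $Y_2$. Since $J_2(z)\sim z^2/8$, the ratio $J_2/(r-r_h)$ is finite at the horizon, which fixes the normalization to $\mathcal{B}_{\rm KK}(r_h)$ up to a constant absorbed into that free parameter.
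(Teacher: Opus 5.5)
\textbf{Part (i) rests on a false premise.} You assume the $G_2$ three-form is closed. Here it is not: the structure has $\tau=\tau_1\oplus\tau_2\oplus\tau_3$ with all three classes non-zero, and $d\Phi$ is computed explicitly in the IR. (``Closed seven-fold'' means compact without boundary, not $d\Phi=0$.)

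This breaks your argument for $d\omega_\Phi=0$. Cartan's formula $d\iota_\xi\Phi=\mathcal{L}_\xi\Phi-\iota_\xi d\Phi$ leaves an $\iota_\xi d\Phi$ term you cannot discard. Nor do you have $\mathcal{L}_\xi\Phi=0$: the C3S Reeb field of Lemma~4 has components assembled from $G^{\cal M}$ and vielbein data, and nothing makes it a symmetry of $\Phi$. The paper's reason is elementary. For the contact structure it takes $\omega_\Phi=d\sigma$, so $d\omega_\Phi=d^2\sigma=0$ identically.

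More seriously, your formula $d\Omega_+=-d\eta\wedge\omega_\Phi$ drops the very term that produces \eqref{eq:thm-dOmega}. The correct identity is $d\Omega_+=d\Phi-d\sigma\wedge\omega_\Phi+\sigma\wedge d\omega_\Phi=d\Phi-\omega_\Phi^2$.
\begin{itemize}
\item In the IR, $d\Phi\approx\Omega^3_{24}\,e^{2415}$ supplies the $dt\wedge e^{245}$ term, with the stated $\kappa_{r_h}$, $g_s$, $N_f$, $\log^2N$ and $e^{-\kappa_{r_h}N^{1/3}/2}$ dependence.
\item By contrast, $\omega_\Phi^2$ is dominated by its $e^{2146}$ and $e^{2134}$ components. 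These have different legs and are further suppressed ($N^{-23/26}$ against $N^{-5/9}$, and a factor $10^{-2}$).
\end{itemize}
So your route yields the wrong legs and the wrong coefficient. Note also that $d\Phi=0$ would force $\tau_1=0$, which kills the internal factor $|\tau_1\wedge\Omega_+|^2$ of your own part (ii). Your separate argument for $\mathrm{Im}\,W_5$ is unnecessary: $W_5=\tfrac12\Omega_+\lrcorner d\Omega_+$ disappears entirely once $d\Omega_+$ is dropped.

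\textbf{Part (ii)} matches the paper apart from one slip in $F_2$. The potential is $A=x^1\mathcal{B}_{\rm KK}(r)\,dx^2$, so $F_2=\mathcal{B}_{\rm KK}\,dx^1\wedge dx^2+x^1\partial_r\mathcal{B}_{\rm KK}\,dr\wedge dx^2$. Writing $\wedge\, dx^1$ instead loses the component that generates the mass term $(\mathcal{B}_{\rm KK})^2\mathcal{N}_\omega$.

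\textbf{Part (iii): $\nu=2$ does not follow from $g_{rr}$ alone.} You leave $\nu=2$ as an expectation and attribute the ODE to a near-horizon expansion of $g_{rr}$. With $g_{rr}\sim(r-\rh)^{-1}$ alone you would get a different index; for instance $\nu=-\tfrac12$ if $\Ical$ and $g_7$ were regular and non-zero at $\rh$. What fixes the index is the horizon behaviour of the $M_7$-integrated coefficients:
\begin{itemize}
\item $\Ical\propto(r-\rh)^2$, i.e.\ the internal torsion factor vanishes quadratically at the horizon;
\item $\sqrt{g_7}\propto(r-\rh)^{1/2}$ and $\grr\propto(r-\rh)^{-1}$.
\end{itemize}
Together these give $\sqrt{\gthree}\sqrt{\grr}\,|\omega_1|^2\mathcal{K}_{rr}\simeq-\ade(r-\rh)^3$ and $\sqrt{\gthree}\sqrt{\grr}\,\mathcal{K}_m\simeq\ath(r-\rh)^2$. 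The product rule in $\partial_r\bigl(-\ade(r-\rh)^3\mathcal{B}_{\rm KK}'\bigr)$ then forces $\anu=3\ade$ identically, hence $\nu=\anu/\ade-1=2$.

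Likewise, the sign $\ath/\ade>0$, which selects $J_2,Y_2$ over $I_2,K_2$, must be read off from the explicit integrated coefficients rather than assumed. The rest of your Bessel reduction, the exclusion of $Y_2$, and your treatment of the normalisation constant agree with the paper.
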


\noindent Part~\textup{(i)} is proved by Proposition~\ref{proposition}
(Lemmas~\ref{LemmaSymplecticSU3}--\ref{HalfFlatSymplecticSU3}), the exponential
$e^{-N^{1/3}/\mathcal{O}(1)}$-suppression of $d\Omega_+$ being established in
\eqref{eq:dOmegaIR}, and parts~\textup{(ii)}--\textup{(iv)} are assembled from
Propositions~\ref{prop:normfact} and  \ref{prop:ODE}.

\noindent * The reason for referring to ${\cal B}_{\rm KK}(r)$ as magnetic, as discussed in (\ref{IR-EOM-BD6}) - (\ref{BD6-UV}), is summarized as follows. Consider turning on an $r$-dependent field ${\cal B}_{D6}(r): F_{x^1x^2} = \partial_{[x^1}A_{x^2]}, A_{x^1} = x^2 {\cal B}_{D6}(r)/2, A_{x^2} = - x^1{\cal B}_{D6}(r)/2$, on the world-volume $\Sigma^{(1+6)}\left(\cong S^1_t\times_w\mathbb{R}^3\right)\times_w S^2_{\rm squashed}$ of the dual type IIA flavor $D6$-branes results in the DBI action $\int_{\Sigma^{(1+6)}}e^{-\Phi^{IIA}}\sqrt{i^*g^{IIA} + F}, i:\Sigma^{(1+6)}\hookrightarrow M_{11}$. As is shown in Proposition \ref{prop:ODE} and the aforementioned equations,  ${\cal B}_{D6}(r\in{\rm UV})\sim\frac{1}{r - r_h}$ and  ${\cal B}_{KK}(r\in{\rm UV})\sim\frac{1}{(r - r_h)^{5/4}}\sim{\cal B}_{D6}^{5/4}(r\in{\rm UV})$; we further identity  ${\cal B}_{\rm KK}(r_h) = {\cal B}_{D6}^{5/4}(r=r_h)$.

\section{${\cal M}$-theory dual of thermal QCD-like theories and its geometric structures}

\subsection{From type IIB brane dynamics to ${\cal M}$-theory at intermediate coupling}

The program of constructing a top-down holographic dual of large-$N$ thermal QCD-like
theories --- by which we mean the equivalence class of theories that are UV-conformal,
IR-confining, and contain quarks in the fundamental representation of the "color" and "flavor" groups ---
at intermediate `t Hooft coupling requires working beyond classical supergravity. The reason
is conceptually straightforward: finite-coupling corrections on the gauge theory side translate
holographically into a regime where the relevant gravity description is no longer weakly coupled
type IIB supergravity but rather its non-perturbative, eleven-dimensional completion, ${\cal M}$-theory.
Specifically, genuine intermediate coupling necessitates the finite-$N$ limit:
$g_s^{-1} \sim \mathcal{O}(1)$--$\mathcal{O}(10)$ while keeping $M, N_f \sim \mathcal{O}(1)$
and $N > 1$. The condition
\begin{equation}
\hskip -0.4in g_s=\frac{1}{{\cal O}(1)},\  M, N_f = {\cal O}(1),\ g_sM^2 < 1,\  g_s N_f < 1,\ N>1,\  \frac{(g_s M^2)^{m_1}(g_s N_f)^{m_2}}{N} <1,
\quad m_{1,2} \in \mathbb{Z}^+ \cup \{0\},
\label{eq:MQGP}
\end{equation}
defines what has been termed the MQGP (M-theoretic Quark-Gluon Plasma) limit
\cite{MQGP,NPB, ACMS}. This is the operational regime for studying holographic
QGP physics at finite gauge coupling from a controlled ${\cal M}$-theory perspective.

The parent type IIB configuration consists of $N$ color D3-branes, $M$ fractional
(anti)-$D3$-brane pairs wrapping the vanishing $S^2$ of a resolved conifold, and $N_f$ flavour
(anti-)$D7$-branes embedded via the Ouyang embedding \cite{Ouyang}
\begin{equation}
\left(r^6 + 9a^2 r^4\right)^{1/4}
e^{\frac{i}{2}(\psi - \phi_1 - \phi_2)}
\sin\frac{\theta_1}{2}\sin\frac{\theta_2}{2} = \mu,
\quad |\mu| \ll r^{3/2}.
\label{eq:Ouyang}
\end{equation}
The Seiberg-like duality cascade that governs the RG flow progressively depletes the number
of effective color $D3$-branes from $N$ in the UV to $M$ in the deep IR, so that the
strongly-coupled IR gauge theory is effectively $SU(M)$. With the physically motivated
choices $M = 3$, $N_f = 2$ or $3$, and $g_s \approx 0.1$ --- corresponding, respectively,
to the number of IR color degrees of freedom in QCD, the number of light quark flavours,
and the QCD fine-structure constant at the electroweak scale --- one obtains
$N \approx 100 \pm \mathcal{O}(1)$ as the intermediate value of the color brane number at
which novel geometric structures emerge, a fact that  proved central to the Contact
3-Structure analysis of \cite{ACMS}.

To construct the ${\cal M}$-theory uplift, one first implements the Strominger--Yau--Zaslow (SYZ)
mirror of the type IIB geometry by performing a triple T-duality along three toroidal
isometry directions $(\phi_1, \phi_2, \psi)$ \cite{NPB,MQGP}.
Sequentially: T-duality along $\psi$ converts the $N$ $D3$-branes into $D4$-branes wrapping
that direction, with $M$ $D4$-branes straddling a pair of orthogonal NS5-branes; subsequent
T-dualities along $\phi_1$ and $\phi_2$ generate a pair of Taub-NUT geometries and convert
the D7-branes into D6-branes. Lifting the resulting type IIA configuration to eleven
dimensions eliminates all explicit brane sources --- the Taub-NUT spaces become
Kaluza-Klein monopoles in ${\cal M}$-theory --- so that the final ${\cal M}$-theory background is a purely
geometric flux compactification. The eleven-dimensional metric takes the form
\begin{equation}
ds^2_{11} = e^{-\frac{2\phi_{\rm IIA}}{3}}\!\left[
  \frac{1}{\sqrt{h(r,\theta_{1,2})}}\!\left(-g(r)\,dt^2 + d\vec{x}^{\,2}\right)
  + \sqrt{h}\left(\frac{dr^2}{g(r)} + ds^2_{\rm IIA}\right)\right]
  + e^{\frac{4\phi_{\rm IIA}}{3}}\!\left(dx^{11} + \mathcal{A}\right)^2,
\label{eq:11dmetric}
\end{equation}
where $g(r) = 1 - r_h^4/r^4$ is the blackening factor, $h(r,\theta_{1,2})$ is the warp
factor incorporating flux back-reaction, and $\mathcal{A}$ is the ${\cal M}$-theory circle one-form
generated from the type IIB RR fluxes under SYZ. Crucially, the ${\cal M}$-theory uplift is a
$G_2$-structure manifold with fluxes --- the correct geometric arena for the holographic
study of non-supersymmetric thermal QCD-like theories at intermediate coupling
\cite{House:2004pm,Dall'Agata:2005ff,Grana:2005jc}.

The eleven-dimensional supergravity action supplemented by the leading quantum-gravitational
$\mathcal{O}(l_p^6)$ corrections reads
\cite{Duff:1995wd,Green:1997di,Green:1997as,Russo:1997mk,Antoniadis:1997eg,Tseytlin:2000sf,Liu:2013dna}
\begin{align}
S_{D=11} &= \frac{1}{2\kappa_{11}^2}\Bigg[
  \int_{M_{11}}\!\sqrt{G}\,R
  + \int_{\partial M_{11}}\!\sqrt{h}\,K
  - \frac{1}{2}\int_{M_{11}}\!\sqrt{G}\,G_4^2
  - \frac{1}{6}\int_{M_{11}}\!C_3 \wedge G_4 \wedge G_4 \notag\\
&\quad + \frac{(4\pi\kappa_{11}^2)^{2/3}}{(2\pi)^4 \cdot 3^2 \cdot 2^{13}}
  \int_M \!d^{11}x\,\sqrt{G}\!\left(J_0 - \tfrac{1}{2}E_8\right)
  + 32\cdot 2^{13}\!\int\! C_3 \wedge X_8
  + \int t_8^2 G^2 R^3
  + \cdots\Bigg] - S_{\rm ct},
\label{eq:11daction}
\end{align}
where
\begin{align}
J_0 &= 3\cdot 2^8\!\left(
  R^{HMNK}R^P{}_{MNQ}R_H{}^{RSP}R^Q{}_{RSK}
  + \tfrac{1}{2}R^{HKMN}R^{PQ}{}_{MN}R_H{}^{RSP}R^Q{}_{RSK}
\right), \notag\\
E_8 &= \frac{1}{3!}\,\epsilon^{ABCM_1N_1\cdots M_4N_4}
       \epsilon_{ABCM'_1N'_1\cdots M'_4N'_4}
       R^{M'_1N'_1}{}_{M_1N_1}\cdots R^{M'_4N'_4}{}_{M_4N_4},
\end{align}
and $\kappa_{11}^2 = (2\pi)^8 l_p^9/2$, with $S_{\rm ct}$ the holographic counterterms
required for renormalisability at $\mathcal{O}(R^4)$ \cite{Gopal+Vikas+Aalok}. Near the Ouyang
embedding locus (small $\theta_{1,2}$), in the MQGP limit (\ref{eq:MQGP}) one establishes that
$\lim_{N\to\infty}E_8/J_0 = 0$ and $\lim_{N\to\infty}t_8 t_8 G^2 R^3/E_8 = 0$, so $E_8$
and $t_8^2 G^2 R^3$ contributions are suppressed and only $J_0$ survives at leading order
in $1/N$ \cite{OR4}. This truncation defines the operative
quartic-in-curvature ($\mathcal{O}(R^4)$) ${\cal M}$-theory effective action used throughout.

A metric ansatz $g_{MN} = g^{(0)}_{MN} + \beta\, g^{(1)}_{MN}$, with $\beta \sim l_p^6$,
and self-consistent truncation of the three-form potential correction, yields a well-posed
perturbative system whose solutions determine the $\mathcal{O}(R^4)$-corrected ${\cal M}$-theory
background \cite{OR4}. This truncation requires vanishing of (a linear combination of the) constants of integration appearing in the solutions to the EOMs for the ${\cal O}(R^4)$ corrections to the ${\cal M}$-theory metric components  precisely precisely along the non-compact four-cycle $\Sigma^{(4)}\cong \mathbb{R}_{\geq0}\times S^3_{\rm squashed}$ in the parent type IIB dual wrapped by the flavor $D7$-branes - referred to as "flavor memory" \cite{Gopal+Vikas+Aalok}; this has been implicitly used extensively to simplify the computations in this work. The IR-enhancement factors appearing in the metric
corrections --- schematically $\sim \beta\,(\log R_h)^m/R_h^n N^{\beta_N}$ with $m,n > 0$
and $\beta_N > 0$ --- compete against large-$N$ and Planckian suppression to determine the
domain of validity of the quartic truncation \cite{IITR-McGill-bulk-viscosity}. Choosing
$\beta \sim e^{-\gamma_\beta N^{\gamma_N}}$ with
$\gamma_\beta N^{\gamma_N} > 7\kappa_{r_h}N^{1/3}$ ensures that IR enhancement does not
overwhelm Planckian suppression, so one can consistently truncate at $\mathcal{O}(\beta)$.

This framework has enabled a sequence of precision agreements with QCD phenomenology: a
Hawking-Page deconfinement temperature compatible with lattice results
\cite{Misra:2020rlh,Gopal+Vikas+Aalok} --- with the remarkable feature that $T_c$ is not
renormalised by $\mathcal{O}(R^4)$ corrections, traceable via both a semiclassical
computation \cite{Witten:1998zw} and an entanglement entropy argument
\cite{Klebanov:2007ws,Vikas+Gopal+Aalok}; a conformal anomaly whose temperature variation
matches lattice data across the confinement-deconfinement transition \cite{Misra:2020rlh};
lattice-compatible shear-viscosity-to-entropy-density and bulk-viscosity-to-shear-viscosity
ratios \cite{bulk+gauge_KS+AM,IITR-McGill-bulk-viscosity,zeta-over-eta-int-coupling}; Particle Data Group-compatible
meson and glueball spectra \cite{Vikas+Gopal+Aalok,Sil:2019vbo}; and the
resolution of the long-standing mesino-meson isospectrality problem of the Sakai-Sugimoto
model \cite{SS} through the existence of QCD-compatible supermassive
inert mesinos \cite{Aalok+Gopal-Mesino}. The geometric underpinning of all these results is the
$G_2$-structure torsion class structure of the relevant seven-fold $M_7$.

\subsection{\texorpdfstring{$G_2$}{G2}-Structure, Almost Contact 3-Structures,
and the Road to Contact 3-Structures}

The seven-fold $M_7$ that enters the ${\cal M}$-theory uplift is the warped product
$S^1_M \times_w (S^1_t  \times_w M_5)$, where $M_5$ is a non-Einsteinian
deformation of the Sasaki-Einstein coset
$T^{1,1} = \bigl(SU(2)\times SU(2)\bigr)/U(1)$, and the thermal circle
$S^1_t $ represents the compactified Euclidean time direction at temperature
$T = \beta^{-1}$. This manifold carries a positive, stable co-associative three-form $\Phi$
and an explicit set of co-frames $\{e^a\}_{a=1}^{7}$, and accordingly admits a $G_2$
structure \cite{OR4,NPB}. Since $G_2 \subset SO(7)$, the adjoint of
$SO(7)$ decomposes under $G_2$ as $\mathbf{21} \to \mathbf{7} \oplus \mathbf{14}$, and the
torsion of the $G_2$ structure belongs to
\begin{equation}
\tau \;\in\; \Lambda^1 \otimes g_2^\perp
  = W_1 \oplus W_7 \oplus W_{14} \oplus W_{27}
  \;\equiv\; \tau_0 \oplus \tau_1 \oplus \tau_2 \oplus \tau_3.
\end{equation}
Working near the Ouyang embedding in the coordinate patch $\psi = 2n\pi$, $n = 0,1,2$, and
in the MQGP limit (\ref{eq:MQGP}), one establishes that $\tau_0 = 0$ while $\tau_{1,2,3}$ are all
non-vanishing, so
\begin{equation}
\tau(M_7) = \tau_1 \oplus \tau_2 \oplus \tau_3.
\end{equation}

The relevance of this $G_2$-structure torsion data to the generation of contact-geometric
structures on $M_7$ stems from a fundamental theorem: any closed seven-fold admitting a
$G_2$ structure necessarily supports at least three nowhere-vanishing orthonormal vector
fields, and thereby carries an Almost Contact Metric 3-Structure (ACM3S)
\cite{Arikan:2011zm,Arikan:2012gt,Todd:2015apo}. The key definitions are as follows. An
\emph{Almost Contact Structure} (ACS) on a $(2n+1)$-dimensional Riemannian manifold
$(Y,g)$ consists of an endomorphism $J\colon TY \to TY$, a unit Reeb vector field $R$, and
a one-form $\sigma$ satisfying \cite{Sasaki}
\begin{equation}
J^2 = -\mathrm{id} + R \otimes \sigma, \qquad \sigma(R) = 1;
\end{equation}
this reduces the structure group of $TY$ from $SO(2n+1)$ to $U(n)\times\{1\}$. The ACS is
promoted to an \emph{Almost Contact Metric Structure} (ACMS) when the metric compatibility
condition
\begin{equation}
g(Ju,Jv) = g(u,v) - \sigma(u)\sigma(v), \qquad \forall\;u,v \in \Gamma(TY),
\end{equation}
holds. This determines a fundamental two-form $\omega(u,v) = g(Ju,v)$ and a foliation
$\mathcal{F}_R$ of $Y$ by the integral curves of $R$, inducing a transverse metric
$ds^2_\perp$ via the orthogonal decomposition $ds^2 = \sigma^2 + ds^2_\perp$.

An \emph{Almost Contact 3-Structure} (AC3S) on $Y$ is a triple of mutually compatible
ACS's $\{(J^\alpha, R^\alpha, \sigma^\alpha)\}_{\alpha=1,2,3}$ satisfying the quaternionic
compatibility relations \cite{Kuo-AC3S}
\begin{equation}
J^\gamma = J^\alpha J^\beta - R^\alpha \otimes \sigma^\beta,
\qquad
R^\gamma = J^\alpha(R^\beta),
\qquad
\sigma^\gamma = \sigma^\alpha \circ J^\beta,
\label{eq:AC3S}
\end{equation}
for all cyclic permutations $(\alpha,\beta,\gamma)$ of $(1,2,3)$, together with
$\sigma^\alpha(R^\beta) = 0$ for $\alpha \neq \beta$. Dimensionally, $Y$ must be
$(4n+3)$-dimensional for an AC3S to exist, and the structure group further reduces to
$Sp(n)\times\mathbf{1}_3$. For a manifold with $G_2$ structure $\Phi$, the cross-product
$\times_\Phi$ provides a natural mechanism for constructing such triples: given any two
orthonormal Reeb fields $R^1, R^2$, the third is $R^3 = R^1 \times_\Phi R^2$, and the
compatibility relations \eqref{eq:AC3S} follow from the algebraic properties of the
associative three-form. An AC3S consisting of three contact structures satisfying
\eqref{eq:AC3S} defines a 3-Sasakian geometry \cite{3-Sasakian-geometry}.

An AC3S elevates to a \emph{Contact 3-Structure} (C3S) when each of the three contact
one-forms $\sigma^\alpha$ individually satisfies the non-degeneracy condition
\begin{equation}
\sigma^\alpha \wedge (d\sigma^\alpha)^n \neq 0
\qquad {\rm everywhere\ on }\ Y.
\end{equation}
This is a genuinely stronger requirement: almost contact structures are topologically
unobstructed on odd-dimensional manifolds (which all have vanishing Euler characteristic),
but the contact condition imposes differential constraints that depend sensitively on the
geometry of $Y$ and, in the present setting, on the parameter space
$(g_s, M, N_f;\, N)$.

For $M_7$, the AC3S in the large-$N$ MQGP limit (\ref{eq:MQGP}) is realised by the explicit triple
\cite{ACMS}
\begin{equation}
(R^1, R^2, R^3)
= \left(
    \sqrt{G^M_{x^0 x^0}}\,\partial_{x^0},\;
    \sqrt{G^M_{x^{10}x^{10}}}\,\partial_{x^{10}},\;
    -G^{\mu\nu}_M\,e^2_\nu\,\partial_\mu
  \right),
\label{eq:ReebTriple}
\end{equation}
with associated contact one-forms
\begin{equation}
(\sigma^1,\,\sigma^2,\,\sigma^3) = (e^1,\;e^7,\;-e^2)
\label{eq:ACMS_oneforms}
\end{equation}
and fundamental two-forms $\omega^\alpha_\Phi = d\sigma^\alpha = i_{R^\alpha}\Phi$. The
fluid-mechanical analogy is instructive \cite{Etnyre:2000,Kholodenko:2013}: the condition
$i_{R^\alpha}\omega^\alpha_\Phi = 0$ encodes a generalised fluid incompressibility condition
along each Reeb direction \cite{Etnyre:2000}, and the triple of AC3S one-forms mimics three
mutually orthogonal complex lamellar fields of the kind that arise in magnetohydrodynamics
and the physics of twisted grain boundary phases of liquid crystals \cite{Kholodenko:2013,
Lubensky:1995}. However, the contact condition
$\sigma^\alpha \wedge (\omega^\alpha_\Phi)^3 \neq 0$ fails for the choice
\eqref{eq:ACMS_oneforms} in the $N \gg 1$ regime: specifically,
$\sigma^\alpha \wedge (\omega^\alpha)^3 = 0$ in the $\psi = 2n\pi$,
$r = \mathrm{const}$ coordinate patches \cite{ACMS}.

The emergence of a genuine Contact 3-Structure requires passing to the intermediate-$N$
MQGP limit (\ref{eq:MQGP}) \cite{ACMS},
\begin{equation}
g_s^{-1} \sim \mathcal{O}(1){--}\mathcal{O}(10),\quad
M \sim \mathcal{O}(1),\quad
N_f \sim \mathcal{O}(1),\quad
\frac{(g_s M^2)^{m_1}(g_s N_f)^{m_2}}{N} < 1,
\label{eq:intermN}
\end{equation}
rather than $\ll 1$. In this regime, for the QCD-motivated values $(g_s, M, N_f) =
(0.1, 3, 2{ or }3)$, the color brane number is numerically pinned to
$N \approx 100 \pm \mathcal{O}(1)$. The Contact 3-Structure is then realised by the
more general ansatz (also discussed as ``Lemma 4'' of \cite{ACMS}, in \ref{C3S-Review})
\begin{equation}
\sigma^1 = \alpha_1 e^1 + \alpha_3 e^3 + \alpha_7 e^7,
\qquad
\sigma^2 = \beta_1 e^1 + \beta_4 e^4 + \beta_7 e^7,
\qquad
\sigma^3:\; \sigma^3(R^1 \times_\Phi R^2) = 1,
\label{eq:C3S}
\end{equation}
with the constraints $\alpha_1 \sim \alpha_3 \sim \alpha_7$ and
$\beta_1 \sim \beta_4 \sim \beta_7$ (equality up to $\mathcal{O}(1)$ prefactors in the IR).
The contact condition
\begin{equation}
\sigma^1 \wedge (\omega^1)^3
\;\sim\;
\alpha_1\alpha_3^2\alpha_7\;\Omega^3_{23}\Omega^3_{45}\Omega^6_1\;
e^{1234567} \neq 0
\end{equation}
is verified to hold in this intermediate-$N$ limit. Geometrically, whereas the large-$N$
AC3S mimics an ensemble of complex lamellar fields, the intermediate-$N$ Contact 3-Structure
mimics a triple of magnetic-monopole-like configurations on the $G_2$-structure seven-fold,
with monopole strengths set by the generalised spin connections $\Omega^a_{bc}$ arising from
the non-Einsteinian deformation of $T^{1,1}$ \cite{ACMS}.

A key structural consequence is that the parameter space
$\mathcal{X}_{G_2}(g_s, M, N_f;\, N)$ of closed seven-folds supporting $G_2$ structures
and relevant to the ${\cal M}$-theory uplift of thermal QCD-like theories is \emph{not}
$N$-path connected with respect to Contact Structures in the IR \cite{ACMS}. The
large-$N$ ACM3S branch (with $\alpha_3 = \beta_4 = 0$) and the intermediate-$N$ C3S branch
(requiring $\alpha_3 \sim \alpha_1 \sim \alpha_7$ and $\beta_4 \sim \beta_1 \sim \beta_7$)
are topologically disconnected in this parameter space: no continuous deformation in $N$
interpolates between them. This disconnection has a direct physical interpretation via the
swampland \cite{Ooguri:2016pdq,Freivogel:2016qwc}: since the deviation of the MQGP warp
factor from its near-horizon AdS form,
\begin{equation}
h_{\rm MQGP} - h_{\rm AdS}
\;\sim\;
\frac{g_s M^2 \log r\;(g_s N_f)^{0,1}}{N},
\end{equation}
is suppressed by additional inverse powers of $N$ in the intermediate-$N$ limit
\eqref{eq:intermN} relative to the large-$N$ limit \eqref{eq:MQGP}, the intermediate-$N$
vacuum supporting Contact 3-Structures sits further from the non-supersymmetric AdS
swampland than its large-$N$ almost-contact counterpart \cite{ACMS}.

Finally, via the proposition of de la Ossa, Larfors, and Magill \cite{Ossa2013},
the ACMS (resp.\ CMS) on $M_7$ induces a reduction of the $G_2$ structure to a transverse
$SU(3)$ structure on the six-dimensional space orthogonal to each Reeb direction
\cite{Friedrich:1997,Arikan:2011zm}. The decomposition of the $G_2$ three-form according
to the ACS reads
\begin{equation}
\Phi = \sigma^\alpha \wedge \omega^\alpha_\Phi + \Omega^\alpha_+,
\qquad
{*}_7\Phi = -\sigma^\alpha \wedge \Omega^\alpha_-
          + \tfrac{1}{2}\,(\omega^\alpha_\Phi)^2,
\label{eq:SU3decomp}
\end{equation}
where the pair $(\omega^\alpha_\Phi,\,\Omega^\alpha = \Omega^\alpha_+ + i\Omega^\alpha_-)$
constitutes the transverse $SU(3)$ structure --- a fundamental two-form and a
nowhere-vanishing holomorphic three-form on the transverse geometry. For the large-$N$
ACM3S, one obtains
\begin{equation}
\Omega^\alpha_+ = \Phi - \sigma^\alpha \wedge \omega^\alpha_\Phi,
\qquad
\Omega^\alpha_- = \sigma^\alpha \;\lrcorner\;
  \Bigl(\tfrac{1}{2}(\omega^\alpha_\Phi)^2 - {*}_7\Phi\Bigr).
\end{equation}
For the intermediate-$N$ C3S, the transverse $SU(3)$ structure is richer: $\Omega_-$ is
determined by the simultaneous conditions
\begin{equation}
i_{R^{(1)}}\Omega_- = 0,
\qquad
{*}_7\Phi = -\sigma^{(1)} \wedge \Omega_- + \tfrac{1}{2}\,\omega^2_\Phi,
\label{eq:OmMinusCond}
\end{equation}
and admits a four-parameter family of solutions parametrised by $\Lambda^{(1)}_{124}, \Lambda^{(1)}_{156}, \Lambda^{(1)}_{356}, \Lambda^{(1)}_{756}$ (see ``Lemma 6'' of \cite{ACMS}, also discussed in \ref{C3S-Review}):
\begin{equation}
\Omega_- = \Lambda_{ABC}\,e^{ABC}
= \Lambda^{(1)}_{1b'c'}\,e^{1b'c'}
+ \Lambda^{(1)}_{3b'c'}\,e^{3b'c'}
+ \Lambda^{(1)}_{7b'c'}\,e^{7b'c'}
+ \Lambda_{a'b'c'}\,e^{a'b'c'},
\quad a',b',c' \in \{2,4,5,6\}.
\label{eq:OmMinus}
\end{equation}
It is precisely the torsion class membership of this Contact Structure-induced transverse
$SU(3)$ structure \eqref{eq:SU3decomp}--\eqref{eq:OmMinus} that constitutes the central
object of investigation in the present paper.

{\it Main results}: We prove the following pair of propositions in this paper:

\begin{proposition}
The $SU(3)$-structure $(\omega_\Phi, \Omega)$ induced from any of the Contact 3-Structures obtained from the $G_2$-structure ($\tau = \tau_1\oplus\tau_2\oplus\tau_3$) seven-fold $M_7$ (of \cite{MQGP}, \cite{OR4}, \cite{ACMS}, relevant to the ${\cal M}$-theory dual of thermal QCD-like theories) with the following fibration structure:
\begin{equation}
\begin{array}{ccc}
    S^1_{\cal M} & \rightarrow & M_7 \\
     & & \big\downarrow\pi \\
     S^1_t & \rightarrow & M_6^{\rm non-Kaehler} \\
           &             & \big\downarrow \tilde{\pi} \\
           &             & {\cal T}^{1,1}_{\rm NE},
\end{array}
\end{equation}
(where ${\cal T}^{1,1}_{\rm NE}$ is a non-Einsteinian deformation of the Sasaki-Einstein $T^{1,1}$), is symplectic nearly half-flat: \\
(i) $d\omega_\Phi=0$,\\
(ii) $W_\Phi^{SU(3)} = W_2^-$, \\
(iii) $d\Omega_+= \kappa_{r_h}^3g_s^{1/12}M N_f^{4/3}\frac{e^{-\frac{\kappa_{r_h}}{2}N^{1/3}}\log^2N}{\log r_h|^{5/3}}dt\wedge e^{245},\ d\Omega_-\neq0$,
\end{proposition}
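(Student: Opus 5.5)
The proof is a direct computation in the explicit co-frame $\{e^a\}_{a=1}^7$ of $M_7$. The structure equations $de^a = -\Omega^a{}_{bc}\,e^{bc}$ use the generalised spin-connection components of Appendix~\ref{Omegas}, evaluated in the patch $\psi=2n\pi$, near the Ouyang embedding, at fixed IR radius, and in the MQGP limit (\ref{eq:MQGP}).

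\emph{Step 1: symplecticity.} We take the Contact 3-Structure (\ref{eq:C3S}) with $\sigma^1=\alpha_1e^1+\alpha_3e^3+\alpha_7e^7$. Its fundamental two-form is the exact form
\[
\omega_\Phi=d\sigma^1=i_{R^1}\Phi ,
\]
as recalled in the review of \cite{ACMS}. Hence $d\omega_\Phi=d^2\sigma^1=0$ identically. This holds for every member of the family, since the coefficients $\alpha_i$ are constants. The only thing to check is consistency: the contact condition $\sigma^1\wedge(\omega_\Phi)^3\neq0$ in the intermediate-$N$ regime guarantees that $\omega_\Phi$ is non-degenerate on the transverse distribution $\ker\sigma^1$. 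So $\omega_\Phi$ is a genuine transverse symplectic form and not merely a closed two-form. This gives (i).

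\emph{Step 2: $d\Omega_\pm$.} We form $\Omega_+=\Phi-\sigma^1\wedge\omega_\Phi$ and take $\Omega_-$ from the four-parameter family (\ref{eq:OmMinus}) solving (\ref{eq:OmMinusCond}). Applying $d$ with the structure equations gives
\[
d\Omega_+=d\Phi-\omega_\Phi\wedge\omega_\Phi .
\]
Here $d\Phi$ is known from the $G_2$ torsion $\tau_1\oplus\tau_2\oplus\tau_3$ with $\tau_0=0$. We then insert the IR, MQGP-limit forms of the relevant $\Omega^a{}_{bc}$ and of the warp factor and dilaton. We trade $\log r_h$ for $-\kappa_{r_h}N^{1/3}$ and $r_h$ for $e^{-\kappa_{r_h}N^{1/3}}$. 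Every surviving four-form component should cancel at leading order except the $dt\wedge e^{245}$ term. That term carries a positive power of $r_h$, namely $r_h^{1/2}\sim e^{-\kappa_{r_h}N^{1/3}/2}$, which produces (iii). For $\Omega_-$, it suffices to exhibit one component of $d\Omega_-$ that does not vanish for any choice of the $\Lambda^{(1)}$ parameters, for instance one proportional to $\Omega^3_{23}\Omega^3_{45}$.

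\emph{Step 3: torsion classes.} We use the Chiossi--Salamon relations
\[
d\omega=-\tfrac32\,\mathrm{Im}(\bar W_1\Omega)+W_4\wedge\omega+W_3,\qquad d\Omega=W_1\,\omega^2+W_2\wedge\omega+\bar W_5\wedge\Omega .
\]
The condition $d\omega_\Phi=0$ kills $W_1$, $W_3$ and $W_4$, by type decomposition. Dropping the exponentially suppressed $d\Omega_+$ then forces $W_2^+=0$ and $\mathrm{Re}\,W_5=0$. We then show that $W_5$ vanishes altogether: contracting $d\Omega_-$ with $\Omega_+$ gives zero at leading order, so the $(3,1)$ part vanishes. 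What remains is $d\Omega_-=W_2^-\wedge\omega_\Phi$ with $W_2^-\neq0$ a primitive $(1,1)$-form, read off from Step 2. This gives (ii).

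The main obstacle is Step 2. The $d\Omega_+$ computation requires tracking many connection components whose leading large-$N$ pieces must cancel. Only then is the exponentially small remainder, rather than some power-law term, the true leading contribution. I would organise the computation by $N$-scaling first, verify the cancellation of the $\mathcal O(N^0)$ and $\mathcal O(N^{-1/3})$ terms symbolically, and only then extract the coefficient $\kappa_{r_h}^3g_s^{1/12}MN_f^{4/3}\log^2N/|\log r_h|^{5/3}$.
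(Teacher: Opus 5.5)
Your proposal follows the paper's proof: $\omega_\Phi=d\sigma$ gives $d\omega_\Phi=0$ and hence $W_\Phi^{SU(3)}\in W_2\oplus W_5$; $d\Omega_+=d\Phi-\omega_\Phi^2$ is evaluated in the IR, where $\sqrt{g_{tt}}\propto\sqrt{r_h(r-r_h)}$ on the $dt$ leg of $e^{2415}$ supplies the $e^{-\kappa_{r_h}N^{1/3}/2}$; and dropping $d\Omega_+$ leaves $W_2^-$. The one real difference is the Step 2 mechanism you anticipate: nothing in $d\Phi$ has to cancel against $\omega_\Phi^2$, because $d\Phi\approx\Omega^3_{24}\,e^{2415}$ by dominance of $\Omega^3_{24}$ among the IR spin-connection components, while in $\omega_\Phi^2$ the products $\Omega^1_{[3}\Omega^7_{a]}$, $\Omega^3_{[bc}\Omega^7_{a]}$, $\Omega^3_{[bc}\Omega^3_{df]}$ vanish and the surviving terms carry the same exponential but $N^{-23/26}$ instead of $N^{-5/9}$, so $d\Omega_+\approx d\Phi$. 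There are also two small slips: $W_5$ is a real one-form (the paper uses $W_5=\tfrac12\,\Omega_+\lrcorner d\Omega_+$), so discarding $d\Omega_+$ already removes it and your extra contraction of $d\Omega_-$ with $\Omega_+$ is redundant; and since the $\Lambda^{(1)}$ in $\Omega_-$ are constants, $d\Omega_-$ is linear in the $\Omega^a{}_{bc}$, so a component $\propto\Omega^3_{23}\Omega^3_{45}$ cannot arise.
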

where ``nearly'' implies an exponential-in-$N^{1/3}$-suppression in the deviation from being half-flat. 
The second is that one can generate a weak magnetic field using KK Reduction via $G_2$/symplectic nearly half-flat $SU(3)$ Structure, and harmonicity via Large-$N$ Suppression.

\section{C3S from $G_2$-structure $M_7$: a review}
\label{C3S-Review}

A recap of some of the results of \cite{ACMS}. It was shown that:
\begin{eqnarray}
\label{Omega^a_bc-i}
& & de^a = \Omega^a_{bc}e^b\wedge e^c,
\end{eqnarray}
where $a, b, c = 2,...,6$ and, the ``generalized spin connection'' $\Omega^a_{bc}$ are defined as:
{\footnotesize
\begin{eqnarray}
\label{Omega^a_bc-ii}
& & \hskip -0.8in \Omega^a_{bc}(r={\rm constant}\in{\rm IR}) = \partial_{[\theta_2}e^a_{\ \theta_1]}\Theta_{2b}\Theta_{1c}
+ \partial_{[x}e^a_{\ \theta_1]}{\cal X}_b\Theta_{1c} + 
\partial_{[y}e^a_{\ \theta_1]}{\cal Y}_b\Theta_{1c} + \partial_{[z}e^a_{\ \theta_1]}{\cal Z}_{b}\Theta_{1c}
+ \partial_{[x}e^a_{\ \theta_2]}{\cal X}_b\Theta_{2c}\nonumber\\
& &  \hskip -0.8in  + \partial_{[y}e^a_{\ \theta_2]}{\cal Y}_b\Theta_{2c} + \partial_{[z}e^a_{\ \theta_2]}{\cal Z}_{2b}\Theta_{2c}
+ \partial_{[y}e^a_{\ x]}{\cal Y}_b{\cal X}_{c} +\partial_{[z}e^a_{\ x]}{\cal Z}_b{\cal X}_{c} +\partial_{[z}e^a_{\ y]}{\cal Z}_b{\cal Y}_{c};
\end{eqnarray}
}
\begin{eqnarray}
\label{Theta_ia+X_a+Y_a+Z_a}
& & \hskip -0.3 in d\theta_{i=1/2} = \sum_{a=2}^6\Theta_{ia}e^a,\  dx = \sum_{a=2}^6{\cal X}_ae^a,\  dy = \sum_{a=2}^6{\cal Y}_ae^a,\
dz = \sum_{a=2}^6{\cal Z}_ae^a.
\end{eqnarray}

The following lemma was proved in \cite{ACMS}.\\
\noindent "{\it Lemma 1}" (of \cite{ACMS}): Near the Ouyang embedding locus of the flavor $D7$-branes and the $\psi=2n\pi, n=0, 1, 2$-coordinate patches, assuming $|\mu_{\rm Ouyang}|\ll1$, in the MQGP limit (\ref{eq:MQGP}) and in the IR ($r\in r_h$), 
\begin{equation}
\label{G2-torsion}
\tau\left(M_7\right) = \tau_1\oplus\tau_2\oplus\tau_3,
\end{equation}
wherein (using $\Phi = e^{127} + e^{347} + e^{567} + e^{135} - e^{146} - e^{236} - e^{245}$, \cite{Karigiannis}, \cite{J. G. J. Held's thesis [2012]}),
\begin{eqnarray}
\label{taus}
& & \tau_0 = d\Phi\lrcorner*_7\Phi =0;\nonumber\\
& & \tau_1 = \frac{1}{12}*_7\left(\Phi \wedge *_7 d\Phi\right) \sim e^1\Omega^3_{24}-e^7\Omega^3_{23};\nonumber\\
& & \tau_2 = \frac{1}{2}\left(d\psi\lrcorner\Phi - *_7d\psi\right) -2\tau_1\lrcorner\Phi \sim \left(e^{27} + e^{35} - e^{46}\right)\Omega^3_{24} - J \Omega^3_{23};\nonumber\\
& & \tau_3 = *_7d\Phi - \tau_0\Phi + 3 \tau_1\lrcorner\psi \sim \left(e^2\wedge J + e^{457} + e^{367}\right)\Omega^3_{24}
- \left(e^{246} - e^{235} - e^{456}\right)\Omega^3_{23}
\end{eqnarray}
where $J = e^{12} + e^{34} + e^{56}$, and the $\sim$ implies the most dominant-in-$N$ terms in the MQGP limit (\ref{eq:MQGP}). 

In \cite{ACMS}, the following lemma was proven:\\
\noindent "{\it Lemma 4}" (of \cite{ACMS}): Near the $\psi=2n\pi$-coordinate patch, in the IR, \\ $(\sigma^1,\sigma^2,\sigma^3) = \left(\alpha_1 e^1 + \alpha_3 e^3 + \alpha_7 e^7, \beta_1 e^1 + \beta_4 e^4 + \beta_7 e^7,\sigma^3\right), $ $\{\alpha_{i=1, 2, 3}\}, \{\beta_{j=1, 2, 3}\}\in\mathbb{R}$, with $\sigma^3(R^1\times_\Phi R^2)=1$ with\\
(a)
\begin{eqnarray}
\label{contact-x0-5}
& & R^x_1=\frac{\alpha_3 e^3_{\theta_1}}{G^{\cal M}_{x\theta_1}};\nonumber\\
& & R^y_1=\frac{\alpha_3 (e^3_{\theta_2} G^{\cal M}_{x\theta_1}-e^3_{\theta_1} G^{\cal M}_{x\theta_2})}{G^{\cal M}_{x\theta_1}
   G^{\cal M}_{y\theta_2}};\nonumber\\
& & R^z_1=\frac{\alpha_3 (e^3_y G^{\cal M}_{x\theta_1}-e^3_{\theta_1} G^{\cal M}_{xy})}{G^{\cal M}_{x\theta_1} G^{\cal M}_{yz}};\nonumber\\
& & R^{\theta_1}_1=\frac{\alpha_3 (-e^3_{\theta_1} G^{\cal M}_{x\theta_2} G^{\cal M}_{yz} G^{\cal M}_{xz}-e^3_{\theta_1}
   G^{\cal M}_{xz} G^{\cal M}_{xy} G^{\cal M
}_{z\theta_2}-e^3_x G^{\cal M}_{x\theta_1} G^{\cal M}_{yz} G^{\cal M}_{z\theta_2}+e^3_y
   G^{\cal M}_{x\theta_1} G^{\cal M}_{xz} G^{\cal M}_{z\theta_2}+e^3_z G^{\cal M}_{x\theta_1} G^{\cal M}_{x\theta_2}
   G^{\cal M}_{yz})}{G^{\cal M}_{x\theta_1} G^{\cal M}_{x\theta_2} G^{\cal M}_{yz} G^{\cal M}_{z\theta_1}};\nonumber\\
& & R^{\theta_2}_1=\frac{\alpha_3 (e^3_{\theta_1} G^{\cal M}_{xz} G^{\cal M}_{xy}+e^3_x G^{\cal M}_{x\theta_1}
   G^{\cal M}_{yz}-e^3_y G^{\cal M}_{x\theta_1} G^{\cal M}_{xz})}{G^{\cal M}_{x\theta_1} G^{\cal M}_{x\theta_2} G^{\cal M}_{yz}},
\end{eqnarray}

(b) $\alpha_1\sim\alpha_3\sim\alpha_7; \beta_1\sim\beta_4\sim\beta_7$ with $\sim$ implying equality up to ${\cal O}(1)$ terms, and $R_2=R_1(e^3\rightarrow e^4,\ \alpha_3\rightarrow\beta_4)$, \\
{\it provide Contact 3-Structures}.

Now, consider the following proposition\\
\noindent {\it Proposition} \cite{Ossa et al[2013]}:
The ACMS induces a reduction of the $G_2$ structure to an $SU(3)$ structure $(\omega_\Phi, \Omega)$ on the transverse geometry of the foliation with $\omega_{\Phi}$ being the fundamental 2-form on $M_7$ ($\omega_\Phi = i_R\Phi$) and $\Omega$  the transverse $(3,0)$-form w.r.t. $J (J(u) = R\times_\Phi u,\ \forall u\in\Gamma(T M_7))$.
$(\omega_\Phi, \Omega)$ are determined by the ACS decomposition of the $G_2$ structure $\Phi$ on $M_7$:
\begin{eqnarray}
\label{Transverse-conditions}
& & \Phi = \sigma\wedge\omega_\Phi + \Omega_+,\nonumber\\
& & \psi = *_7\Phi = - \sigma\wedge\Omega_- + \frac{1}{2}\omega_\Phi\wedge\omega_\Phi.
\end{eqnarray}
It was shown in \cite{ACMS} that the Contact 3-Structures obtained in Lemma 4, using the aforementioned proposition, induced transverse $SU(3)$ 3-structures given by the following lemma.\\
\noindent {\it ``Lemma 6''}(of \cite{ACMS}): $M_7$, near the Ouyang embedding in the limit of very small limit of the Ouyang embedding parameter and near the $\psi=2n\pi, n=0, 1, 2$-coordinate patch, inherits a transverse $SU(3)$ structure from the Contact 3-Structures constructed in 3, $\left(\Omega_+^{(\alpha)},\Omega_-^{(\alpha)}\right)$,
where $\Omega_+^{(\alpha)} = \Phi - \sigma^{(\alpha)}\wedge\omega_\Phi^{(\alpha)}$ and $\Omega_-^{(\alpha)}$ is given in \cite{ACMS}. The steps leading up to the determination of $\Omega_-$ in \cite{ACMS}, are summarized below.

From $i_{R^{(1)}}\Omega_- = 0\cap *\Phi = - \sigma^{(1)}\wedge\Omega_- + \frac{1}{2}\omega_\Phi^2$:
\begin{eqnarray}
\label{Lambda-constraints}
& & \Lambda^{(3)}_{17a_0} = \Lambda^{(3)}_{27a_0}= \Lambda^{(3)}_{37a_0} = 0,\nonumber\\
& & \alpha_1\Lambda^{(3)}_{376} + \alpha_7\Lambda^{(3)}_{136} =  -1,\nonumber\\
& & |\Lambda_{245}|,   |\Lambda_{246}|,  |\Lambda_{256}| \ll 1;\  \Lambda_{456} \approx -1\nonumber\\
&  & \alpha_3 \Lambda^{(1)}_{124} - \alpha_1 \Lambda^{(1)}_{324} = {\cal O}(10^3)\alpha_\theta,\nonumber\\
& & \Lambda^{(3)}_{27a_0} = 0\ {\rm for}\ a_0 = 2, 4, 5, 6 \nonumber\\
&  & \alpha_7 \Lambda^{(1)}_{325} - \alpha_3 \Lambda^{(1)}_{725} = \frac{{\cal O}(10^5)e^{-2.2\alpha_\theta\kappa_{r_h}}}{\alpha_{\theta_1}^4};\nonumber\\
& & \alpha_1 \Lambda^{(1)}_{325} = \alpha_3 \Lambda^{(1)}_{125} \nonumber\\
& &  \alpha_7 \Lambda^{(1)}_{145} - \alpha_1 \Lambda^{(1)}_{745} + \frac{100\alpha_\theta^2}{\kappa_{\log r^{\rm IR}\ ^2}}  = 0\nonumber\\
& & \alpha_7 \Lambda^{(1)}_{325} - \alpha_3 \Lambda^{(1)}_{725} - \frac{10^5 e^{-10^{7/20} \alpha_\theta}}{\alpha_{\theta_1}^4} = 0,\nonumber\\
& & \alpha_3 \Lambda^{(1)}_{124} - \alpha_1 \Lambda^{(1)}_{324} -  {\cal O}(10^3) \alpha_\theta^2=0,\nonumber\\
& & \alpha_3 \Lambda^{(1)}_{125} = \alpha_1 \Lambda^{(1)}_{325};
\nonumber\\
& & \alpha_3 \Lambda^{(1)}_{126} = 
 \alpha_1 \Lambda^{(1)}_{326} = \frac{\alpha_1 \alpha_3}{\alpha_7} \Lambda^{(1)}_{726};\nonumber\\
& & \alpha_3 \Lambda^{(1)}_{145} = 
 \alpha_1 \Lambda^{(1)}_{345} = \frac{\alpha_1 \alpha_3}{\alpha_7} \Lambda^{(1)}_{745};
 \nonumber\\
& & \alpha_3 \Lambda^{(1)}_{146} = 
 \alpha_1 \Lambda^{(1)}_{346} = \frac{\alpha_1 \alpha_3}{\alpha_7} \Lambda^{(1)}_{746};\nonumber\\
& & \alpha_3 \Lambda^{(1)}_{156} = 
 \alpha_1 \Lambda^{(1)}_{356} = \frac{\alpha_1 \alpha_3}{\alpha_7} \Lambda^{(1)}_{756},\nonumber\\
& & \Lambda^{(1)}_{756}\sim\frac{N}{g_s^{14}}>>1,\ \Lambda^{(1)}_{724}\sim\frac{g_s^{21/2}}{N^{3/4}}<<1,
\end{eqnarray}
wherein $\Lambda^{(1),(3)}_{pqr}, p, q, r={1,...,7}$ are constants. 
Therefore, in the MQGP limit (\ref{eq:MQGP}),
\begin{equation}
\label{Omega-}
\Omega_- = \Lambda^{(1)}_{124}e^{124} + \Lambda^{(1)}_{156}e^{156} + \Lambda^{(1)}_{356} e^{356} + \Lambda^{(1)}_{756}e^{756}.
\end{equation}

The non-conformal corrections in all quantities $Q$ that can/are determined by field fluctuations have contributions that are functions of $\log  (r_h) $, which we would denote by $\tilde{f}(\log  (r_h) ) = f(\sqrt{-\alpha^2 + g^2})$ (using standard KS(Klebanov-Strassler)-like RG-flow equations \cite{Klebanov:2000hb}, \cite{IITR-McGill-bulk-viscosity}, one sees that (similar to \cite{effec_kin_model_zetaovereta}):
{\footnotesize
$g^2 \sim \left.\left(e^{-\phi^{\rm IIA}}\int_{S^2}B^{\rm IIA}\right)^{-1}\right|_{r=r_h}
\sim\frac{1}{(g_s M) (g_s N_f)}\left[\left|\log r_h\right| \left(\log N - 3 \log r_h\right)\right]^{-1},$}), $\alpha\in\mathbb{R}$, with $g$ being the temperature-dependent gauge coupling. Upon complexification $g\rightarrow g\otimes\mathbb{C}\Rightarrow Q\rightarrow Q\otimes\mathbb{C}$, $Q\otimes\mathbb{C}$ develops branch-point singularities connected by a branch cut in $g\otimes\mathbb{C}$; the branch cut is the analog of the $AC3S_{N\gg1}-C3S_{1<N\slashed{\gg}1}$-obstruction - Fig. 1.
\begin{figure}
\centering
\begin{tikzpicture}[
  remember picture,
  arrow/.style={->, thick, shorten <=18pt, shorten >=18pt},
  redarrow/.style={red, very thick, ->, >=Latex}
]


\def\s{1.732}

\coordinate (a) at (0,\s*2);
\coordinate (b) at (\s*4,\s*2);
\coordinate (c) at (0,0);
\coordinate (d) at (\s*4,0);

\node at (a) {$\hskip -0.3in AC3S_{N\gg1}$};
\node at (b) {$\hskip 0.5in AC3S_{1<N\slashed{\gg}1}$};
\node at (c) {$\slashed{\exists}C3S_{N\gg1}$};
\node at (d) {$C3S_{1<N\slashed{\gg}1}$};

\draw[arrow] (a) -- (b);

\draw[arrow]
(a) -- (c)
node[pos=0.5,right]
{$\sigma\wedge\omega_\phi^3\neq0$};

\draw[arrow]
(b) -- (d)
node[pos=0.5,right]
{$\sigma\wedge\omega_\phi^3\neq0$};

\draw[arrow] (a) -- (d);

\coordinate (m) at ($(a)!0.5!(d)$);

\draw[thick]
($(m)+(-0.45,0.45)$) --
($(m)+(0.45,-0.45)$);

\draw[thick,dotted]
(m) circle [radius=0.6];

\node[right]
at ($(m)+(0.9,0)$)
{$\hskip -0.2in\equiv\hskip -0.05in
\scriptsize{{Obstruction}}$};

\coordinate (startred) at ($(m)+(-0.3,-0.8)$);


\begin{scope}[yshift=-10cm]

\fill[gray!20]
  (-1.6,0.4)
  .. controls (-0.6,2.0) and (2.6,2.1) ..
  (3.4,0.7)
  .. controls (3.8,-0.9) and (2.5,-2.0) ..
  (0.8,-1.8)
  .. controls (-1.0,-1.6) and (-2.0,-0.6) ..
  (-1.6,0.4)
  -- cycle;

\draw[thick]
  (-1.6,0.4)
  .. controls (-0.6,2.0) and (2.6,2.1) ..
  (3.4,0.7)
  .. controls (3.8,-0.9) and (2.5,-2.0) ..
  (0.8,-1.8)
  .. controls (-1.0,-1.6) and (-2.0,-0.6) ..
  (-1.6,0.4)
  -- cycle;

\node[anchor=west]
at (3.55,1.05)
{$g_{\mathbb{C}}$};

\coordinate (p1) at (0.3,-0.2);
\coordinate (p2) at (2.7,-0.2);

\fill (p1) circle (3pt);
\fill (p2) circle (3pt);

\node[above] at (p1) {$-\alpha$};
\node[above] at (p2) {$+\alpha$};

\draw[
  thick,
  postaction={
    decorate,
    decoration={
      markings,
      mark=between positions 0 and 1 step 0.05
      with {
        \draw (-0.06,-0.06) -- (0.06,0.06);
        \draw (-0.06,0.06) -- (0.06,-0.06);
      }
    }
  }
]
(p1) -- (p2);

\node at (1.5,-0.65)
{{Obstruction}};

\coordinate (endred) at (1.5,0.1);

\end{scope}


\draw[redarrow]
(startred)
.. controls (-3,-4) and (-2,-8) ..
(endred);
\end{tikzpicture}
\caption{Under an $N$-flow, one can not obtain C3S for $N\gg1$, but obtains C3S for $1<N\slashed{\gg}1$. The ``Obstruction'' in the $N$-flow diagram corresponds to the
branch-cut obstruction in the complexified gauge-coupling plane preventing from defining a single-valued holomorphic $Q_{\mathbb{C}}$ in the complexified gauge-coupling plane $g_{\mathbb{C}}$.
}
\end{figure}

\section{Symplectic nearly half-flat $SU(3)$ structure induced from C3S}

Let us now discuss how the Contact Structure-induced $SU(3)$ structure $\left(\omega_\Phi,\Omega\right)$ of (\ref{Transverse-conditions}), corresponds to a symplectic nearly half-flat $SU(3)$-structure $(\omega, \Omega)$; $d\omega=0, d\Omega_+={\cal O}\left(e^{-N^{1/3}/{\cal O}(1)}\right)$ (cf.\ \eqref{eq:dOmegaIR}), in the MQGP limit (\ref{eq:MQGP}).

\noindent \begin{proposition}\label{proposition}The torsion classes $W_\Phi^{SU(3)}$ of the $SU(3)$-structure of (\ref{Transverse-conditions}) induced from any of the C3S of "Lemma 4" of \cite{ACMS}, in the MQGP limit (\ref{eq:MQGP}), are given by: $W_\Phi^{SU(3)} = W_2$; the $SU(3)$-structure is symplectic \end{proposition}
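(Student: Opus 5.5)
The proof amounts to computing $d\omega_\Phi$ and $d\Omega_\pm$ for the transverse structure of \eqref{Transverse-conditions} and reading off the torsion classes from the standard Chiossi--Salamon relations:
\begin{equation*}
d\omega = -\tfrac{3}{2}\,\mathrm{Im}(\bar W_1\Omega) + W_4\wedge\omega + W_3,
\qquad
d\Omega = W_1\,\omega\wedge\omega + W_2\wedge\omega + \bar W_5\wedge\Omega .
\end{equation*}
With $d\omega_\Phi=0$ one has $W_1=W_3=W_4=0$. If, in addition, $d\Omega_+$ is negligible and the $(3,1)$-part of $d\Omega$ has no $\bar W_5$-component, then $d\Omega_-=W_2^-\wedge\omega_\Phi$ and only $W_2^-$ survives. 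The plan is to establish these three facts in turn, working with the frame $\{e^a\}$, the structure equations \eqref{Omega^a_bc-i}--\eqref{Omega^a_bc-ii}, and the C3S data of ``Lemma 4'' and ``Lemma 6''.

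\textbf{Step 1: $d\omega_\Phi=0$.} The C3S of ``Lemma 4'' come with $\omega^\alpha_\Phi=d\sigma^\alpha$; this is the defining relation of the contact structures recalled in Section 2. Closure then follows from $d^2=0$, so no computation is needed. To be safe, I would cross-check the identification $d\sigma^{(1)}=i_{R^{(1)}}\Phi$ at leading order in the MQGP limit. For this I would use $\sigma^{(1)}=\alpha_1e^1+\alpha_3e^3+\alpha_7e^7$ together with the components of $R^{(1)}$ in \eqref{contact-x0-5}, and keep only the dominant spin connections $\Omega^3_{23},\Omega^3_{24},\ldots$.

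\textbf{Step 2: $d\Omega_+$.} Write $\Omega_+=\Phi-\sigma\wedge\omega_\Phi$, so that $d\Omega_+=d\Phi-d\sigma\wedge\omega_\Phi=d\Phi-\omega_\Phi\wedge\omega_\Phi$. Then express $d\Phi$ through the $G_2$ torsion of ``Lemma 1''. Since $\tau_0=0$, we have $d\Phi=3\tau_1\wedge\Phi+*_7\tau_3$. The task is to show that, in the IR, $d\Phi$ cancels $\omega_\Phi^2$ up to a single residual term proportional to $dt\wedge e^{245}$. I would evaluate the coefficient of that term by substituting the IR forms of $\Omega^3_{23},\Omega^3_{24}$ (and the relevant $e^a_{\ \theta_i}$) and using $|\log r_h|=\kappa_{r_h}N^{1/3}$. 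The factor $e^{-\kappa_{r_h}N^{1/3}/2}$ should come from the powers of $r_h$ in the warp factor and metric components, which turns the residual into $\mathcal{O}(e^{-N^{1/3}/\mathcal{O}(1)})$ and justifies dropping it.

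\textbf{Step 3: $d\Omega_-$ and identifying $W_2^-$.} Use \eqref{Omega-}, $\Omega_-=\Lambda^{(1)}_{124}e^{124}+\Lambda^{(1)}_{156}e^{156}+\Lambda^{(1)}_{356}e^{356}+\Lambda^{(1)}_{756}e^{756}$, with constant $\Lambda$'s, so that $d\Omega_-=\sum\Lambda\,d(e^{abc})$. I would expand this using \eqref{Omega^a_bc-i} and the appendix listing of the $\Omega^a_{bc}$. I expect $d\Omega_-\neq0$, dominated by the $\Lambda^{(1)}_{756}\sim N/g_s^{14}$ term. The remaining task is to show the result has the form $\beta\wedge\omega_\Phi$ with $\beta$ a primitive $(1,1)$-form; this gives $W_2^-=\beta$ and $W_5=0$. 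For this I would project $d\Omega_-$ onto $\Lambda^{3,1}\oplus\Lambda^{1,3}$ using the transverse almost complex structure $J(u)=R\times_\Phi u$. I would then check $\beta\wedge\omega_\Phi^2=0$ and verify that the contraction of $d\Omega_-$ with $\Omega_+$ vanishes, which kills $W_5$.

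The main obstacle is Step 2, together with the $W_5$ check in Step 3: both require careful bookkeeping of subleading-in-$N$ spin-connection components. In each case I would first isolate the leading $\Omega^3_{23},\Omega^3_{24}$ contributions, and afterwards bound every remaining term by the MQGP hierarchy \eqref{eq:MQGP}.
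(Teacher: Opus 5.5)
Your Steps 1 and 2 are the paper's proof. Closure of $\omega_\Phi=d\sigma$ gives $W_1=W_3=W_4=0$ (Lemma~\ref{LemmaSymplecticSU3}). Then $d\Omega_+=d\Phi-\omega_\Phi\wedge\omega_\Phi$ is shown to be $\mathcal{O}(e^{-N^{1/3}/\mathcal{O}(1)})$ in the IR (Lemma~\ref{HalfFlatSymplecticSU3}).

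One correction to how you frame Step 2: there is no cancellation between $d\Phi$ and $\omega_\Phi^2$, and the paper bounds them separately. On one side, $d\Phi|_{\rm IR}\approx\Omega^3_{24}e^{2415}\propto dt\wedge e^{245}$, with the exponential coming from the $\sqrt{r_h}$ in $e^1=\sqrt{G^{\cal M}_{tt}}\,dt$. On the other side, the $\Omega^1_{[3}\Omega^7_{a]}$, $\Omega^3_{[bc}\Omega^7_{a]}$ and $\Omega^3_{[bc}\Omega^3_{df]}$ contributions to $\omega_\Phi^2$ vanish or are exponentially small. The pieces of $\omega_\Phi^2$ that survive have different legs, $dt\wedge e^{24}\wedge e^3$ and $dt\wedge e^{24}\wedge e^6$, and are further suppressed by powers of $N$ and $g_s$. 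Hence $d\Omega_+\approx d\Phi$. Searching for a cancellation would be a dead end, but your conclusion still holds because each piece is small on its own.

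Your Step 3 is where you depart from the paper, and it is unnecessary. The real part of the structure equation already contains the $W_5$-term: $\mathrm{Re}(\bar W_5\wedge\Omega)=w_5\wedge\Omega_+$ with $w_5\equiv W_5+\bar W_5$. Since $w\mapsto w\wedge\Omega_+$ is injective on 1-forms, $W_5$ is determined by $d\Omega_+$ alone. This is why the paper reads off $W_5=\frac12\Omega_+\lrcorner d\Omega_+$ and $W_2^+$ from $(d\Omega_+)^{(2,2)}$ in \eqref{W345}--\eqref{W12}.

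Consequently, $d\omega_\Phi=0$ and $d\Omega_+\approx0$ by themselves kill every class except $W_2^-$. They also force $d\Omega_-=W_2^-\wedge\omega_\Phi$, with $W_2^-$ automatically primitive of type $(1,1)$. Your second hypothesis (no $\bar W_5$-component in the $(3,1)$-part) is therefore a consequence of the first.

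Projecting $d\Omega_-$ with $J=R\times_\Phi(\cdot)$, checking primitivity, and contracting with $\Omega_+$ is thus at most a consistency check, or a way to exhibit $W_2^-\neq0$, which the paper merely asserts via $d\Omega_-\neq0$. It would cost you control of $\Lambda$-weighted subleading spin connections, made harder by $\Lambda^{(1)}_{756}\sim N/g_s^{14}$, and buys nothing for the classification itself.
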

This proposition is proved by proving two Lemmas \ref{LemmaSymplecticSU3} and \ref{HalfFlatSymplecticSU3}.
\noindent 
\begin{lemma}
\label{LemmaSymplecticSU3}
 The torsion classes of the $SU(3)$-structure of (\ref{Transverse-conditions}) induced from any of the C3S of "Lemma 4" of \cite{ACMS}, are given by:
$W_\Phi^{SU(3)} \in W_2\oplus W_5$; the $SU(3)$-structure is symplectic.
\end{lemma}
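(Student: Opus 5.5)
The plan is to reduce everything to the Chiossi--Salamon torsion decomposition of an $SU(3)$-structure on the six-dimensional transverse bundle:
\[
d\omega_\Phi = -\tfrac{3}{2}\,\mathrm{Im}\bigl(\bar W_1\Omega\bigr) + W_4\wedge\omega_\Phi + W_3,
\qquad
d\Omega = W_1\,\omega_\Phi\wedge\omega_\Phi + W_2\wedge\omega_\Phi + \bar W_5\wedge\Omega .
\]
Once $d\omega_\Phi=0$ is established, we read off $W_1=W_3=W_4=0$, since the three components $(W_1,W_3,W_4)$ lie in mutually inequivalent $SU(3)$-representations (a singlet, $[\![\mathbf 6]\!]$ and $\mathbf 3$). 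The remaining torsion then lives in $W_2\oplus W_5$, which is exactly the statement. The proof therefore rests on the symplectic condition, supplemented by a short representation-theoretic bookkeeping step.

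\textbf{Step 1: closure of $\omega_\Phi$.} For the C3S of ``Lemma 4'' of \cite{ACMS}, the review in Section~\ref{C3S-Review} records the defining relation of the fundamental two-form as $\omega^\alpha_\Phi = d\sigma^\alpha$. In other words, the fundamental two-form is the exterior derivative of the contact one-form, and this is precisely what distinguishes the contact case from a merely almost-contact one. It follows immediately that $d\omega_\Phi = d^2\sigma = 0$, exactly and without approximation.

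I would add a consistency check. Using (\ref{Omega^a_bc-i}) for $\sigma^1=\alpha_1e^1+\alpha_3e^3+\alpha_7e^7$ (with $\alpha_i$ constant), one has $d\sigma^1 = \alpha_3\,\Omega^3_{bc}e^{bc}+\alpha_1 de^1+\alpha_7 de^7$. This should agree with $i_{R^1}\Phi$, up to the transverse projection, with the Reeb field components (\ref{contact-x0-5}). I would take this agreement as given from \cite{ACMS} rather than recompute it.

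A subtlety is that the $\omega_\Phi$ of the transverse structure should be the transverse part, $i_{R}\Phi$ restricted to $\ker\sigma$. I would check that $i_R d\sigma=0$ (the Reeb condition, which is also part of ``Lemma 4''), so that $d\sigma$ is already transverse and no basic-cohomology correction arises.

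\textbf{Step 2: the torsion classes.} Substituting $d\omega_\Phi=0$ into the first structure equation and projecting onto the irreducible pieces $\Lambda^3 = \mathbf 1\oplus\mathbf 1\oplus\mathbf 3\oplus[\![\mathbf 6]\!]$ gives $W_1^\pm=0$, $W_4=0$ and $W_3=0$. Since $W_1$ is shared between the two equations, it also drops out of $d\Omega$, leaving $d\Omega = W_2\wedge\omega_\Phi + \bar W_5\wedge\Omega$. Hence $W_\Phi^{SU(3)}\in W_2\oplus W_5$.

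\textbf{Main obstacle.} The delicate point is Step 1: making sure that the $\omega_\Phi$ entering (\ref{Transverse-conditions}) is literally $d\sigma$, and not merely $i_R\Phi$ with $d(i_R\Phi)\neq0$. For the large-$N$ AC3S, (\ref{eq:ACMS_oneforms}) shows that $d\sigma = i_R\Phi$ can fail. For the intermediate-$N$ C3S, however, the coefficients $\alpha_i$ and the Reeb components were solved in \cite{ACMS} precisely so that the contact metric condition $d\sigma = i_R\Phi$ holds. If that identification held only to leading order in the MQGP limit, I would instead compute $d(i_{R}\Phi)$ directly from (\ref{Omega^a_bc-ii}) and the $\Omega^a_{bc}$ of the appendix, and show that the residual terms are suppressed.
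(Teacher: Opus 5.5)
Your proposal is correct and is essentially the paper's argument: the paper simply asserts $d\omega_\Phi=0$ (resting on the identification $\omega_\Phi=d\sigma$ it uses throughout) and then reads off $W_1=W_3=W_4=0$ from the Chiossi--Salamon structure equations, leaving $W_2\oplus W_5$. One side remark is off, though it does not affect your proof: the paper does not say $d\sigma=i_R\Phi$ fails for the large-$N$ triple \eqref{eq:ACMS_oneforms} (it asserts $\omega^\alpha_\Phi=d\sigma^\alpha=i_{R^\alpha}\Phi$ there), and what separates contact from almost contact is the non-degeneracy $\sigma\wedge(d\sigma)^3\neq0$, not the closedness of $\omega_\Phi$.
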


\begin{proof}
 The $SU(3)$-structure torsion classes, as is well known, are generically given as under:
\begin{eqnarray}
\label{SU3-torsion-defs}
& & d\omega_\Phi = \frac{3}{2}\Im m\left(\bar{W}_1\Omega\right) + W_4\wedge\omega_\Phi + W_3,\nonumber\\
& & d\Omega = W_1\omega_\Phi^2 + W_2\wedge\omega_\Phi + \bar{W}_5\wedge\Omega,
\end{eqnarray}
from where one reads off:
\begin{eqnarray}
\label{W345}
& & W_3 = \left(d\omega_\Phi\right)^{(2,1)} - \left(J\wedge W_4\right)^{(2,1)},\nonumber\\
& & W_4 = \frac{1}{2}\omega_\Phi\lrcorner d\omega_\Phi,\nonumber\\
& & W_5 = \frac{1}{2}\Omega_+\lrcorner d\Omega_+.
\end{eqnarray}
Also, 
\begin{eqnarray}
\label{W12}
& & d\Omega_+\wedge\omega = \Omega_+\wedge d\omega_\Phi = W_1^+J^3,\nonumber\\
& & d\Omega_-\wedge\omega_\Phi = \Omega_-\wedge d\omega_\Phi = W_1^-J^3;\nonumber\\
& & \left(d\Omega_+\right)^{(2,2)} = W_1^+ J^2 + W_2^+\wedge J,\nonumber\\
& & \left(d\Omega_-\right)^{(2,2)} = W_1^- J^2 + W_2^-\wedge J.
\end{eqnarray}
As $d\omega_\Phi=0$, one immediately sees:
\begin{equation}
\label{SU3-torsion-CS}
W_\Phi^{SU(3)} = W_2\oplus W_5,
\end{equation}
which could be thought of as a symplectic $SU(3)$-structure. 
\end{proof}

\begin{lemma} 
\label{HalfFlatSymplecticSU3}
The symplectic $SU(3)$-structure of {\it Lemma I}, is nearly half flat: $W_\Phi^{SU(3)} = W_2^-$. 
\end{lemma}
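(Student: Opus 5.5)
Starting from Lemma~\ref{LemmaSymplecticSU3}, the torsion already sits in $W_2\oplus W_5$ with $W_1=W_3=W_4=0$, because $d\omega_\Phi=0$. Two things remain to show. First, $W_5$ and the real part $W_2^+$ vanish up to $e^{-\kappa_{r_h}N^{1/3}}$-suppressed terms. Second, $W_2^-$ does not vanish. The approach is to compute $d\Omega_+$ and $d\Omega_-$ explicitly in the C3S frame and read off the classes from (\ref{W345})--(\ref{W12}).

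\emph{Step 1: $d\Omega_+$.} Write $\Omega_+=\Phi-\sigma^{(1)}\wedge\omega_\Phi$ with $\sigma^{(1)}=\alpha_1e^1+\alpha_3e^3+\alpha_7e^7$ from ``Lemma 4''. Then
\[
d\Omega_+=d\Phi-d\sigma^{(1)}\wedge\omega_\Phi+\sigma^{(1)}\wedge d\omega_\Phi = d\Phi-\omega_\Phi\wedge\omega_\Phi ,
\]
using $d\omega_\Phi=0$ and $\omega_\Phi=d\sigma^{(1)}$ for the contact form. Expand $d\Phi$ via $de^a=\Omega^a_{bc}e^{bc}$ and the spin-connection components of Appendix~\ref{Omegas}. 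Then insert the IR/MQGP scalings, i.e.\ $|\log r_h|=\kappa_{r_h}N^{1/3}$ and the warp-factor and $g_s,M,N_f$ dependence. One should find that the $\mathcal{O}(1)$ pieces cancel between $d\Phi$ and $\omega_\Phi^2$; this is equivalent to the statement that $\tau_0=0$ in (\ref{taus}) and the torsion is dominated by $\Omega^3_{23},\Omega^3_{24}$. The surviving leading term should be the $dt\wedge e^{245}$ component with coefficient $\propto e^{-\frac{\kappa_{r_h}}{2}N^{1/3}}\log^2N/|\log r_h|^{5/3}$, i.e.\ $r_h^{1/2}$-suppressed. Dropping it gives $d\Omega_+\approx0$. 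Then $W_5=\frac12\Omega_+\lrcorner d\Omega_+\approx0$, and by (\ref{W12}) $W_1^+=0$ and $W_2^+\wedge J=(d\Omega_+)^{(2,2)}\approx0$, so $W_2^+\approx0$.

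\emph{Step 2: $d\Omega_-$.} Use (\ref{Omega-}), $\Omega_-=\Lambda^{(1)}_{124}e^{124}+\Lambda^{(1)}_{156}e^{156}+\Lambda^{(1)}_{356}e^{356}+\Lambda^{(1)}_{756}e^{756}$, with constant $\Lambda$'s, so $d\Omega_-$ comes entirely from the $de^a$ terms. Since $W_1^-$ is fixed by $\Omega_-\wedge d\omega_\Phi=0$, it vanishes. A consistency check is that $d\Omega_-\wedge\omega_\Phi=0$. The $(2,2)$ part of $d\Omega_-$ is therefore $W_2^-\wedge J$ with $W_2^-$ primitive. To show $W_2^-\neq0$ it suffices to exhibit a single nonvanishing component. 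The natural candidate is the one multiplying the large coefficient $\Lambda^{(1)}_{756}\sim N/g_s^{14}$ together with $\Omega^3_{23}$ or $\Omega^3_{24}$ from $de^3$ in the $e^{356}$ term (or $\Omega^6_1$). Because $\alpha_3\Lambda^{(1)}_{156}=\alpha_1\Lambda^{(1)}_{356}=\frac{\alpha_1\alpha_3}{\alpha_7}\Lambda^{(1)}_{756}$, these terms are mutually proportional and cannot cancel generically. The $(3,1)+(1,3)$ part of $d\Omega_-$ must match $\bar W_5\wedge\Omega$ and so should be negligible, consistent with Step 1.

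\emph{Main obstacle.} The hard step is the cancellation claimed in Step 1: checking that every $N$-unsuppressed component of $d\Phi-\omega_\Phi^2$ drops out, leaving only the $e^{-\kappa_{r_h}N^{1/3}/2}$ term. This requires tracking $\omega_\Phi=i_{R^1}\Phi$ with the explicit Reeb components (\ref{contact-x0-5}) and the relations $\alpha_1\sim\alpha_3\sim\alpha_7$. I would first organize the computation by the $SU(3)$ type decomposition with respect to $J(u)=R\times_\Phi u$, so that $(2,2)$ and $(3,1)$ parts are separated before imposing the MQGP scalings, and then evaluate the leading-in-$N$ coefficients numerically at $(g_s,M,N_f)=(0.1,3,2)$, $N\approx100$, to confirm the suppression.
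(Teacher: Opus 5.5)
Your skeleton is the paper's: write $d\Omega_+=d\Phi-\omega_\Phi\wedge\omega_\Phi$ (using $\omega_\Phi=d\sigma^{(1)}$ and $d\omega_\Phi=0$), show this is exponentially small in the IR, and read off the classes from (\ref{W345})--(\ref{W12}). The gap is in the mechanism you give for Step~1.

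\textbf{Step 1: the suppression mechanism is wrong.} There is no cancellation of unsuppressed pieces between $d\Phi$ and $\omega_\Phi^2$, and $\tau_0=0$ cannot supply one. It only removes the singlet term from $d\Phi=4\tau_0*_7\Phi-3\tau_1\wedge\Phi-*_7\tau_3$. But $\tau_1,\tau_3\neq0$ by (\ref{G2-torsion}), and $\tau_0=0$ says nothing about $\omega_\Phi\wedge\omega_\Phi$. If you looked for that cancellation you would not find it.

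What the paper shows instead is that each piece is separately small in the IR.
\begin{itemize}
\item For $d\Phi$: the dominance of $\Omega^3_{24}$ that you mention is the right input. $d\Phi$ is dominated by $\Omega^3_{24}e^{2415}$, since the $\Omega^3_{23}$ terms carry an explicit $r_h$. The suppression comes from the $e^1=\sqrt{G^{\cal M}_{tt}}\,dt$ leg, because $\sqrt{G^{\cal M}_{tt}}\propto\sqrt{r_h(r-r_h)}$ with $r_h=e^{-\kappa_{r_h}N^{1/3}}$; see (\ref{GMtt})--(\ref{dPhi-LO-IR}).
\item For $\omega_\Phi^2$: the four types of contributions in (\ref{omegaPhisq-kinds-contributions}) are handled one at a time.
\begin{itemize}
\item Types (ii) and (iv) vanish identically, by (\ref{Omega13Omega7aantisymm}) and (\ref{Omega3bcOmega3dfantisymm}).
\item Type (iii) reduces to $\Omega^3_{24}\Omega^7_3\propto r_h^2$, by (\ref{Omega3bcOmega73antisymm}).
\item The leading type-(i) pieces $\Omega^1_{[2}\Omega^3_{46]}$ and $\Omega^1_{[2}\Omega^3_{34]}$ scale as $e^{-\kappa_{r_h}N^{1/3}/2}N^{-23/26}$. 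This is subleading to $d\Phi\sim e^{-\kappa_{r_h}N^{1/3}/2}N^{-5/9}$.
\end{itemize}
\end{itemize}
Hence $d\Omega_+\approx d\Phi$, as in (\ref{eq:dOmegaIR}). The $r_h^{1/2}$-suppressed $dt\wedge e^{245}$ term you quote is $d\Phi$'s own leading term, not what is left after a cancellation. Your ``main obstacle'' should be replaced by this term-by-term IR scaling check.

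\textbf{Step 2: the non-vanishing argument is too weak.} Here you go beyond the paper, which only records $d\Omega_-\neq0$ and never computes $W_2^-$ explicitly. Your argument that $W_2^-\neq0$ does not work as stated. By the relations you cite, the three large terms combine into $\frac{\Lambda^{(1)}_{756}}{\alpha_7}\,\sigma^{(1)}\wedge e^{56}$. Their derivative is therefore $\frac{\Lambda^{(1)}_{756}}{\alpha_7}\bigl(\omega_\Phi\wedge e^{56}-\sigma^{(1)}\wedge d(e^{56})\bigr)$. Proportionality of the coefficients is exactly what packages them this way, so it gives no protection against cancellation. You would have to check explicitly that the $(2,2)$ part of this form is non-zero.
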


\begin{proof}
Using (\ref{Transverse-conditions}), one sees:
\begin{eqnarray}
\label{dOmega+}
d\Omega_+ = d\Phi - \omega_\Phi^2,
\end{eqnarray}
wherein,
\begin{equation}
\label{dPhi-G2torsion}
d\Phi = 4\tau_0 *_7\Phi - 3\tau_1\wedge\Phi - *_7\tau_3;    
\end{equation}
the $\tau_{0,1,3}$ are as given in (\ref{taus}). Alternatively, from \cite{ACMS}, one knows that:
\begin{eqnarray}
\label{Phi-dPhi}
& & \hskip -0.3in \Phi= e^{127}+e^{347}+e^{567}+e^{135}-e^{146}-e^{236}-e^{245},\nonumber\\
& & \hskip -0.3in d\Phi= \left(e^{2135}+e^{2146}+e^{3127}+e^{3146}-e^{5127}-e^{5146}\right)\Omega^1_2 \nonumber\\
& & \hskip -0.3in+\left(e^{2734}+e^{2756}+e^{3712}+e^{3756}-e^{4712}-e^{4756}-e^{5712}-e^{5734}\right)\Omega^7_2 \nonumber\\
& & \hskip -0.3in + \left(e^{2317}-e^{2345}-e^{2517}-e^{2536}-e^{2617}-e^{2645}+e^{3517}-e^{3617}-e^{3645}+e^{5617}\right)\Omega^2_{23}\nonumber\\
& & \hskip -0.3in + \left(e^{2347}+e^{2315}+e^{2547}+e^{2647}+e^{2615}+e^{3547}-e^{3526}+e^{3647}+e^{3615}+e^{5647}\right)\Omega^3_{23}+e^{2415}\Omega^3_{24}\nonumber\\
& & \hskip -0.3in + \Biggl(e^{2367}+e^{2567}+e^{3567}-e^{3524}-e^{3624}-e^{5624}+e^{2357}-e^{2314}+e^{2514}-e^{2657}+e^{2614}-e^{3657}+e^{3614} \nonumber\\
& & \hskip -0.3in -e^{5614}-e^{5623}\Biggr)\Omega^5_{23}\nonumber\\
& & \hskip -0.3in \approx \left(e^{2347}+e^{2315}+e^{2547}+e^{2647}+e^{2615}+e^{3547}-e^{3526}+e^{3647}+e^{3615}+e^{5647}\right)\Omega^3_{23}+e^{2415}\Omega^3_{24}.
\end{eqnarray}

It was estimated in \cite{IITR-McGill-bulk-viscosity}: $|\log  (r_h) |=\kappa_{r_h}N^{1/3}, \kappa_{r_h}\equiv \frac{1}{{\cal O}(1)},$ i.e., $r_h=e^{-\kappa_{r_h}N^{1/3}}$ constant. Hence, in the IR,
\begin{eqnarray}
\label{dPhi-components}
& & \Omega^1_2 = \frac{200 \sqrt[4]{N} \sin ^2\left(\theta_2\right) \csc
   ^2\left(\theta_1\right) \cot (\theta_2)}{3 {g_s}^{3/4} M \log
   ({r_h}) {\cal D}},\nonumber\\
& &  \Omega^7_2=\frac{400 \sqrt[4]{N} \sin ^2\left(\theta_2\right) \csc
   ^2\left(\theta_1\right) \cot (\theta_2)}{3{g_s}^{3/4} M \log  (r_h) {\cal D}},\nonumber\\
 & & \Omega^2_{23}=\frac{16,927 \sqrt[4]{N} \sin \left(\theta_2\right) \csc
   ^4\left(\theta_1\right) \left({r_h}^2 \log
   ^2({r_h})-0.03 {r_h} \log
   ({r_h})-0.01\right)}{{g_s}^{3/4} M {\cal D} },\nonumber\\
& & \Omega^3_{23}=7.3 {g_s}^{7/4} M \sqrt[4]{\frac{1}{N}}  N_f 
   {r_h} \csc ^2\left(\theta_1\right) \csc \left(\theta_2\right) \log
   (N) \log ^2({r_h}),\nonumber\\
& & \Omega^3_{24}=8 {g_s}^{7/4} M \sqrt[4]{\frac{1}{N}}  N_f 
   \log ^2(N) \log ({r_h}),\nonumber\\
& & \Omega^5_{23}=-\frac{3096.2 \sqrt[4]{N} \sin \left(\theta_2\right) \csc
   ^2\left(\theta_1\right)}{{g_s}^{3/4} M  {\cal D}}.           
\end{eqnarray}
In the IR, $r\in[r_h, {\cal R}_{D5/\overline{D5}}] = r_h[1,1 + \frac{g_sM^2(c_1 + c_2 \log \tilde{r_h})}{N}]$. One thus sees that $\Omega^3_{24}$ is the most dominant generalized spin connection component in the IR and in the MQGP limit (\ref{eq:MQGP}).  As,
{\footnotesize
\begin{eqnarray}
\label{GMtt}
& & \hskip -0.8in g^{\cal M}_{tt} = \sqrt{\frac{1}{N}} r^2
   \left(1-\frac{{r_h}^4}{r^4}\right)\nonumber\\
   & & \hskip -0.8in \times\frac{\left(\frac{8 \pi 
   \left(-\frac{12 a^2 {g_s} M^2  N_f  ({c_1}+{c_2} \log
   ({r_h}))}{9 a^2+r^2}-\frac{9 M^2 \log (r) \left(2 {g_s}
    N_f  \log \left(\sin \left(\theta_1\right) \sin \left(\theta
   _2\right)\right)+12 {g_s}  N_f  \log (r)+6 {g_s}  N_f -2
   {g_s}  N_f  \log (4)+8 \pi \right) g_s\mathbb{Y}}{256 \pi ^3}\right)}{N}+3
   g_s\mathbb{Y}\right)}{8
   \sqrt[3]{3} \pi ^{7/6} \sqrt{{g_s}} \sqrt[3]{ N_f  \left(-\log
   \left(9 a^2 r^4+r^6\right)\right)+\frac{8 \pi }{{g_s}}-4  N_f 
   \log \left(\sin \left(\theta_1\right) \sin \left(\theta
   _2\right)\right)+4  N_f  \log (4)}}\nonumber\\
& & \hskip -0.8in + \frac{\left(9 b^2+1\right)^3 \left(4374 b^6+1035 b^4+9
   b^2-4\right) \beta  b^8 M \left(\frac{1}{N}\right)^{11/4} r^2 \Sigma _1
   \left(6 a^2+{r_h}^2\right) (r-{r_h})^2
   \left(1-\frac{{r_h}^4}{r^4}\right) \log ({r_h})\mathbb{Y}^{2/3}}{72 \sqrt[3]{3} \pi
   ^{13/6} \left(18 b^4-3 b^2-1\right)^5 \sqrt{{g_s}}  (\log N) ^2
    N_f  {r_h}^2 \alpha _{\theta_2}^3 \left(9
   a^2+{r_h}^2\right)},\nonumber\\
   & & 
\end{eqnarray}
}
where $\mathbb{Y}\equiv  \left( N_f 
   \left(-\log \left(9 a^2 r^4+r^6\right)\right)+\frac{8 \pi }{{g_s}}-4
    N_f  \log \left(\sin \left(\theta_1\right) \sin \left(\theta
   _2\right)\right)+4  N_f  \log (4)\right)$,\\ $\Sigma_1 \equiv N^{6/5}\left(19683\sqrt{6}\sin^6\theta_1 + 6642\sin\theta_2\sin^3\theta_1 - 40\sqrt{6}\sin^4\theta_2\right)$, we hence see that:
\begin{eqnarray}
\label{dPhi-LO-IR}
& & \hskip -0.5in \left.d\Phi\right|_{r\in{\rm IR}} \approx \left.\Omega^3_{24}e^{2415}\right|_{r\in{\rm IR}}=\frac{3^{1/3}}{2^{1/6}g_s^{7/12}\pi^{7/12}}\frac{{g_s}^{7/6} M  N_f  \sqrt{{r_h}} \log ^2(N)
   \sqrt{r-{r_h}} \log ({r_h}) \sqrt[3]{\cal D}}{\sqrt{N}}dt\wedge e^{245}\nonumber\\
& & \hskip -0.5in \sim \kappa_{r_h}^{4/3}\frac{g_s^{1/12}M N_f^{4/3}e^{-\frac{\kappa_{r_h}}{2}N^{1/3}}\log^2N}{N^{5/9}}dt\wedge e^{245},
\end{eqnarray}
where $|\log r_h|=\kappa_{r_h}N^{1/3}, \kappa_{r_h}\equiv\frac{1}{3(6\pi)^{1/3}(g_sN_f)^{2/3}(g_sM^2)^{1/3}}$ \cite{IITR-McGill-bulk-viscosity}. One hence sees that in the IR, $d\Phi$ is IR- and large-$N$ suppressed. 

Further, it was shown in \cite{ACMS} that $\omega_\Phi^2$ consists of the following four kinds of contributions:
\begin{eqnarray}
\label{omegaPhisq-kinds-contributions}
& (i) & \Omega^1_3\Omega^3_{bc}e^{31bc}, \Omega^1_a\Omega^3_{3b}e^{a13b}, 
\Omega^1_a\Omega^3_{bc} e^{a1bc};\nonumber\\
& (ii) & \Omega^1_{[3}\Omega^7_{a]} e^{31a7};\nonumber\\
& (iii) & \Omega^3_{[bc}\Omega^7_{a]} e^{bca7}, \Omega^3_{[3c}\Omega^7_{a]}e^{3ca7},
\Omega^3_{[bc}\Omega^7_{3]}e^{bc37};\nonumber\\
& (iv) & \Omega^3_{[bc}\Omega^3_{df]}e^{bcdf}. 
\end{eqnarray}
It is further understood that we are working in the IR in the following.

\begin{itemize}
\item (i) From \cite{ACMS}, ones notes ($\kappa_{\Omega^1_a\Omega^3_{bc}} \ll 1$):\\
(I)
{\scriptsize
\begin{eqnarray}
\label{Omega13Omega3ab}
& & \hskip -0.8in \Omega^1_3\Omega^3_{23} = \frac{{5.1\kappa_{\Omega^1_a\Omega^3_{bc}}\times10^{-2}} {g_s}  (\log N) 
    N_f  {r_h}^2 \sin \left(\theta_2\right) \csc ^4\left(\theta
   _1\right) \log ({r_h}) (1. {r_h} \log ({r_h})-0.9)
   \cot (\theta_2)}{(r-{r_h}) {\cal D}};\nonumber\\
& & \hskip -0.8in \Omega^1_3\Omega^3_{24} = \frac{{2.7\kappa_{\Omega^1_a\Omega^3_{bc}}\times10^{2}} {g_s}  (\log N) ^2
    N_f  {r_h} \sin ^2\left(\theta_2\right) \csc ^2\left(\theta
   _1\right) \left({r_h}^2 \log ^2({r_h})-0.1 {r_h} \log
   ({r_h})-0.002\right) \cot (\theta_2)}{(r-{r_h})
   {\cal D}},\nonumber\\
& & \hskip -0.8in \Omega^1_3\Omega^3_{25} = \frac{{0.95\kappa_{\Omega^1_a\Omega^3_{bc}}} {g_s}  (\log N) 
    N_f  {r_h}^2 \sin \left(\theta_2\right) \csc ^4\left(\theta
   _1\right) \log ({r_h}) ({r_h} \log ({r_h})+0.029) \cot
   (\theta_2)}{(r-{r_h}) {\cal D}},\nonumber\\
& & \hskip -0.8in \Omega^1_3\Omega^3_{26} = -\frac{{0.55\kappa_{\Omega^1_a\Omega^3_{bc}}} {g_s}  (\log N) 
    N_f  {r_h}^2 \sin \left(\theta_2\right) \csc ^4\left(\theta
   _1\right) \log ({r_h}) ({r_h} \log ({r_h})-0.05) \cot
   (\theta_2)}{(r-{r_h}) {\cal D}},\nonumber\\
& & \hskip -0.8in \Omega^1_3\Omega^3_{34} = -\frac{{6.1\kappa_{\Omega^1_a\Omega^3_{bc}}\times10^{-4}} {g_s}^{9/2}
    (\log N) ^3 M^2 \sqrt{\frac{1}{N}}  N_f ^3 {r_h} \csc
   ^2\left(\theta_1\right) \log ^2({r_h}) ({r_h} \log
   ({r_h})-0.7) ({r_h} \log ({r_h})+0.004) \cot
   (\theta_2) (1.  (\log N) +263. {r_h} \log
   ({r_h}))}{(r-{r_h}) {\cal D}},\nonumber\\
& & \hskip -0.8in \Omega^1_3\Omega^3_{35}=\frac{{3.9\kappa_{\Omega^1_a\Omega^3_{bc}}\times10^{-2}} {g_s}  (\log N) 
    N_f  {r_h} \sin \left(\theta_2\right) \csc ^4\left(\theta
   _1\right) \left({r_h}^2 \log ^2({r_h})+0.06 {r_h} \log
   ({r_h})-{1.710^{-16}}\right) \cot
   (\theta_2) (\log (N)+263. {r_h} \log
   ({r_h}))}{(r-{r_h}) {\cal D}},\nonumber\\
& & \hskip -0.8in \Omega^1_3\Omega^3_{36}=\frac{{0.17\kappa_{\Omega^1_a\Omega^3_{bc}}} {g_s}  (\log N) 
    N_f  {r_h} \sin \left(\theta_2\right) \csc ^4\left(\theta
   _1\right) ({r_h} \log ({r_h})-0.05) ({r_h} \log
   ({r_h})+0.09) \cot (\theta_2) ({2.22\times10^{-16}}  (\log N) -33.3333 {r_h} \log
   ({r_h}))}{(r-{r_h}) {\cal D}},\nonumber\\
& & \hskip -0.8in \Omega^1_3\Omega^3_{45} = 10^{-2}\kappa_{\Omega^1_a\Omega^3_{bc}}\frac{{g_s}^{9/2}  (\log N) ^3 M^2 \sqrt{\frac{1}{N}}  N_f ^3
   {r_h} \csc ^2\left(\theta_1\right) \log ^2({r_h}) ({r_h}
   \log ({r_h})+0.004) ({r_h} \log ({r_h})+0.04) \cot
   (\theta_2) (\log (N)+263. {r_h} \log
   ({r_h}))}{ (r-{r_h}) {\cal D}},\nonumber\\
& & \hskip -0.8in \Omega^1_3\Omega^3_{46} = -\kappa_{\Omega^1_a\Omega^3_{bc}}\frac{{6.3\times10^{-3}} {g_s}^{9/2}
    (\log N) ^3 M^2 \sqrt{\frac{1}{N}}  N_f ^3 {r_h} \csc
   ^2\left(\theta_1\right) \log ^2({r_h}) ({r_h} \log
   ({r_h})-0.05) ({r_h} \log ({r_h})-0.008) \cot
   (\theta_2) (\log (N)+263. {r_h} \log
   ({r_h}))}{(r-{r_h}) {\cal D}},\nonumber\\
& & \hskip -0.8in \Omega^1_3\Omega^3_{56}=-\frac{{0.26\kappa_{\Omega^1_a\Omega^3_{bc}}} {g_s}  N_f 
   {r_h} \sin \left(\theta_2\right) \csc ^4\left(\theta_1\right)
   \log (N) ({r_h} \log ({r_h})-0.05) ({r_h} \log
   ({r_h})+0.09) \cot (\theta_2) ({2.22\times10^{-16}}  (\log N) -33.33 {r_h} \log
   ({r_h}))}{(r-{r_h}) {\cal D}}.\nonumber\\
   & &
\end{eqnarray}
}
(II)
{\scriptsize
\begin{eqnarray*}
& & \hskip -0.8in \Omega^1_2\Omega^3_{34}+\Omega^1_4\Omega^3_{23} = -500 {g_s}^{\frac{7}{2}}  (\log N) ^3 M^2
   \sqrt{\frac{1}{N}}  N_f ^2 \csc ^2\left(\theta_1\right)r_h^2 \log^3({r_h}) \csc (\theta_2),\nonumber\\
& & \hskip -0.8in  \Omega^1_2\Omega^3_{35}+\Omega^1_5\Omega^3_{23} = 8 {g_s}\kappa_{\Omega^1_a\Omega^3_{bc}}  N_f \nonumber\\
& & \hskip -0.8in = \times \frac{ {r_h} \sin \left(\theta_2\right) \csc
   ^4\left(\theta_1\right) \log (N) \cot (\theta_2)
   \left({r_h}^2 ({0.21}-{0.14} \log (N))
   \log ^2({r_h})+{r_h} ({0.02}
   \log (N)-{0.04}) \log
   ({r_h})+{8.3\times10^{-3}} \log
   (N)-{2.3} {r_h}^3 \log
   ^3({r_h})\right)}{(r-{r_h}) {\cal D}},\nonumber\\
& & \hskip -0.8in  \Omega^1_2\Omega^3_{36}+\Omega^1_6\Omega^3_{23} = -\frac{264.9  {g_s}  (\log N)   N_f  {r_h} \sin \left(\theta
   _2\right) \csc ^4\left(\theta_1\right) \log ({r_h}) \cot
   (\theta_2)}{{\cal D}},\nonumber\\
& & \hskip -0.8in  \Omega^1_3\Omega^3_{34} = -\frac{{6.1\times10^{-4}\kappa_{\Omega^1_a\Omega^3_{bc}} } {g_s}^{9/2}
    (\log N) ^3 M^2 \sqrt{\frac{1}{N}}  N_f ^3 {r_h} \csc
   ^2\left(\theta_1\right) \log ^2({r_h}) ({r_h} \log
   ({r_h})+0.004) (1. {r_h} \log ({r_h})-0.7) \cot
   (\theta_2) (1.  (\log N) +263. {r_h} \log
   ({r_h}))}{(r-{r_h}) {\cal D}},\nonumber\\
& & \hskip -0.8in  \Omega^1_3\Omega^3_{35} =  \frac{{0.04\kappa_{\Omega^1_a\Omega^3_{bc}} } {g_s}  (\log N) 
    N_f  {r_h} \sin \left(\theta_2\right) \csc ^4\left(\theta
   _1\right) \left({r_h}^2 \log ^2({r_h})+0.06 {r_h} \log
   ({r_h})\right) \cot
   (\theta_2) (\log (N)+263. {r_h} \log
   ({r_h}))}{(r-{r_h}) {\cal D}},\nonumber\\
& & \hskip -0.8in  \Omega^1_3\Omega^3_{36} = \frac{{0.17\kappa_{\Omega^1_a\Omega^3_{bc}} } {g_s}  N_f 
   {r_h} \sin \left(\theta_2\right) \csc ^4\left(\theta_1\right)
   \log (N) ({r_h} \log ({r_h})-0.05) ({r_h} \log
   ({r_h})+0.09) \cot (\theta_2) ({2.2\times10^{-16}} \log (N)-33.3333 {r_h} \log
   ({r_h}))}{(r-{r_h}) {\cal D}},\nonumber\\
\end{eqnarray*}
{\scriptsize
\begin{eqnarray}  
\label{Omega1aOmega33b} 
& & \hskip -0.8in  \Omega^1_4\Omega^3_{35}-\Omega^1_5\Omega^3_{34} = \frac{\kappa_{\Omega^1_a\Omega^3_{bc}} }{(r-{r_h}) {\cal D}^2}\Biggl\{4.5 {g_s}^{9/2}  (\log N) ^3 M^2 \sqrt{\frac{1}{N}}
    N_f ^3 {r_h} \csc ^2\theta_1 \log
   ^2({r_h}) \Biggl(\cot (\theta_2) 
\biggl({g_s}  N_f 
   \log (\alpha_{\theta_1} ) \mathbb{X}_1+{g_s}  N_f  \log (\alpha_{\theta_2} )
  \mathbb{X}_1\nonumber\\
& & \hskip -0.8in-{0.14} {g_s}
    (\log N) ^2  N_f  {r_h}^2 \log
   ^2({r_h})+{3.6\times10^{-16}} {g_s}
    (\log N) ^2  N_f +{0.74}
   {g_s}  (\log N)   N_f  {r_h}^2 \log
   ^2({r_h})\nonumber\\
& & \hskip -0.8in+{0.4} {g_s}
    (\log N)   N_f  {r_h}^2 \log ^2({r_h}) \log
   r_h -{10^{-15}}
   {g_s}  (\log N)   N_f  \log
   r_h  -
   {g_s}  (\log N)   N_f  {r_h} \log ({r_h}) \log
   r_h +0.02{g_s}  (\log N)   N_f 
   {r_h} \log ({r_h})\nonumber\\
& & \hskip -0.8in-{6.6} {g_s}  (\log N) 
    N_f +{6.2\times10^{-2}} {g_s}  N_f 
   {r_h} \log ^2(N) \log ({r_h})+{0.36} {g_s}  N_f  {r_h}^2 \log
   ^2({r_h})-{4} {g_s}
    N_f  {r_h}^2 \log ^2({r_h}) \log
  ({r_h})-{0.2}
   {g_s}  N_f  {r_h} \log ({r_h}) \log
   r_h \nonumber\\
& & \hskip -0.8in+{0.23}
   {g_s}  N_f  {r_h} \log ({r_h})-{2.3}  (\log N)  {r_h}^2 \log
   ^2({r_h})-{0.85}  (\log N) 
   {r_h} \log ({r_h})+{7.7\times10^{-2}}
    (\log N) +{14} {r_h}^2 \log
   ^2({r_h})+{1.2} {r_h} \log
   ({r_h})\biggr)\nonumber\\
& & \hskip -0.8in+\csc \theta_2\biggl({r_h}^2 \log
   ^2({r_h}) \biggl(-{8.9\times10^{-2}}
   {g_s}  (\log N)   N_f  \log \biggl(\sin(\theta_1) \sin
   (\theta_2)\biggr)+{g_s} ({0.57}-{0.1}
    (\log N) )  N_f  \log
   r_h +{5.4\times10^{-14}}
   {g_s}  (\log N)   N_f \nonumber\\
& & \hskip -0.8in+{0.6} {g_s}  N_f  \log \biggl(\sin(\theta_1) \sin
   (\theta_2)\biggr)-{0.4}
   {g_s}  N_f +{0.7}
    (\log N) -{6.8}\biggr)+{r_h}
   \log ({r_h}) \biggl(-{3.6\times10^{-2}}
   {g_s}  (\log N)   N_f -{5.8\times10^{-2}} {g_s}  N_f  \log
   r_h +{0.16}
   {g_s}  N_f +{0.5}
    (\log N) \nonumber\\
& & \hskip -0.8in+6\times 10^{-3}\biggr)+ (\log N) 
   \biggl(-{1.4\times10^{-3}} {g_s}  (\log N) 
    N_f +{4.3\times10^{-3}} {g_s}  N_f 
   \log  (r_h)  +{1.3\times10^{-2}}
   {g_s}  N_f \nonumber\\
& & \hskip -0.8in+{2.2\times10^{-2}}\biggr)+{r_h}^3 \log ^3({r_h})
   \biggl(-{1.7} {g_s}  N_f 
   \log ({r_h})-{2.7}
   {g_s}  N_f +{9.1}\biggr)\biggr)\Biggr)\Biggr\},\nonumber\\
& & \hskip -0.8in  \Omega^1_4\Omega^3_{36} - \Omega^1_6\Omega^3_{34} = -\frac{3 {g_s}^{9/2}  (\log N) ^3 M^2 \sqrt{\frac{1}{N}}
    N_f ^3 {r_h} \csc ^2\left(\theta_1\right) \csc \left(\theta
   _2\right) \log ^3({r_h})}{\cal D},\nonumber\\
& & \hskip -0.8in  \Omega^1_5\Omega^3_{36} - \Omega^1_6\Omega^3_{35} \nonumber\\
& & \hskip -0.8in= -\frac{{2\times10^{-14}} {g_s}
    (\log N)   N_f  {r_h} \sin \left(\theta_2\right) \csc
   ^4\left(\theta_1\right) ({r_h} \log ({r_h})-0.05) ({r_h} \log ({r_h})+0.09) \cot (\theta_2)
   ({2.2\times10^{-16}}
    (\log N) -33.3 {r_h} \log ({r_h}))}{(r-{r_h}) {\cal D}},
\end{eqnarray}
}}
where
{\scriptsize
\begin{eqnarray}
\label{Xs}
& & \hskip -0.8in \mathbb{X}_1 \equiv  \left(({0.3}  (\log N) -{2.8}) {r_h}^2 \log
   ^2({r_h})+(-{0.1}
    (\log N) -{0.2}) {r_h} \log
   ({r_h})-{7.2\times10^{-3}}
    (\log N) \right),\nonumber\\
& &
\end{eqnarray}
}
(III)
{\footnotesize
\begin{eqnarray*}
& & \hskip -0.8in  \Omega^1_2\Omega^3_{34} + \Omega^1_3\Omega^3_{42} + 
 \Omega^1_4\Omega^3_{23}\nonumber\\
 & &  \hskip -0.8in {\scriptsize = -\frac{{270\kappa_{\Omega^1_a\Omega^3_{bc}} } {g_s}
    (\log N) ^2  N_f  {r_h} \sin ^2\left(\theta_2\right) \csc
   ^2\left(\theta_1\right) \left(1. {r_h}^2 \log
   ^2({r_h})-0.12 {r_h} \log ({r_h})-0.002\right)
   \cot (\theta_2)}{(r-{r_h}) {\cal D}},}\nonumber\\
& &  \hskip -0.8in \Omega^1_2\Omega^3_{35} + \Omega^1_3\Omega^3_{52} + 
 \Omega^1_5\Omega^3_{23}  =    \frac{3.1 \kappa_{\Omega^1_a\Omega^3_{bc}} {g_s}  (\log N)   N_f  {r_h} \sin
   \left(\theta_2\right) \csc ^4\left(\theta_1\right) \cot
   (\theta_2)}{(r-{r_h}) {\cal D}}\nonumber\\
& & \hskip -0.8in \times  \left(({2.1}-{0.36}  (\log N) ) {r_h}^2 \log
   ^2({r_h})+({0.09}-{0.02}  (\log N) ) {r_h} \log
   ({r_h})-{2.4\times10^{-3}}
    (\log N) +{0.28}
   {r_h}^3 \log ^3({r_h})\right),\nonumber\\
& &  \hskip -0.8in \Omega^1_2\Omega^3_{36} + \Omega^1_3 \Omega^3_{62} + 
 \Omega^1_6\Omega^3_{23} = -\frac{1.4 \kappa_{\Omega^1_a\Omega^3_{bc}}  {g_s}  (\log N)   N_f  {r_h} \sin \left(\theta
   _2\right) \csc ^4\left(\theta_1\right) \cot (\theta_2)
   }{(r-{r_h}) {\cal D}}\nonumber\\
& & \hskip -0.8in \times \left(({3.6\times10^{-2}}
    (\log N) -{0.64}) {r_h}^2 \log
   ^2({r_h})+({1.3\times10^{-2}}
    (\log N) -{0.21}) {r_h} \log
   ({r_h})+{0.012}
    (\log N) \right),\nonumber\\
 & &  \hskip -0.8in \Omega^1_2\Omega^3_{45} + \Omega^1_4\Omega^3_{52} + 
 \Omega^1_5\Omega^3_{24}\nonumber\\
& &  \hskip -0.8in = -\frac{{320\kappa_{\Omega^1_a\Omega^3_{bc}} } {g_s}  (\log N) ^2
    N_f  {r_h} \sin ^2\left(\theta_2\right) \csc ^2\left(\theta
   _1\right) \left(1. {r_h}^2 \log ^2({r_h})-0.12 {r_h} \log
   ({r_h})-0.002\right) \cot (\theta_2)}{(r-{r_h}) {\cal D}},\nonumber\\
\end{eqnarray*}
\begin{eqnarray} 
\label{Omega1aOmega3bc}  
   & &  \hskip -0.8in \Omega^1_2\Omega^3_{46} + \Omega^1_4\Omega^3_{62} + 
 \Omega^1_6\Omega^3_{24} = \frac{0.02 {g_s}^{9/2} M^2 \sqrt{\frac{1}{N}}  N_f ^3 \csc
   ^2\left(\theta_1\right) \log ^4(N) \log ^2({r_h}) \cot
   (\theta_2)}{{\cal D}},\nonumber\\
 & &  \hskip -0.8in \Omega^1_3\Omega^3_{45} + \Omega^1_4\Omega^3_{53} + 
 \Omega^1_5\Omega^3_{34} =   -\frac{9\kappa_{\Omega^1_a\Omega^3_{bc}} }{5(r-{r_h})
   {\cal D}^2}\Biggl\{1.5 {g_s}^{9/2} M^2 \sqrt{\frac{1}{N}} {N_f}^3
   {r_h}^2 \csc ^2\left(\theta_1\right) \log ^3(N) \log
   ^3({r_h}) \Biggl(-{0.28} {g_s} {N_f} \csc \left(\theta_2\right)
   \log (N)\nonumber\\
   & &  \hskip -0.8in +{0.11}
   {g_s} {N_f} \csc \left(\theta_2\right) \log (N) \log
   ({r_h})-{0.53}
   {g_s} {N_f} \log (N) \log ({r_h}) \cot
   (\theta_2)  +{0.071}
   {g_s} {N_f} \csc \left(\theta_2\right) \log (N) \log (\sin
   (\theta_1)\sin (\theta_2))\nonumber\\
   & &  \hskip -0.8in -{0.11} {g_s} {N_f} \log ^2(N) \cot
   (\theta_2)+{0.29}
   {g_s} {N_f} \log (N) \cot
   (\theta_2)   -{0.36}
   {g_s} {N_f} \log (N) \cot (\theta_2) \log (\sin(\theta_1)\sin
   (\theta_2))\nonumber\\
& &  \hskip -0.8in -{0.29} {g_s} {N_f} \csc \left(\theta
   _2\right)  -{0.64}
   {g_s} {N_f} \csc \left(\theta_2\right) \log
   ({r_h})+{0.38}
   {g_s} {N_f} \log ({r_h}) \cot
   (\theta_2)\nonumber\\
& &  \hskip -0.8in   -{0.43}
   {g_s} {N_f} \csc \left(\theta_2\right) \log (\sin
   (\theta_1) \sin(\theta_2))+{0.26} {g_s} {N_f} \cot (\theta_2) \log
   (\sin (\theta_1) \sin(\theta_2))  +{0.78} {g_s} {N_f} \cot
   (\theta_2)+{0.58}
   \csc \left(\theta_2\right)-{1.1} \csc \left(\theta_2\right) \log
   (N)\nonumber\\
& &   \hskip -0.8in  +{0.43} \log (N)
   \cot (\theta_2)-{0.49} \cot (\theta_2)\Biggr) + {\cal O}(r_h^3)\Biggr\},\nonumber\\
& &  \hskip -0.8in \Omega^1_3\Omega^3_{46} + \Omega^1_4\Omega^3_{63} + 
 \Omega^1_6\Omega^3_{34} = {\scriptsize\frac{9\kappa_{\Omega^1_a\Omega^3_{bc}} }{5(r-{r_h})
   {\cal D}^2}\Biggl\{0.9 {g_s}^{9/2} M^2 \sqrt{\frac{1}{N}}
   {N_f}^3 {r_h}^2 \csc ^2\left(\theta _1\right) \log
   ^3(N) \log ^3({r_h})}\nonumber\\
& &  \hskip -0.8in   \Biggl(-{0.53} {g_s} {N_f} \csc \left(\theta
   _2\right) \log (N)-{0.01} {g_s} {N_f} \csc \left(\theta
   _2\right) \log (N) \log
   ({r_h})\nonumber\\
& &  \hskip -0.8in -{0.21} {g_s} {N_f} \log (N) \log
   ({r_h}) \cot
   (\theta_2)-{0.05} {g_s} {N_f} \csc \left(\theta
   _2\right) \log (N) \log (\sin
   (\theta_1) \sin(\theta_2))\nonumber\\
& &  \hskip -0.8in -{5.3} {g_s} {N_f} \log ^2(N) \cot
   (\theta_2)+{0.3} {g_s} {N_f} \log (N) \cot
   (\theta_2)+{0.46} {g_s} {N_f} \csc \left(\theta
   _2\right)\nonumber\\
& &  \hskip -0.8in -{8.8} {g_s} {N_f} \csc \left(\theta
   _2\right) \log
   ({r_h})+{0.28} {g_s} {N_f} \log ({r_h})
   \cot (\theta_2)\nonumber\\
& &  \hskip -0.8in -{0.04} {g_s} {N_f} \csc \left(\theta
   _2\right) \log (\sin
   (\theta_1)\sin
   (\theta_2))+{0.16} {g_s} {N_f} \cot
   (\theta_2) \log (\sin
   (\theta_1)\sin(\theta_2))\nonumber\\
& &  \hskip -0.8in +{0.57} {g_s} {N_f} \cot
   (\theta_2)-{0.69} \csc \left(\theta
   _2\right)-{0.42} \csc \left(\theta _2\right) \log
    N -{0.14}
   \log (N) \cot
   (\theta_2)-{1.5} \cot
   (\theta_2)\Biggr)+{\cal O}(r_h^3)\Biggr\},\nonumber\\ 
& & \hskip -0.8in \Omega^1_4 \Omega^3_{56} + \Omega^1_5 \Omega^3_{64} + 
 \Omega^1_6\Omega^3_{45} = \frac{9\kappa_{\Omega^1_a\Omega^3_{bc}} }{5(r-{r_h})
   {\cal D}^2}\nonumber\\
& & \hskip -0.8in \times\Biggl\{1.4 {g_s}^{9/2} \log N ^3 {(\log r_h})^3 M^2
   \sqrt{\frac{1}{N}} {N_f}^3 {r_h}^2 \csc
   ^2\left(\theta _1\right) \Biggl(\cot (\theta_2) \Biggl[{0.03\times} {g_s}
   \log N ^2 {N_f}+{0.14} {g_s} \log N  \log r_h 
   {N_f}+{g_s} ({0.09} \log N\nonumber\\
& & \hskip -0.8in   -{0.12}) {N_f} \log (\sin (\theta_1)
   \sin (\theta_2))-{0.31} {g_s} \log N 
   {N_f}-{0.28} {g_s}
   \log r_h  {N_f}+{0.07} {g_s} {N_f}-{0.43} \log N +{1.3}\Biggr]\nonumber\\
& & \hskip -0.8in+\csc \left(\theta _2\right)
   (-{0.25} {g_s}
   \log N  \log r_h  {N_f}+{g_s}
   ({0.18}-{0.14} \log N ) {N_f} \log (\sin
   (\theta_1) \sin (\theta_2))+{0.16} {g_s} \log N 
   {N_f}\nonumber\\
& & \hskip -0.8in+{0.27} {g_s}
   \log r_h  {N_f}+{0.25} {g_s} {N_f}+{0.43} \log N -{0.98})\Biggr)\Biggr\}\nonumber\\
& &    
\end{eqnarray}
}
where ${\cal D}\equiv (-3
   {g_s}  N_f  \log ({r_h})-2 {g_s}  N_f  \log (\sin
   (\theta_1) \sin (\theta_2))+2.08 {g_s}
    N_f +12.57)$

\item From \cite{ACMS}, one can show:
\begin{equation}
\label{Omega13Omega7aantisymm}
\Omega^1_{[3}\Omega^7_{a]} = 0, a= 2, 4, 5, 6.
\end{equation}
\item
From \cite{ACMS}, one can further show:
\begin{eqnarray}
\label{Omega3bcOmega7aantisymm}
& & \Omega^3_{[bc}\Omega^7_{a]} = \Omega^3_{[3c}\Omega^7_{a]} = 0,     
\end{eqnarray}
$a, b, c = 2, 4, 5, 6$.
and upon using $|\log r_h|=\kappa_{r_h}N^{1/3}, \kappa_{r_h}=\frac{1}{3(6\pi)^{1/3}\left(g_s N_f\right)^{2/3}\left(g_s M^2\right)^{1/3}}$ \cite{IITR-McGill-bulk-viscosity}, the most dominant term in the MQGP limit (\ref{eq:MQGP}) is given by:
\begin{eqnarray}
\label{Omega3bcOmega73antisymm}
& & \Omega^3_{bc}\Omega^7_{3} e^{bc37}\approx \Omega^3_{24}\Omega^7_3 e^{2437} =  \kappa_{\Omega^3_{24}\Omega^7_{3}}\left(\log N\right)^2 r_h^2|\log r_h|\csc^2\theta_1 \sin(2\theta_2)e^{2437}\nonumber\\
& & = \kappa_{r_h}^2 \kappa_{\Omega^3_{24}\Omega^7_{3}}\left(\log N\right)^2 e^{-2\kappa_{r_h}N^{1/3}}|\log r_h|\csc^2\theta_1 \sin(2\theta_2)e^{2437},
\end{eqnarray}
where $\kappa_{\Omega^3_{24}\Omega^7_{3}}$ is a numerical constant, and it is assumed that $M, N_f = {\cal O}(1)$ (e.g., near the QCD-inspired values of $(g_s, M, N_f) = (0.1, 3, 3)$), corresponding to which $N\equiv {\cal O}(10^2) - {\cal O}(10^3)$. Hence, $\Omega^3_{bc}\Omega^7_{3} e^{bc37}$ is exponentially suppressed in the MQGP limit (\ref{eq:MQGP}) (with an additional $N^{-2/9}$ factor from $|\log r_h|^{-2/3}$ upon writing $e^{7}=\sqrt{g^{\cal M}_{x^{10}x^{10}}}\,dx^{10}$).

\item 
It was shown in \cite{ACMS} that:
\begin{equation}
\label{Omega3bcOmega3dfantisymm}
\Omega^3_{[bc}\Omega^3_{df]}e^{bcdf} = 0.
\end{equation}
\end{itemize}
From (\ref{Omega13Omega3ab}) - (\ref{Omega3bcOmega3dfantisymm}), we see that in the large-$N$ MQGP limit (\ref{eq:MQGP}), the most dominant term in $d\omega_\Phi^2$ term is given by:
\begin{eqnarray}
\label{omegaPhi squared}
& & \left.\omega_\Phi^2\right|_{r\in{\rm IR}} = \left(\Omega^1_{[2}\Omega^3_{46]}\right)e^{2146} + \left(\Omega^1_{[2}\Omega^3_{34]}\right)e^{2134}\nonumber\\
& & \sim \frac{\kappa_{r_h}^{11/6}}{100}g_s^{15/4}N_f^{7/3}M \left(\log N\right)^2\frac{e^{-\frac{\kappa_{r_h}}{2}}N^{1/3}}{N^{23/26}}\csc^2\theta_1\cot\theta_2dt\wedge e^{24}\left(e^3, e^6\right).
\end{eqnarray}
Therefore, from (\ref{dPhi-LO-IR}) and (\ref{omegaPhi squared}),
\begin{eqnarray}
& & \left.d\Omega_+\right|_{r\in{\rm IR}}\approx d\Phi = \kappa_{r_h}^{4/3}\frac{g_s^{1/12}M N_f^{4/3}e^{-\frac{\kappa_{r_h}}{2}N^{1/3}}\log^2N}{N^{5/9}}dt\wedge e^{245}\nonumber\\
& & = \kappa_{r_h}^3g_s^{1/12}M N_f^{4/3}\frac{e^{-\frac{\kappa_{r_h}}{2}N^{1/3}}\log^2N}{\log r_h|^{5/3}}dt\wedge e^{245}
\label{eq:dOmegaIR}
\end{eqnarray}
from where one notes a large-$N$ (exponential) suppression of $d\Omega_+$, which we, therefore, refer to as corresponding to almost half-flat symplectic $SU(3)$ structure. We hence end up with a symplectic nearly half-flat transverse $SU(3)$ structure induced from Contact Structure that in turn is induced from the $G_2$-structure, in the IR. Dropping the $e^{-\kappa_{r_h}N^{1/3}}$-suppressed terms, yields:
\begin{equation}
\label{Ws}
W_\Phi^{SU(3)} = W_2^-.
\end{equation}
\end{proof}

\section{Weak magnetic field from KK reduction via $G_2$/symplectic half-flat $SU(3)$ structure}

Consider the positive three-form component: $\Phi_{\theta_i t\theta_j}$.
In the IR sector: $r\in[r_h,\sqrt{3}a]$ and using the results of \cite{OR4}, one obtains,
\begin{equation}
\label{B-M-theory-1}
\Phi_{\theta_i t\theta_j}\xrightarrow{r\in[r_h,\sqrt{3}a]}
N_f^{\rm eff}(r){\cal F}(\theta_1, \theta_2, r; N_f^{\rm eff}(r)),
\end{equation}
with $N_f^{\rm eff}(r) = N_f\Theta(r - {\cal R}_{D/\overline{D5}}), {\cal R}_{D5/\overline{D5}} = r_h + {\cal O}\left(\frac{1}{N}\right)$.
The corresponding "field strength" would contain $\left(d\Phi\right)_{r\theta_i t\theta_j}\ni\partial_{[r}\Phi_{\theta_i t\theta_j]}\ni \partial_r\left( N_f^{\rm eff}(r){\cal F}(\theta_1,\theta_2; N_f^{\rm eff}(r))\right)$. 
Its eleven-dimensional Hodge dual: 
generates a seven-form field strength. Now, this can be KK reduced to $\mathbb{R}^2(x^{1,2})$:
\begin{equation}
\label{KK-Hodge dPhi}
*_{11}d\Phi\ni\mathbb{F}\wedge\omega_1\wedge\tau_1\wedge\Omega_+,
\end{equation}
with $\mathbb{F}_{x^1x^2} =\partial_{[x^1}A_{x^2]} $, where:
\begin{equation}
\label{B-M-theory-2}
{\cal A}_{x^2}(x^1, x^2, r, \theta_1, \theta_2) \sim x^1 {\cal B}_{\rm KK}(r) \mathbb{X}(\theta_1,\theta_2, r),
\end{equation}
wherein ${\cal B}_{\rm KK}(r)$ is essentially localized around $r=r_h$ in the large-$N$ limit such as the MQGP limit (\ref{eq:MQGP}), and the constant harmonic one-form $\omega^1 = \omega_i dx^i = x^{1, 2, 3}$ where $\omega_i$'s are constants, even though non-normalizable on $M_5=\left(S^1_t \times \mathbb{R}^3\right)\times\mathbb{R}_{\geq0}$, but is normalizable on $M_{11} =M_5\times_w M_6^{\rm non-Kaehler}$ because of the vanishing of $M_{\rm eff}(r)$ and $N_f^{\rm eff}(r)$ in the UV ($r>\sqrt{3}a$)
 (Normalisability of $\omega_1$ follows from $\|\omega_1\|^2<\infty$.) Strictly speaking, as even in the UV it is assumed that $r<\left(4\pi g_s N\right)^{1/4}$, the non-normalizability in the UV is a non-issue even on $M_5$. We had assumed that the $r$-dependent magnetic field would damp out in the UV. This is now happily captured by $\mathcal{B}_{\rm KK}(r)\sim N_f^{\rm eff}\ '(r)$. 

In this section, we will obtain the Euler-Lagrange (EL) equation of motion (EOM) satisfied by ${\cal B}_{\rm KK}(r)$ and show that it is a non-linear power of the solution of the EL EOM satisfied by an IR-localized radial ${\cal B}_{D6}(r)$ turned on, on the world-volume of the corresponding type IIA flavor $D6$-branes such that ${\bf B}(r=r_h) e_{x^3} = B e_{x^3}, B$ being a weak external magnetic field. Hence, the type IIA $D6$-brane-valued IR-localized radial ${\cal B}_{D6}(r)$ and therefore ${\cal B}(r=r_h)$, are mapped non-linearly to the bulk KK field $\mathcal{B}(r)$ and its IR boundary value ${\cal B}_{\rm KK}(r)$.

Further, even though $d\tau_1\neq0$, but not only does it vanish in the UV for the same reason as the $M_{11}$-normalizability of $\omega_1$, but is large-$N$ suppressed in the IR. The reason is that one can show:
\begin{eqnarray}
\label{dtau1-1}
& & d\tau_1(r\in{\rm IR})\approx \Omega^3_{24}\Omega^1_a e^{a1} - \Omega^3_{25}\Omega^7_a e^{7a}\approx \Omega^3_{24}\Omega^1_4e^{41} - \Omega^3_{25}\Omega^7_6e^{76}\nonumber\\
& & \sim\frac{\left(g_s^{7/4}M_{\rm eff}N_f^{\rm eff}\log N\right)^2}{N^{3/4}}\sqrt{r_h}\sqrt{r - r_h}\left(\log  (r_h) \right)^{1/3} e^4\wedge dx^0 \nonumber\\ 
& & - 10^2\frac{r_h^3\left(\log  (r_h) \right)^{4}\left(g_s^{7/4}M_{\rm eff}N_f^{\rm eff}\log N\right)^2\cos\theta_1}{\left(\sin\theta_{1}\sin\theta_{2}\right)^2\sqrt{N}}dx^{10}\wedge e^6,\nonumber\\
& & 
\end{eqnarray}
which is large-$N$ suppressed.

The 11-dimensional spacetime has coordinates
\[
  x^A = \bigl(
    \underbrace{x^1,x^2,x^3}_{\mathbb{R}^3},\;
    r,\;
    \underbrace{t, \theta_1, \theta_2, \phi_1, \phi_2, \psi, x^{10}}_{M_7}
  \bigr).
\]
The metric is block-diagonal with $g_{rM}=0$ for all $M\neq r$:
\begin{equation}
  ds^2_{11}
  = g_{\mu\nu}(r,y)\,dx^\mu dx^\nu
  + \grr(r,y)\,dr^2
  + g_{mn}(r,y)\,dy^m dy^n,
\label{eq:metric}
\end{equation}
with all blocks depending on $(r,y^m)$.  The 11d volume element factorises:
\[
  \sqrt{\gone}
  = \sqrt{\gthree(r,y)}\cdot\sqrt{\grr(r,y)}\cdot\sqrt{\gseven(r,y)},
\]
and we define the orthonormal radial 1-form and the $\mathbb{R}^3$ volume form:
\[
  \er \equiv \sqrt{\grr}\,dr,
  \qquad
  \vthree \equiv \sqrt{\gthree}\,dx^1\wedge dx^2\wedge dx^3.
\]

\[
  A_1 = x^1\,{\cal B}_{\rm KK}(r),
  \qquad
  F_2 = dA_1,
  \qquad
  \omega_1 = \omega_\mu\,dx^\mu
  \quad(\omega_\mu = \mathrm{const},\;\mu=1,2,3).
\]
Therefore,
\[
  A_{x^2} = x^1\,{\cal B}_{\rm KK}(r)
\]
\[
  F_2 = dA_1
  = \Nf\,dx^1\wedge dx^2
  + x^1\partial_r\Nf\,dr\wedge dx^2.
\]

\noindent
The non-zero independent components of $F_2$ are:
\begin{align}
  (F_2)_{x^1\,x^2}       &= x^1\partial_r\Nf,
    \label{eq:F_xr}\\
  (F_2)_{r\,x^\mu}   &= \Nf\,\delta^{x^2}_\mu.
    \label{eq:F_xx}
\end{align}

Now,
\begin{align*}
  d^{(7)}\vp
  &= d\bigl(\sigma\wedge J\bigr) + d\Omega_+ \\
  &= d\sigma\wedge J
     - \sigma\wedge\underbrace{dJ}_{=\,0}
     + \underbrace{d\Omega_+}_{=\,0} \\
  &= d\sigma\wedge J
\end{align*}

\noindent
The torsion-class expansion $d^{(7)}\vp = \ta_0\ps+3\ta_1\wedge\vp+\str\ta_3$
therefore becomes
\[
  d\sigma\wedge J = \ta_0\ps + 3\ta_1\wedge\vp + \str\ta_3.
\]
Note $\tau_2$ enters only via
$d\ps = 4\ta_1\wedge\ps + \ta_2\wedge\vp$,
which is sourced by $d\Omega_-\neq 0$.

Since $g_{mn}=g_{mn}(r,y)$ the 3-form $\vp$ also depends on $r$, so
\[
  \partial_r\vp = \partial_r\sigma\wedge J + \sigma\wedge\partial_r J
               + \partial_r\Omega_+
  \;\in\;\Omega^3(M_7)
\]
carries non-trivial radial flow.
This piece will contribute to Term~A of $\sEl(d\vp)$ but will be killed by $\Pint$ and hence does not enter the torsion classes at fixed $r$.

We use the standard $\frac{1}{p!}$ convention for the squared norm of a
$p$-form $\omega$ on $M_7$:
\[
  |\omega|^2_7 \;\equiv\; \frac{1}{p!}\,\omega_{a_1\cdots a_p}\,\omega^{a_1\cdots a_p}
  \qquad\Longrightarrow\qquad
  \omega_{a_1\cdots a_p}\,\omega^{a_1\cdots a_p} = p!\,|\omega|^2_7.
\]
For the $G_2$ forms (Bryant--Joyce normalisation):
\begin{align}
  \vp_{abc}\vp^{abc} &= 3!\,|\vp|^2_7 = 6\times 7 = 42,
  \label{eq:norm_phi}\\
  \ps_{abcd}\ps^{abcd} &= 4!\,|\ps|^2_7 = 24\times 7 = 168.
  \label{eq:norm_psi}
\end{align}
The equality $|\ps|^2_7 = |\vp|^2_7 = 7$ follows because
$\ps=\str\vp$ and the 7d Hodge star preserves norms:
$|\str\omega|^2_7 = |\omega|^2_7$.

Apply $\str$ to both sides of $d^{(7)}\vp = \ta_0\ps+3\ta_1\wedge\vp+\str\ta_3$,
using the $G_2$ Hodge identities
$\str\ps = \vp$, $\str(\ta_1\wedge\vp) = -\iot{\ta_1^\sharp}\ps$,
$\str\str\ta_3 = +\ta_3$ (Riemannian $M_7$, 3-form):
\[
  \str(d^{(7)}\vp)
  = \ta_0\,\underbrace{\str\ps}_{=\,\vp}
    \;-\; 3\,\underbrace{\str(\ta_1\wedge\vp)}_{=\,\iot{\ta_1^\sharp}\ps}
    \;+\; \underbrace{\str\str\ta_3}_{=\,\ta_3}
  \;=\; \ta_0\vp \;-\; 3\,\iot{\ta_1^\sharp}\ps \;+\; \ta_3.
\]
\noindent
The three summands lie in $\mathbf{1}$, $\mathbf{7}$, $\mathbf{27}$ respectively,
so $\str d^{(7)}\vp\in\Omega^3(M_7)$ is the $G_2$-irrep decomposition.

\begin{lemma}[$\ps_{ma}{}^{bcd}$ contraction identity]
\label{lem:psicontract}
For any Riemannian seven-manifold with $G_2$ holonomy, \cite{Karigiannis}
\begin{equation}
\ps_{mbcd}\,\ps_a{}^{bcd} = 24\,g_{ma}.
\label{eq:psipsi}
\end{equation}
\end{lemma}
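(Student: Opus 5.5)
The identity $\ps_{mbcd}\,\ps_a{}^{bcd} = 24\,g_{ma}$ is purely pointwise and algebraic. It uses only the fact that $\vp$ is a positive three-form with $\ps=\str\vp$. Holonomy is never needed, so the same argument covers the torsionful $G_2$-structure of Lemma~1, where $\tau=\tau_1\oplus\tau_2\oplus\tau_3$. The plan is to fix a point $p\in M_7$ and pick a $g$-orthonormal coframe there. Since $G_2\subset SO(7)$ acts transitively on such adapted frames, one can take the coframe so that $\vp$ has the standard form quoted in \eqref{Phi-dPhi}, $\vp = e^{127}+e^{347}+e^{567}+e^{135}-e^{146}-e^{236}-e^{245}$. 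Then $\ps=\str\vp$ is the standard four-form with seven monomials $e^{3456}, e^{1256}, e^{1234}, e^{2467},\dots$, each with coefficient $\pm1$. Both sides of \eqref{eq:psipsi} are tensors, so it is enough to prove the identity in this frame.

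\textbf{Step 1: symmetry reduction.} The tensor $T_{ma}\equiv \ps_{mbcd}\ps_a{}^{bcd}$ is symmetric and $G_2$-invariant, because $\ps$ is $G_2$-invariant. Under $G_2$ we have $S^2(\mathbf{7})=\mathbf{1}\oplus\mathbf{27}$. The $\mathbf{27}$ contains no invariant, which is in the spirit of Appendix~\ref{HomG27270}. Hence $T_{ma}=\lambda\,g_{ma}$ for some constant $\lambda$.

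\textbf{Step 2: fix $\lambda$ by tracing.} Taking the trace gives
\[
7\lambda = \ps_{abcd}\ps^{abcd} = 168 ,
\]
using \eqref{eq:norm_psi}, so $\lambda=24$. As a cross-check I will compute one diagonal entry directly, say $m=a=1$. The index $1$ appears in exactly four of the seven monomials of $\ps$. Each such monomial contributes $3!=6$ ordered triples $(b,c,d)$, each with square $1$, which gives $4\times 6=24$. For $m\neq a$ the entry vanishes: two distinct monomials of $\ps$ that contain $m$ and $a$ respectively can never share the same complementary triple $\{b,c,d\}$. This is because any two of the seven four-element supports intersect in exactly two indices.

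\textbf{Main obstacle.} The only delicate point is the normalisation bookkeeping. One must keep the $\frac{1}{p!}$ norm convention consistent with \eqref{eq:norm_psi}, so that the trace is $168$ and not $7$. One must also confirm that the orientation and sign conventions for $\str$ used earlier in this section are compatible with the quoted $\vp$, so that $\ps$ really consists of seven unit-coefficient monomials. If the representation-theoretic Step 1 is considered insufficiently explicit, I would fall back on the direct combinatorial check over all $(m,a)$ in the adapted frame. That check is finite and settles the identity outright.
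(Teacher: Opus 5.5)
Your proposal is correct and follows the same route as the paper. $G_2$-invariance forces $\psi_{mbcd}\psi_a{}^{bcd}=\lambda\,g_{ma}$, which you justify via $S^2(\mathbf{7})=\mathbf{1}\oplus\mathbf{27}$, and tracing against $\psi_{abcd}\psi^{abcd}=168$ gives $\lambda=24$. Your adapted-frame check is a valid independent verification: each index lies in four of the seven supports of $\psi$, and distinct supports meet in exactly two indices. You are also right that only a $G_2$-structure is needed, not $G_2$ holonomy; this matters here because the paper applies the identity to the torsionful structure with $\tau=\tau_1\oplus\tau_2\oplus\tau_3$.
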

\begin{proof}
By $G_2$-invariance (the group $G_2$ preserves $\ps$ and $g$), the
symmetric tensor $X_{ma}\equiv\ps_{mbcd}\ps_a{}^{bcd}$ must be
proportional to $g_{ma}$:
\[
  \ps_{mbcd}\,\ps_a{}^{bcd} = \mu\,g_{ma}.
\]
Contract with $g^{ma}$ to find $\mu$:
\begin{align*}
  g^{ma}\,\ps_{mbcd}\,\ps_a{}^{bcd}
  &= \ps^a{}_{bcd}\,\ps_a{}^{bcd}
   = \ps_{abcd}\,\ps^{abcd}
   = 4!\,|\ps|^2_7
   = 24\times 7 = 168,\\
  &= \mu\,g^{ma}g_{ma} = \mu\times 7.
\end{align*}
Hence $\mu = 168/7 = 24$, so
\[
  \ps_{mbcd}\,\ps_a{}^{bcd} = 24\,g_{ma}.
\]
\end{proof}

First, let us show that: 
\begin{equation}
\label{phidotpsi0}
\vp_{bcd}\,\ps_a{}^{bcd} = 0.
\end{equation}
 As $\psi_{abcd} = \frac{\sqrt{g_7}}{3!}\epsilon_{abcdefg}G^{ee'}G^{ff'}G^{gg'}\varphi_{e'f'g'}$, implying $\varphi_{bcd}\psi_a^{\ \ bcd} = \frac{1}{6}\sqrt{g_7}\epsilon_{abcdefg}\varphi^{bcd}\varphi^{efg}$. Now, $\epsilon_{abcdefg}$ is anti-symmetric under $(bcd)\leftrightarrow(efg)$ whereas
$\varphi^{bcd}\varphi^{efg}$ is symmetric. Hence the claim. Alternatively, group theoretically, as $\varphi\in\Lambda^3_1; \psi\in\Lambda^4_1, \varphi_{bcd}\psi_a^{\ \ bcd}\in\Lambda^1_7$, transforms like $\mathbf{1}\otimes\mathbf{1}\otimes\mathbf{7} = \mathbf{7}$. Since 

Second, let us prove that 
\begin{lemma}
\begin{equation}
\label{itau1psipsi0}
(\iot{\ta_1^\sharp}\ps)_{bcd}\,\ps_a{}^{bcd}
= 24\,(\ta_1)_a.
\end{equation} 
\end{lemma}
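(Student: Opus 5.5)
The identity is purely algebraic and pointwise, so I will reduce it directly to the contraction identity of Lemma~\ref{lem:psicontract}, with no differential or global input. First I will fix the convention for the interior product: for a one-form $\ta_1$ with metric dual $\ta_1^\sharp$, set $(\iot{\ta_1^\sharp}\ps)_{bcd} = (\ta_1)^m\,\ps_{mbcd}$, with $(\ta_1)^m = g^{mn}(\ta_1)_n$. This contracts into the first slot of $\ps$, and it is the same convention already used in $\str(\ta_1\wedge\vp) = -\iot{\ta_1^\sharp}\ps$ above.

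Substituting, the left-hand side of (\ref{itau1psipsi0}) becomes
\[
(\iot{\ta_1^\sharp}\ps)_{bcd}\,\ps_a{}^{bcd} = (\ta_1)^m\,\ps_{mbcd}\,\ps_a{}^{bcd}.
\]
Here $(\ta_1)^m$ carries no $b,c,d$ indices, so it factors out of the sum over $b,c,d$. I will then apply (\ref{eq:psipsi}), $\ps_{mbcd}\ps_a{}^{bcd} = 24\,g_{ma}$, and lower the index: $(\ta_1)^m\,24\,g_{ma} = 24\,(\ta_1)_a$. That is the claim.

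The only point needing care is that Lemma~\ref{lem:psicontract} is phrased for $G_2$ holonomy, whereas $\mathcal{M}_7$ has torsion $\tau_1\oplus\tau_2\oplus\tau_3\neq0$. I will remark that the proof of Lemma~\ref{lem:psicontract} uses only pointwise $G_2$-invariance of $\ps$ and $g$ together with the normalisation $|\ps|^2_7=7$ from (\ref{eq:norm_psi}). Both hold for any $G_2$-structure, since at each point they are statements about the standard $G_2$ forms on $\mathbb{R}^7$ in an orthonormal coframe. The identity is therefore valid at every point of $\mathcal{M}_7$ whatever the torsion; one can also check it directly in the coframe $e^{1..7}$ using the explicit $\Phi$ and $\ps=\str\Phi$.

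Finally, I will note the consistency check with (\ref{phidotpsi0}). Under the $G_2$ decomposition, the map $\Lambda^3\to\Lambda^1$, $\beta\mapsto\beta_{bcd}\ps_a{}^{bcd}$, kills the $\mathbf{1}$ component (as $\vp$ shows) and, by Appendix~\ref{HomG27270}, also the $\mathbf{27}$ component. On the $\mathbf{7}$ component it is $24$ times the identity. Consequently, contracting $\str d^{(7)}\vp = \ta_0\vp - 3\iot{\ta_1^\sharp}\ps + \ta_3$ with $\ps_a{}^{bcd}$ isolates $-72(\ta_1)_a$, which is presumably how the lemma is used next. I do not expect any step to be a real obstacle; the main risk is only a convention mismatch in the slot or sign of the interior product, which I will fix at the outset.
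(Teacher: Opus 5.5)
Your proposal is correct and is essentially the paper's proof. Both write $(\iot{\ta_1^\sharp}\ps)_{bcd}=(\ta_1)^m\ps_{mbcd}$, invoke $\ps_{mbcd}\ps_a{}^{bcd}=24\,g_{ma}$ from Lemma~\ref{lem:psicontract}, and lower the index. Your remark that this contraction identity uses only pointwise $G_2$-invariance and the normalisation $|\ps|^2_7=7$ is a useful clarification the paper leaves implicit: it means the identity holds on the torsionful $\mathcal{M}_7$, not only under $G_2$ holonomy.
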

\begin{proof}
Writing $(\iot{\ta_1^\sharp}\ps)_{bcd} = (\ta_1)^m\ps_{mbcd}$:
\[
  (\iot{\ta_1^\sharp}\ps)_{bcd}\,\ps_a{}^{bcd}
  = (\ta_1)^m\,\ps_{mbcd}\,\ps_a{}^{bcd}
  = (\ta_1)^m\cdot 24\,g_{ma}
  = 24\,(\ta_1)_a.
\]
\end{proof}

Third, let us now prove: 
\begin{lemma}
\begin{equation}
\label{tau3dotpsi0}
(\ta_3)_{bcd}\,\ps_a{}^{bcd} = 0.
\end{equation}
\end{lemma}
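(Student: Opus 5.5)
The plan is to show that the contraction $T_a\equiv(\ta_3)_{bcd}\,\ps_a{}^{bcd}$ defines a $G_2$-equivariant linear map $\Lambda^3_{27}\to\Lambda^1_7$, and then to kill it by representation theory using the result of appendix~\ref{HomG27270}, $\mathrm{Hom}_{G_2}(\mathbf{27},\mathbf{7})=0$. The construction only uses the metric $g$ and the $G_2$-invariant four-form $\ps$, and $G_2$ preserves both. Hence the map $P:\ta_3\mapsto (\ta_3)_{bcd}\ps_a{}^{bcd}$ is pointwise a linear map from the fibre of $\Lambda^3_{27}$ to the fibre of $\Lambda^1\cong\Lambda^1_7$, and it commutes with the $G_2$ action. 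This is the same argument already used for $\vp_{bcd}\ps_a{}^{bcd}=0$ and for the contraction in Lemma~\ref{lem:psicontract}: the identity is purely algebraic and holds pointwise, so no holonomy assumption is needed, only the $G_2$-structure.

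First, recall that $\ta_3\in\Lambda^3_{27}$ by definition of the torsion decomposition. The defining conditions in the standard conventions of \cite{Karigiannis} are $\ta_3\wedge\vp=0$ and $\ta_3\wedge\ps=0$. Second, by Schur's lemma, since $\mathbf{27}$ and $\mathbf{7}$ are inequivalent irreducible $G_2$-modules, any equivariant map between them vanishes. Hence $P=0$, giving \eqref{tau3dotpsi0}.

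As a cross-check independent of Schur, I would also give a direct argument. Contracting with an arbitrary vector $v^a$, we get $v^a(\ta_3)_{bcd}\ps_a{}^{bcd}=6\,\langle \ta_3, \iot{v}\ps\rangle$. The three-form $\iot{v}\ps$ lies in $\Lambda^3_7$, since $\Lambda^3_7=\{\iot{v}\ps\}=\{\star_7(v^\flat\wedge\vp)\}$. The decomposition $\Lambda^3=\Lambda^3_1\oplus\Lambda^3_7\oplus\Lambda^3_{27}$ is orthogonal, because the three summands are pairwise non-isomorphic irreducibles and the inner product is $G_2$-invariant. Therefore $\langle\ta_3,\iot{v}\ps\rangle=0$ for every $v$, and so $T_a=0$. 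Equivalently, $\langle\ta_3,\iot v\ps\rangle\,\mathrm{vol}=\pm\,\ta_3\wedge(v^\flat\wedge\vp)$ up to a sign, and this vanishes because $\ta_3\wedge\vp=0$.

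The main obstacle is bookkeeping rather than conceptual: one must confirm that the $\ta_3$ appearing in \eqref{dPhi-G2torsion}, and in the expansion $\str d^{(7)}\vp=\ta_0\vp-3\iot{\ta_1^\sharp}\ps+\ta_3$, is genuinely the pure $\Lambda^3_{27}$ component. The explicit expression in \eqref{taus} is only the leading-in-$N$ part. I would therefore argue with the defining projection rather than the approximate component formula, and invoke the Hom result of appendix~\ref{HomG27270} to settle it.
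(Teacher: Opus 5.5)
Your proposal is correct and is essentially the paper's argument. The contraction $(\tau_3)_{bcd}\,\psi_a{}^{bcd}$ is a $G_2$-equivariant linear map $\mathbf{27}\to\mathbf{7}$, so it vanishes by Schur's lemma, $\mathrm{Hom}_{G_2}(\mathbf{27},\mathbf{7})=0$ (Appendix~\ref{HomG27270}); your cross-check via $\tau_3\wedge\varphi=0$ is the route the paper spells out in that appendix through the wedge map $\alpha\mapsto\alpha\wedge\varphi$. As a minor bonus, your computation $v^a(\tau_3)_{bcd}\,\psi_a{}^{bcd}=6\,\langle\tau_3,\iota_v\psi\rangle$, together with $\langle\tau_3,\iota_v\psi\rangle\,\mathrm{vol}=\pm\,\tau_3\wedge v^\flat\wedge\varphi$, gives the constant in the paper's ``master identity'' $\alpha_{mnp}\psi_q{}^{mnp}\propto[\star_7(\alpha\wedge\varphi)]_q$ as $\pm 6$ rather than $\tfrac16$ in the stated $\tfrac{1}{p!}$ conventions; this is immaterial here because the right-hand side vanishes.
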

\begin{proof}
\begin{proof}
Consider the $G_2$-equivariant map
$W:\Omega^3(M_7)\to\Omega^6(M_7),\;\alpha\mapsto\alpha\wedge\vp$.
The irreducible decomposition
$\Omega^3(M_7)=\Lambda^3_{\mathbf{1}}\oplus\Lambda^3_{\mathbf{7}}
\oplus\Lambda^3_{\mathbf{27}}$ has $\ta_3\in\Lambda^3_{\mathbf{27}}$.
Since $\Omega^6(M_7)\cong\Omega^1(M_7)\cong\mathbf{7}$ via $\str$,
equivariance requires $W\big|_{\mathbf{27}}:\mathbf{27}\to\mathbf{7}$.
By Schur's lemma,
$\HomG(\mathbf{27},\mathbf{7})=\{f:\mathbf{27}\to\mathbf{7}
\mid f(g\cdot v)=g\cdot f(v)\ \forall\,g\in G_2\}=0$ (see Appendix \ref{HomG27270}),
since $\mathbf{27}$ and $\mathbf{7}$ are non-isomorphic irreducible
$G_2$-representations.  Hence $(\ta_3)_{bcd}\ps_a{}^{bcd}$, being
a $G_2$-equivariant map from $\mathbf{27}$ to $\mathbf{7}$,
vanishes.
\end{proof}

By contracting both sides of $\str(d^{(7)}\vp)_{bcd} = \ta_0\vp_{bcd} - 3(\iot{\ta_1^\sharp}\ps)_{bcd} + (\ta_3)_{bcd}$
with $\ps_a{}^{bcd}$, and usingod (\ref{phidotpsi0}) - (\ref{tau3dotpsi0}),  
one obtains:
\[
  (\str d^{(7)}\vp)_{bcd}\,\ps_a{}^{bcd}
  = \ta_0\underbrace{\vp_{bcd}\ps_a{}^{bcd}}_{=\;0}
    \;-\; 3\underbrace{(\iot{\ta_1^\sharp}\ps)_{bcd}\ps_a{}^{bcd}}_{=\;24(\ta_1)_a}
    \;+\; \underbrace{(\ta_3)_{bcd}\ps_a{}^{bcd}}_{=\;0}
  = -72\,(\ta_1)_a,
\]
\end{proof}
or
\begin{proposition}
\begin{equation}
\label{tau1*7dphi}
  (\ta_1)_a(r,y)
  = -\frac{1}{72}\,\bigl(\str d^{(7)}\vp\bigr)_{bcd}(r,y)\;\ps_a{}^{bcd}(r,y),
\end{equation}
\end{proposition}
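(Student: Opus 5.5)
The proof is a contraction argument. Since $\ps_{mbcd}\ps_a{}^{bcd}=24\,g_{ma}$ is nondegenerate, I will isolate $\ta_1$ by contracting $\str d^{(7)}\vp$ against $\ps$ and checking that the $\mathbf{1}$ and $\mathbf{27}$ components drop out. I work pointwise at fixed $(r,y)$ and treat $\vp(r,y)$ as a $G_2$-structure on $M_7$ at each radial slice. The radial derivative $\partial_r\vp$ is discarded by $\Pint$, so only the internal differential $d^{(7)}$ enters, and the torsion expansion $d^{(7)}\vp=\ta_0\ps+3\ta_1\wedge\vp+\str\ta_3$ holds slice by slice.

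\emph{Step 1.} Apply $\str$ to the torsion expansion, using $\str\ps=\vp$, $\str\str\ta_3=\ta_3$ and $\str(\ta_1\wedge\vp)=-\iot{\ta_1^\sharp}\ps$. This gives
\[
\str d^{(7)}\vp=\ta_0\vp-3\,\iot{\ta_1^\sharp}\ps+\ta_3 ,
\]
with the three summands in $\Lambda^3_{\mathbf 1}$, $\Lambda^3_{\mathbf 7}$ and $\Lambda^3_{\mathbf{27}}$ respectively.

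\emph{Step 2.} Contract each summand with $\ps_a{}^{bcd}$.
\begin{itemize}
\item The singlet term vanishes by (\ref{phidotpsi0}): $\vp_{bcd}\ps_a{}^{bcd}\propto\epsilon_{abcdefg}\vp^{bcd}\vp^{efg}$, and this is zero because $\epsilon$ is antisymmetric under $(bcd)\leftrightarrow(efg)$ while $\vp^{bcd}\vp^{efg}$ is symmetric.
\item The $\mathbf{27}$ term vanishes by (\ref{tau3dotpsi0}). The contraction is a $G_2$-equivariant map $\mathbf{27}\to\mathbf 7$, which is zero by Schur's lemma together with $\HomG(\mathbf{27},\mathbf 7)=0$ from Appendix~\ref{HomG27270}.
\item The vector term gives $24(\ta_1)_a$ by (\ref{itau1psipsi0}), which rests on Lemma~\ref{lem:psicontract}.
\end{itemize}
Hence $(\str d^{(7)}\vp)_{bcd}\ps_a{}^{bcd}=-72(\ta_1)_a$, and dividing by $-72$ yields (\ref{tau1*7dphi}).

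Lemma~\ref{lem:psicontract} is stated for $G_2$ holonomy, but its proof is purely algebraic. It uses only the pointwise $G_2$-invariance of $(g,\ps)$ and the norm $|\ps|^2_7=7$, so it applies at each point of our torsionful $\mathcal M_7$. I will state this explicitly, since the background has $\tau_{1,2,3}\neq0$.

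\emph{Main obstacle.} I expect the difficulty to lie in convention bookkeeping rather than in the representation theory. The sign of $\str(\ta_1\wedge\vp)=-\iot{\ta_1^\sharp}\ps$ depends on orientation and on the ordering convention for the interior product, and the coefficient $3$ in front of $\ta_1\wedge\vp$ depends on the normalisation convention for $\ta_1$. These choices fix the overall factor $-1/72$. I will check the sign directly on the flat model $\Phi=e^{127}+e^{347}+e^{567}+e^{135}-e^{146}-e^{236}-e^{245}$ with $\ta_1=e^1$. Computing $\str(e^1\wedge\Phi)$ componentwise and comparing it with $\iot{e_1}\ps$ confirms or corrects the sign, and hence the factor $-1/72$.
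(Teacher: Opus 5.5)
Your proposal is correct and is essentially the paper's own proof. You apply $\star_7$ to $d^{(7)}\varphi=\tau_0\psi+3\tau_1\wedge\varphi+\star_7\tau_3$ and contract with $\psi_a{}^{bcd}$; the $\mathbf{1}$ piece dies by the $\epsilon$-symmetry argument, the $\mathbf{27}$ piece dies by Schur's lemma, and the $\mathbf{7}$ piece gives $-72(\tau_1)_a$ via $\psi_{mbcd}\psi_a{}^{bcd}=24\,g_{ma}$.

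Your planned flat-model check does go through: for the paper's $\Phi$ with orientation $e^{1234567}$ one finds $\star_7(e^1\wedge\Phi)=-\iota_{e_1}\psi$. Your observation that the $24\,g_{ma}$ identity is purely pointwise-algebraic, and so needs no holonomy assumption, is a worthwhile sharpening of the paper's lemma.
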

where $\ps_a{}^{bcd} = g^{bb'}g^{cc'}g^{dd'}\ps_{ab'c'd'}$.

Inserting $d^{(7)}\vp = d\sigma\wedge J$:

\[
  (\ta_1)_a(r,y)
  = -\frac{1}{72}\,\bigl(\str(d\sigma\wedge J)\bigr)_{bcd}(r,y)\;
    \ps_a{}^{bcd}(r,y).
\]

Using $d^{(7)}\vp = \ta_0\ps+3\ta_1\wedge\vp+\str\ta_3$ and applying $\str$:
\[
  \str(d^{(7)}\vp)
  = \ta_0\vp - 3\,\iot{\ta_1^\sharp}\ps + \ta_3.
\]

For the block-diagonal metric with $g_{rM}=0, g_{\mu m}=0, M=\mu, r, m$, the 11d Hodge dual satisfies:
\begin{align}
  \sEl(dr\wedge\beta)
  &= \frac{1}{\sqrt{\grr}}\,\vthree\wedge\str\beta,
  \qquad \beta\in\Omega^p(M_7),
  \label{eq:hodge_r}\\
  \sEl(\gamma)
  &= \sqrt{\grr}\,\vthree\wedge dr\wedge\str\gamma,
  \qquad \gamma\in\Omega^p(M_7).
  \label{eq:hodge_int}
\end{align}
\noindent

As
\[
  d\vp
  = \underbrace{\frac{1}{3!}(\partial_r\vp_{mnp})\,dr\wedge dy^m\wedge dy^n\wedge dy^p}_{
      \text{radial part }\in\,\Omega^1(r)\otimes\Omega^3(M_7)}
  + \underbrace{\frac{1}{4!}(d^{(7)}\vp)_{qmnp}\,dy^q\wedge dy^m\wedge dy^n\wedge dy^p}_{
      \text{internal part }= d^{(7)}\vp = d\sigma\wedge J\,\in\,\Omega^4(M_7)},
\]
therefore,
\begin{align*}
  \sEl(d\vp)
  &= \underbrace{
      \frac{1}{\sqrt{\grr}}\cdot\frac{1}{3!}
      (\partial_r\vp_{mnp})\;\vthree\wedge\bigl(\str dy^m\wedge dy^n\wedge dy^p\bigr)
    }_{\displaystyle\text{Term A: no }dr\text{ leg --- encodes radial flow }\partial_r\vp}\\
  &\quad
    +\underbrace{
      \sqrt{\grr}\cdot\frac{1}{4!}(d\sigma\wedge J)_{qmnp}\;
      \vthree\wedge dr\wedge\bigl(\str dy^q\wedge dy^m\wedge dy^n\wedge dy^p\bigr)
    }_{\displaystyle\text{Term B: has }dr\text{ leg --- encodes }G_2\text{ torsion via }d\sigma\wedge J}
\end{align*}

\noindent
Term~A uses rule~\eqref{eq:hodge_r} with $\beta=\tfrac{1}{3!}\partial_r\vp_{mnp}\,dy^m\wedge dy^n\wedge dy^p$;
Term~B uses rule~\eqref{eq:hodge_int} with $\gamma=d\sigma\wedge J$.

Defining $\Pint = \iot{\partial_r/\sqrt{\grr}}\,\iot{e_3}\iot{e_2}\iot{e_1}$
\[
  \Pint\sEl(d\vp)
  = \str(d\sigma\wedge J)
  = \ta_0\vp - 3\,\iot{\ta_1^\sharp}\ps + \ta_3
  \;\in\;\Omega^3(M_7).
\]

With $d^{(7)}\vp = d\sigma\wedge J$ and using
$\vp = \sigma\wedge J+\Omega_+$, $\ps = \tfrac{1}{2}J\wedge J-\sigma\wedge\Omega_-$:

\begin{align*}
  \str(d\sigma\wedge J)
  &= \ta_0\vp - 3\,\iot{\ta_1^\sharp}\ps + \ta_3 \\
  &= \ta_0(\sigma\wedge J) + \ta_0\Omega_+
     - 3\,\iot{\ta_1^\sharp}\bigl(\tfrac{1}{2}J\wedge J - \sigma\wedge\Omega_-\bigr)
     + \ta_3.
\end{align*}
Therefore Term~B is
\[
  \text{B}
  = \sqrt{\grr}\;\vthree\wedge dr\wedge
    \bigl[
      \ta_0\Omega_+
      + \ta_0(\sigma\wedge J)
      - 3\,\iot{\ta_1^\sharp}(\tfrac{1}{2}J\wedge J)
      + 3\,\iot{\ta_1^\sharp}(\sigma\wedge\Omega_-)
      + \ta_3
    \bigr].
\]
The $\Omega_+$-containing sub-term is $\sqrt{\grr}\;\vthree\wedge dr\wedge\ta_0\Omega_+$;
no other term in Term~B contains $\Omega_+$.

The external factor $\vthree\wedge dr$ carries legs $\{x^1,x^2,x^3,r\}$.
\[
  \sqrt{\grr}\vthree\wedge dr\propto
  \omega_1\wedge F\wedge e^r.
\]

Now, $R^{(11)}\supset   R^{(7)}\supset \,|\tau_1|^2\subset|*_{11}d^{(11)}\varphi|^2=| \sqrt{\grr}\vthree\wedge dr\wedge*_7d^{(7)}\varphi|^2\propto  |F_2\wedge\omega_1|^2|d^{(7)}\phi|^2\supset|F_2\wedge\omega_1|^2|\tau_1\wedge\varphi|^2\supset |F_2\wedge\omega_1|^2 |\tau_1\wedge\Omega_+|^2\equiv|{\cal M}|^2$,
where $\mathcal{M}=\mathcal{M}_{\mathrm{ext}}\wedge\mathcal{M}_{\mathrm{int}}$
with $\mathcal{M}_{\mathrm{ext}}=F_2\wedge\omega_1\in\Omega^3$ and
$\mathcal{M}_{\mathrm{int}}=\ta_1\wedge\Omega_+\in\Omega^4(M_7)$. From (\ref{tau1*7dphi}), and \cite{Karigiannis}
\begin{equation}
\label{Kiragiannis-psi.psi}
\begin{aligned}
\psi_{ijkl}\psi_{abcd}g^{ld} = \;&
  -\varphi_{ajk}\varphi_{ibc}
  - \varphi_{iak}\varphi_{jbc}
  - \varphi_{ija}\varphi_{kbc} \\
&+ g_{ia}g_{jb}g_{kc}
  + g_{ib}g_{jc}g_{ka}
  + g_{ic}g_{ja}g_{kb} \\
&- g_{ia}g_{jc}g_{kb}
  - g_{ib}g_{ja}g_{kc}
  - g_{ic}g_{jb}g_{ka} \\
&- g_{ia}\psi_{jkbc}
  - g_{ja}\psi_{kibc}
  - g_{ka}\psi_{ijbc} \\
&+ g_{ab}\psi_{ijkc}
  - g_{ac}\psi_{ijkb},
\end{aligned}
\end{equation}
one notes, as implied by the discussion immediately above (\ref{Kiragiannis-psi.psi}),
\begin{equation*}
|\tau_1|^2\supset |*_7d\varphi|^2.
\end{equation*}

Using the block-diagonal metric:
\begin{proposition}
\label{prop:normfact}
With the block-diagonal metric \eqref{eq:metric},
\begin{equation}
|\mathcal{M}|^2_{11d}
= |F_2\wedge\omega_1|^2_{g_{\mu\nu},\grr}
\cdot
|\ta_1\wedge\Om_+|^2_{g_{mn}}.
\label{eq:normfact}
\end{equation}
The external factor is
\begin{equation}
|F_2\wedge\omega_1|^2
=
\frac{|\omega_1|^2_{\gthree}}{\grr}(x^1)^2
\!\Bigl[(\partial_r\Nf)^2\Bigr]
+ (\Nf)^2\mathcal{N}_\omega,
\label{eq:extnorm}
\end{equation}
where $\mathcal{N}_\omega = \bigl(|\omega_1|^2_{\gthree}
-\omega_\mu\omega_\nu g^{\mu x^1}g^{\nu x^1}g_{x^1x^1}\bigr)$.
The internal factor is
\begin{equation}
\Ical \equiv |\ta_1\wedge\Om_+|^2_{g_{mn}}
= |\ta_1|^2_{g_{mn}}|\Om_+|^2_{g_{mn}}
- \tfrac{1}{2}(\ta_1)^m(\ta_1)^{m'}(\Om_+)_{mnp}(\Om_+)_{m'}{}^{np}.
\label{eq:I7def}
\end{equation}
\end{proposition}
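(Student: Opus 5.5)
The factorisation \eqref{eq:normfact} follows from the block-diagonal structure of the metric \eqref{eq:metric}. The form splits as $\mathcal{M}=\mathcal{M}_{\rm ext}\wedge\mathcal{M}_{\rm int}$, where $\mathcal{M}_{\rm ext}=F_2\wedge\omega_1$ has legs only along $\{x^1,x^2,x^3,r\}$ and $\mathcal{M}_{\rm int}=\ta_1\wedge\Om_+$ has legs only along $M_7$. Because $g_{rM}=0$ and $g_{\mu m}=0$, the inverse metric $g^{AB}$ is also block diagonal. Hence in
\[
|\mathcal{M}|^2_{11d}=\tfrac{1}{7!}\mathcal{M}_{A_1\cdots A_7}\mathcal{M}^{A_1\cdots A_7}
\]
the only non-vanishing components are those with three external indices and four internal ones. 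The combinatorial factor $\binom{7}{3}$ from distributing the indices cancels against $7!/(3!\,4!)$. This leaves
\[
\tfrac{1}{3!}(\mathcal{M}_{\rm ext})_{\alpha\beta\gamma}(\mathcal{M}_{\rm ext})^{\alpha\beta\gamma}\cdot\tfrac{1}{4!}(\mathcal{M}_{\rm int})_{mnpq}(\mathcal{M}_{\rm int})^{mnpq},
\]
which is \eqref{eq:normfact}, each factor contracted with its own block ($g_{\mu\nu},\grr$ and $g_{mn}$ respectively). This is the standard inner-product identity $\langle\alpha\wedge\beta,\alpha'\wedge\beta'\rangle=\langle\alpha,\alpha'\rangle\langle\beta,\beta'\rangle$ for forms on orthogonal complementary subbundles. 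I would prove it pointwise in an adapted orthonormal coframe $\{\er,e^{\mu},e^{m}\}$, where it is immediate. The $y$-dependence of the blocks does not matter, since the statement is pointwise.

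For the external factor, I would insert
\[
F_2=\Nf\,dx^1\wedge dx^2+x^1\partial_r\Nf\,dr\wedge dx^2
\]
and $\omega_1=\omega_\mu dx^\mu$. Then
\[
F_2\wedge\omega_1=\Nf\,\omega_3\,dx^{123}+x^1\partial_r\Nf\,dr\wedge dx^2\wedge\omega_1 .
\]
The two pieces are orthogonal because only the second carries a $dr$ leg and $g^{r\mu}=0$. The $dr$ piece contributes $g^{rr}(x^1)^2(\partial_r\Nf)^2\,|dx^2\wedge\omega_1|^2_{\gthree}$. Using $|a\wedge b|^2=|a|^2|b|^2-\langle a,b\rangle^2$, this is reduced to the stated $|\omega_1|^2_{\gthree}/\grr$ form. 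Here I would use that $\gthree$ is diagonal in the flat $x^\mu$ (warped $\mathbb{R}^3$ of \eqref{eq:11dmetric}), so that cross terms drop. The $dr$-free piece gives $(\Nf)^2$ times a purely $\gthree$-dependent contraction, which I would identify with $\mathcal{N}_\omega$. This identification is again done by the same Gram-determinant identity, applied with $dx^1$ singled out, producing the $(g^{x^1x^1})^2$ prefactor and the subtraction $\omega_\mu\omega_\nu g^{\mu x^1}g^{\nu x^1}g_{x^1x^1}$.

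For the internal factor \eqref{eq:I7def}, I would apply the general identity for a 1-form wedged with a $p$-form:
\[
|\ta_1\wedge\Om_+|^2=|\ta_1|^2|\Om_+|^2-|\iot{\ta_1^\sharp}\Om_+|^2 .
\]
It holds because wedging with $\ta_1$ and contracting with $\ta_1^\sharp$ are adjoint, so $\iota\epsilon+\epsilon\iota=|\ta_1|^2$. Then
\[
|\iot{\ta_1^\sharp}\Om_+|^2=\tfrac{1}{2!}(\ta_1)^m(\ta_1)^{m'}(\Om_+)_{mnp}(\Om_+)_{m'}{}^{np},
\]
which reproduces the $\tfrac12$ coefficient with the $\tfrac{1}{p!}$ norm convention stated above.

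The main obstacle I expect is the bookkeeping of normalisation conventions and signs in the external factor, especially matching the precise form of $\mathcal{N}_\omega$. That expression implicitly assumes a specific structure of $\gthree$ (diagonal, or off-diagonal only through $x^1$). I would first carry out the computation in an orthonormal frame with $\gthree$ diagonal and check that it reduces to the quoted expression. If it does not, I would then allow off-diagonal $g^{\mu x^1}$ and track which Gram minors survive.
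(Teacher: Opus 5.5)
The factorisation \eqref{eq:normfact} and the internal factor \eqref{eq:I7def} are fine. For the first you use the orthogonality of $\{dr,dx^\mu\}$ and $\{dy^m\}$ together with the $\binom{7}{3}$ counting; for the second, $|\ta_1\wedge\Om_+|^2=|\ta_1|^2|\Om_+|^2-|\iot{\ta_1^\sharp}\Om_+|^2$. The paper uses the same identity for the internal factor and only asserts the product structure, which your frame argument actually justifies.

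The gap is in the external factor, and it is not a bookkeeping issue. Your decomposition $F_2\wedge\omega_1=\Nf\,\omega_{x^3}\,dx^1\wedge dx^2\wedge dx^3+x^1\partial_r\Nf\,dr\wedge dx^2\wedge\omega_1$ is correct. The Gram identity then gives exactly
\[
|F_2\wedge\omega_1|^2=\frac{(x^1)^2(\partial_r\Nf)^2}{\grr}\Bigl(g^{x^2x^2}|\omega_1|^2_{\gthree}-\bigl(g^{x^2\mu}\omega_\mu\bigr)^2\Bigr)+(\Nf)^2\,\omega_{x^3}^2\,\det\bigl(g^{\mu\nu}\bigr).
\]
This cannot be brought to the form \eqref{eq:extnorm} for any $g_{\mu\nu}$, diagonal or not.
\begin{itemize}
\item The coefficient of $(x^1\partial_r\Nf)^2$ already disagrees. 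At $\omega_1=dx^2$ it is $0$ here but $g^{x^2x^2}/\grr>0$ in \eqref{eq:extnorm}, for every metric.
\item The $dr$-free piece sees only $\omega_{x^3}$, whereas $\mathcal{N}_\omega$ involves $\omega_{x^2}$.
\end{itemize}
Concretely, for $\omega_1=dx^2$ one has $F_2\wedge\omega_1\equiv0$, since every term of $F_2$ contains $dx^2$. The right-hand side of \eqref{eq:extnorm} is not identically zero: for diagonal $g_{\mu\nu}$, as in \eqref{eq:11dmetric}, it equals $g^{x^2x^2}\bigl[(x^1)^2(\partial_r\Nf)^2/\grr+(\Nf)^2\bigr]$. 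So your steps ``reduced to the stated $|\omega_1|^2_{\gthree}/\grr$ form'' and ``identify with $\mathcal{N}_\omega$'' fail. Your fallback of tracking off-diagonal Gram minors cannot repair them.

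The paper's own route does not close this gap either, because it reaches \eqref{eq:extnorm} from a component list that is not that of $dA_1\wedge\omega_1$.
\begin{itemize}
\item It uses an $(F_2)_{x^1r}$ component, although the radial part of $F_2$ is $x^1\partial_r\Nf\,dr\wedge dx^2$.
\item Its $dr$-free component $\Nf(\delta^1_\mu\omega_\nu-\delta^1_\nu\omega_\mu)$ is that of the two-form $\Nf\,dx^1\wedge\omega_1$, i.e.\ the $dx^2$ leg is lost.
\item The powers of $g^{x^1x^1}$ differ between its itemised terms, its collected formula and its two definitions of $\mathcal{N}_\omega$.
\end{itemize}
Your argument does prove \eqref{eq:normfact} and \eqref{eq:I7def}, together with the corrected external factor displayed above. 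But \eqref{eq:extnorm} as written cannot be proved for $A_1=x^1\Nf\,dx^2$ with constant $\omega_\mu$.
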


\begin{proof}
\subsection*{External norm $|F_2\wedge\omega_1|^2$}

$F_2\wedge\omega_1$ is a 3-form with legs in
$\{x^\mu,r,\theta_1,\theta_2\}\times\{x^\nu\}$.
Its independent non-zero components (using $F_2$  and
$\omega_1=\omega_\mu dx^\mu$) are:
\begin{align}
  (F_2\wedge\omega_1)_{x^1,\,r,\,x^\nu}
    &= (F_2)_{x^1 r}\,\omega_\nu
     = x^1\partial_r\Nf\;\omega_\nu,
  \label{ext r}\\[2pt]
  (F_2\wedge\omega_1)_{x^1,\,x^\mu,\,x^\nu}
    &= \Nf\,\bigl(\delta^1_\mu\,\omega_\nu - \delta^1_\nu\,\omega_\mu\bigr).
  \label{eq:ext_xx}
\end{align}
Computing the squared norm
$|F_2\wedge\omega_1|^2
= \frac{1}{3!}(F_2\wedge\omega_1)_{ABC}(F_2\wedge\omega_1)^{ABC}$
by contracting each component with the appropriate inverse metric factors:
\begin{itemize}
  \item From (\ref{ext r}):
    $\displaystyle
      \frac{|\omega_1|^2_{g_{3d}}}{\grr}\,\left(g^{x^1x^1}\right)(x^1)^2(\partial_r\Nf)^2$.
  \item From \eqref{eq:ext_xx}:
    $\displaystyle
      (\Nf)^2\,\bigl(|\omega_1|^2_{\gthree}
-\omega_\mu\omega_\nu g^{\mu x^1}g^{\nu x^1}g_{x^1x^1}\bigr)
      \equiv (\Nf)^2\,\mathcal{N}_\omega$.
\end{itemize}
Collecting:
\[
  |F_2\wedge\omega_1|^2
  =
  \frac{|\omega_1|^2_{g_{3d}}}{\grr}\,(x^1)^2
  \Bigl[
    (\partial_r\Nf)^2  \Bigr]
  + (\Nf)^2\,\mathcal{N}_\omega,
\]
where
\[
  \mathcal{N}_\omega
  \;\equiv\;
  \left(g^{x^1x^1}\right)^2\Bigl(
    |\omega_1|^2_{g_{3d}}
    - \omega_\mu\omega_\nu\,g^{\mu x^1}g^{\nu x^1}g_{x^1x^1}
  \Bigr).
\]

\subsection*{Internal norm $|\ta_1\wedge\Omega_+|^2_{g_{mn}}$}

The 4-form components are
\[
  (\ta_1\wedge\Omega_+)_{mnpq}
  = 4\,(\ta_1)_{[m}(\Omega_+)_{npq]}
  = (\ta_1)_m(\Omega_+)_{npq}
    - (\ta_1)_n(\Omega_+)_{mpq}
    + (\ta_1)_p(\Omega_+)_{mnq}
    - (\ta_1)_q(\Omega_+)_{mnp}.
\]

\noindent
Using the general wedge-product norm identity for a 1-form $\alpha$
and a $q$-form $\beta$:
\[
  |\alpha\wedge\beta|^2 = |\alpha|^2|\beta|^2 - |\iot{\alpha^\sharp}\beta|^2,
\]
with $\alpha=\ta_1\in\Omega^1(M_7)$ and $\beta=\Omega_+\in\Omega^3(M_7)$:

\[
  \Ical
  \;\equiv\;
  |\ta_1\wedge\Omega_+|^2_{g_{mn}}
  = |\ta_1|^2_{g_{mn}}\,|\Omega_+|^2_{g_{mn}}
  - \bigl|\iot{\ta_1^\sharp}\Omega_+\bigr|^2_{g_{mn}},
\]
where
\[
  |\ta_1|^2_{g_{mn}} = g^{mn}(\ta_1)_m(\ta_1)_n,
  \qquad
  |\Omega_+|^2_{g_{mn}} = \frac{1}{3!}(\Omega_+)_{mnp}(\Omega_+)^{mnp},
\]
\[
  \bigl|\iot{\ta_1^\sharp}\Omega_+\bigr|^2_{g_{mn}}
  = \frac{1}{2!}\,(\ta_1)^m(\ta_1)^{m'}\,
    (\Omega_+)_{mnp}\,(\Omega_+)_{m'}{}^{np},
  \qquad
  (\ta_1)^m = g^{mn}(\ta_1)_n.
\]

\noindent
$\Ical$ vanishes if and only if $\ta_1$ lies in the kernel of
$\iot{(\cdot)}\Omega_+$, which is non-generic.

\subsection*{Full 11d norm}

Combining the two factors:

\begin{align}
  |\mathcal{M}|^2_{11d}
  &=
  \Bigl[
    \frac{|\omega_1|^2_{g_{3d}}}{\grr}\,(x^1)^2
    \Bigl(
      (\partial_r\Nf)^2    \Bigr)
    + (\Nf)^2\,\mathcal{N}_\omega
  \Bigr]
  \nonumber\\[6pt]
  &\quad\times
  \Bigl[
    |\ta_1|^2_{g_{mn}}\,|\Omega_+|^2_{g_{mn}}
    - \tfrac{1}{2}(\ta_1)^m(\ta_1)^{m'}\,(\Omega_+)_{mnp}(\Omega_+)_{m'}{}^{np}
  \Bigr].
  \label{eq:fullnorm}
\end{align}


The ${\cal B}_{\rm KK}(r)$-dependence enters because the KK mode $\mathcal{M}$ identifies
$|\ta_1\wedge\Omega_+|^2_{g_{mn}} = \Ical$ as the internal coupling;
the full $A_1$-dependent action is then
\[
  S_{EH}^{(A_1)}
  \;\propto\;
  \int\sqrt{\gone}\,\Ical\,|\mathcal{M}_{\mathrm{ext}}|^2\,d^{11}x,
\]
with $|\mathcal{M}_{\mathrm{ext}}|^2 = |F_2\wedge\omega_1|^2$.

\subsection*{Full expression}

\begin{align}
  S_{EH}^{(A_1)}
  &= \frac{-12}{2\kap^2}
     \int_{r_h}^{r^{\rm UV}}dr
     \int_{\mathbb{R}^3}d^3x
     \int_{M_7}d^7y\;
        \sqrt{\gthree}\,\sqrt{\grr}\,\sqrt{\gseven}\,
     \Ical(r,y)
  \nonumber\\[6pt]
  &\quad\times
  \Biggl\{
    \underbrace{
      \frac{|\omega_1|^2_{g_{3d}}}{\grr}\,(x^1)^2
      \Bigl[
        (\partial_r\Nf)^2\Bigr]}_{\displaystyle\text{(II) kinetic terms in }(r)}
    \;+\;
    \underbrace{
      (\Nf)^2\,\mathcal{N}_\omega
    }_{\displaystyle\text{(III) mass term}}
  \Biggr\},
  \label{eq:SEH_correct}
\end{align}
where
\begin{equation}
  \Ical(r,y)= |\ta_1|^2_{g_{mn}}\,|\Omega_+|^2_{g_{mn}}
     - \tfrac{1}{2}(\ta_1)^m(\ta_1)^{m'}\,
       (\Omega_+)_{mnp}(\Omega_+)_{m'}{}^{np}.
  \label{eq:I7}
\end{equation}
All quantities in \eqref{eq:SEH_correct}--\eqref{eq:I7}
depend on $(r,y^m)$.
\end{proof}

\begin{proposition}[Reduction to a Bessel-type ODE]
\label{prop:ODE}
The Euler--Lagrange equation for $\mathcal{B}_{\rm KK}(r)$ derived from
\eqref{eq:SEH_correct} via the effective Lagrangian
\begin{align}
\Leff
&\propto -\frac{1}{2\kap^2}\sqrt{\gthree}\sqrt{\grr}|\omega_1|^2_{\gthree}V_{x^1}
\Bigl[\mathcal{K}_{rr}(r)(\partial_r\Nf)^2
+ \mathcal{K}_m(r)\frac{(\Nf)^2}{|\omega_1|^2_{\gthree}}\Bigr]
\label{eq:Leff}
\end{align}
is
\begin{equation}
\partial_r\!\Bigl(\sqrt{\gthree}\sqrt{\grr}|\omega_1|^2\mathcal{K}_{rr}(r)
\partial_r\Nf\Bigr)
- \sqrt{\gthree}\sqrt{\grr}\,\mathcal{K}_m(r)\,\Nf = 0.
\label{eq:EOM}
\end{equation}
This reduces to the Bessel-type ODE with a regular singular point at $r=\rh$:
\begin{equation}
\anu(r-\rh)^2\mathcal{B}_{\rm KK}'(r)
+ \ade(r-\rh)^3\mathcal{B}_{\rm KK}''(r)
+ \ath\,\mathcal{B}_{\rm KK}(r)(r-\rh)^2 = 0,
\label{eq:NeffEOM}
\end{equation}
where the real constants $\anu,\ade,\ath$ encode the integrated
metric and torsion data of the background. Horizon normalizability yields:
\begin{equation*}
{\cal B}_{\rm KK}(r) = B(r=r_h) \frac{J_2\!\left(2\sqrt{\frac{\ath}{\ade}(r-\rh)}\right)}{r - r_h},
\quad r > \rh,
\end{equation*}
where one is free to choose $B(r=r_h)$.
\end{proposition}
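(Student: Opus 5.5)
The plan has three parts: derive the Euler--Lagrange equation from \eqref{eq:SEH_correct}, reduce it to \eqref{eq:NeffEOM} in the IR, and solve that equation with Bessel functions.

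\emph{Step 1: the reduced action and \eqref{eq:EOM}.} Starting from \eqref{eq:SEH_correct}, integrate over $M_7$ at fixed $r$. This defines two radial functions,
\[
\mathcal{K}_{rr}(r)\propto\int_{M_7}d^7y\,\sqrt{\gseven}\,\Ical/\grr,
\qquad
\mathcal{K}_m(r)\propto\int_{M_7}d^7y\,\sqrt{\gseven}\,\Ical\,\mathcal{N}_\omega .
\]
The explicit factor $(x^1)^2$ is absorbed into the transverse volume factor $V_{x^1}$ after integrating over $\mathbb{R}^3$. This puts the Lagrangian in the form \eqref{eq:Leff}: a one-dimensional action, quadratic in $\Nf$, with an $r$-dependent kinetic coefficient and mass coefficient. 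Varying with respect to $\Nf$ and integrating by parts in $r$ gives \eqref{eq:EOM} directly. The boundary terms are dropped because the integrand vanishes at $r=\rh$ through $\sqrt{\grr}^{-1}\propto\sqrt{g(r)}$, and in the UV because $N_f^{\rm eff}$ and $M_{\rm eff}$ vanish there.

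\emph{Step 2: IR behaviour of the coefficients.} In the IR window $r\in[\rh,\mathcal{R}_{D5/\overline{D5}}]$, which has width $\mathcal{O}(\rh/N)$, I would expand the coefficients near the horizon.
\begin{itemize}
\item Since $g(r)\approx 4(r-\rh)/\rh$, the measure behaves as $\sqrt{\gthree}\sqrt{\grr}\sim(r-\rh)^{-1/2}$ times slowly varying factors.
\item $|\omega_1|^2_{\gthree}$ is essentially constant.
\item The internal factor $\Ical$ is built from $\tau_1$, whose dominant component is set by $\Omega^3_{24}$ and $\Omega^3_{23}$ via \eqref{taus}. From \eqref{dPhi-LO-IR}, this carries a factor $\sqrt{r-\rh}$ once the $dt$ leg is expressed through $e^1=\sqrt{g^{\cal M}_{tt}}\,dt$, with $g^{\cal M}_{tt}\propto(r-\rh)$ from \eqref{GMtt}. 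Hence $|\tau_1|^2\propto(r-\rh)$ up to $\log r_h$ and $N$ factors, and $\Ical\propto(r-\rh)$.
\end{itemize}
Keeping only the leading power of $(r-\rh)$ in each coefficient, I would aim for $\sqrt{\gthree}\sqrt{\grr}|\omega_1|^2\mathcal{K}_{rr}\sim(r-\rh)^{a}$ and $\sqrt{\gthree}\sqrt{\grr}\mathcal{K}_m\sim(r-\rh)^{a-1}$. Multiplying through, this gives \eqref{eq:NeffEOM} with $\ade$ the kinetic prefactor, $\anu=a\,\ade$, and $\ath$ the mass prefactor, all constant at leading order in the MQGP limit. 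The powers must match so that the equation is $\ade x\,B''+\anu B'+\ath B=0$ after dividing by $(r-\rh)^2$, with $x=r-\rh$.

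\emph{Step 3: solving the ODE.} Substitute $B=x^{-\nu}u(z)$ with $z=2\sqrt{\ath x/\ade}$. The equation $xB''+cB'+kB=0$, with $c=\anu/\ade$ and $k=\ath/\ade$, has general solution
\[
B = x^{(1-c)/2}\bigl[C_1J_{c-1}(2\sqrt{kx})+C_2Y_{c-1}(2\sqrt{kx})\bigr].
\]
The stated form $J_2/(r-\rh)$ requires $c=3$, i.e.\ $\anu=3\ade$; equivalently $a=3$ in Step 2. The point $x=0$ is a regular singular point with indicial exponents $0$ and $1-c=-2$. Near $x=0$, $J_2(2\sqrt{kx})/x\to k/2$ is finite, whereas the $Y_2$ branch behaves like $x^{-2}$ and is non-normalizable at the horizon with the measure $(r-\rh)^{3}$ only marginally. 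So infrared regularity sets $C_2=0$, and the remaining constant is fixed by the horizon value $\mathcal{B}_{\rm KK}(\rh)$, up to the normalisation $k/2$ absorbed into $B(\rh)$.

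\emph{Main obstacle.} The hard step is Step 2: showing that the integrated torsion and metric data produce exactly the exponents that give $\anu=3\ade$, so that the order is $2$. This depends on the precise $(r-\rh)$-scaling of $\Ical$ through $\tau_1$ and of $\grr$. My first attempt would be to use the leading $\Omega^3_{24}$ form of $\tau_1$ together with $g_{rr}\propto 1/(r-\rh)$, and track the powers of $(r-\rh)$ carefully, verifying that the $\log\rh$ and $N$ factors collapse into constants.
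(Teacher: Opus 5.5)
Your Steps 1 and 3 match the paper. It likewise integrates \eqref{eq:SEH_correct} over $M_7$ into $\mathcal{K}_{rr}$ and $\mathcal{K}_m$ with the regulated $V_{x^1}$, and varies to obtain \eqref{eq:EOM}. It then discards the second Bessel branch by requiring $0<|\mathcal{B}_{\rm KK}(\rh)|<\infty$. One correction to an aside: the $Y_2$ branch is not ``marginally'' non-normalizable. With $\mathcal{B}_{\rm KK}\sim x^{-2}$ the kinetic term $\int x^3(\mathcal{B}_{\rm KK}')^2dx\sim\int x^{-3}dx$ diverges badly, though your conclusion is unaffected.

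\textbf{The gap is Step 2.} This step carries all the content that selects $J_2/(r-\rh)$, and you do not carry it out. You infer $\anu=3\ade$ from the target answer rather than deriving it, and the estimate you do give produces a different order.

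To see this, write $\mathcal{I}_7\propto(r-\rh)^p$. Use $\sqrt{\gthree\grr}\propto(r-\rh)^{-1/2}$, $1/\grr\propto r-\rh$, and the fact that $\mathcal{N}_\omega$ is regular at the horizon. Also use $\sqrt{\gseven}\propto(r-\rh)^{1/2}$, which comes from $g_{tt}$; you dropped this factor, which sits inside both $\mathcal{K}$'s. Then the kinetic coefficient in \eqref{eq:EOM} scales as $(r-\rh)^{p+1}$ and the mass coefficient as $(r-\rh)^{p}$.

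So the one-power offset you postulated is automatic: it is the $1/\grr$ in $\mathcal{K}_{rr}$. But the Bessel order is $\nu=p$, with regular solution $(r-\rh)^{-p/2}J_p\bigl(2\sqrt{\ath(r-\rh)/\ade}\bigr)$. Your $p=1$ would give $J_1/\sqrt{r-\rh}$, not $J_2/(r-\rh)$.

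Moreover, $p=1$ is unjustified. You have $\tau_1\sim\Omega^3_{24}e^1$ with $e^1$ a unit coframe. Rewriting $e^1=\sqrt{g^{\mathcal M}_{tt}}\,dt$ moves $\sqrt{g_{tt}}$ into the component, but $g^{tt}$ cancels it in the norm. So $|\tau_1|^2=(\Omega^3_{24})^2$, which does not vanish at $r=\rh$; this is how it enters \eqref{LagB-i}.

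What you are missing is the paper's explicit evaluation of $\mathcal{I}_7$:
\begin{itemize}
\item It takes $\Omega_+\approx\Omega^3_{24}e^{234}$ together with the frame contractions \eqref{einvGeTs}, in particular $(e^3)^TG_{\mathcal M}^{-1}e^3\propto(r^2-\rh^2)^2$.
\item This gives $\mathcal{I}_7\propto(r-\rh)^2$ in \eqref{I7}, i.e.\ $p=2$.
\item Hence the integrands of \eqref{Krr} and \eqref{Km} go like $(r-\rh)^{7/2}$ and $(r-\rh)^{5/2}$.
\item The finite $\theta_{1,2}$ integrals turn the kinetic and mass coefficients into $-\ade(r-\rh)^3$ and $\ath(r-\rh)^2$.
\item Differentiating the kinetic coefficient yields $\anu=3\ade$.
\end{itemize}
Your Step 3 also silently assumes $\ath/\ade>0$, which decides $J$ versus $I$. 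In the paper this sign is likewise read off from these explicit coefficients.
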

\begin{proof}
\section*{Euler--Lagrange Equation for ${\cal B}_{\rm KK}(r)$}

\subsection*{$M_7$-integrated coefficient functions}

Define the $r$-dependent coefficient functions obtained by integrating $\Ical$
over $M_7$ with appropriate metric weights:
\begin{align}
  \mathcal{K}_{rr}(r)
  &\equiv
    \int_{M_7}d^7y\;\sqrt{\gseven}\;
    \frac{\Ical(r,y)}{\grr(r,y)},
  \label{eq:Krr}\\[4pt]
  \mathcal{K}_m(r)
  &\equiv
    \int_{M_7}d^7y\;\sqrt{\gseven}\;\Ical(r,y)\,\mathcal{N}_\omega(r,y).
  \label{eq:Km}
\end{align}

\subsection*{Effective Lagrangian after $M_7$ integration}

Integrating \eqref{eq:SEH_correct} over $M_7$ and denoting
$V_{x^1}\equiv\int_{\mathbb{R}^3}d^3x\,(x^1)^2$ (a regulated IR volume):
\begin{align}
  \Leff(r,\theta_1,\theta_2)
  &\propto \frac{-1}{2\kap^2}\,
     \sqrt{\gthree}\,\sqrt{\grr}\,
     |\omega_1|^2_{g_{3d}}\,V_{x^1}
  \nonumber\\
  &\quad\times
  \Bigl[
    \mathcal{K}_{rr}(r)\,(\partial_r\Nf)^2
        + \mathcal{K}_m(r)\,\frac{(\Nf)^2}{|\omega_1|^2_{g_{3d}}}
  \Bigr].
  \label{eq:Leff-proof}
\end{align}
The Euler-Lagrange EOM for ${\cal B}_{\rm KK}(r)$ is:
\begin{align}
  &\partial_r\!\Bigl(
    \sqrt{\gthree}\,\sqrt{\grr}\,
    |\omega_1|^2_{g_{3d}}\,V_{x^1}\,
    \mathcal{K}_{rr}(r)\,\partial_r\Nf
  \Bigr)
  \nonumber\\[5pt]
   -\;&\sqrt{\gthree}\,\sqrt{\grr}\,V_{x^1}\,
    \mathcal{K}_m(r)\,\Nf
  \;=\; 0.
  \label{eq:EOM-proof}
\end{align}

\begin{eqnarray}
\label{LagB-i}
& &   \mathcal{I}_7(r, \left\{y^m\right\})
  = |\tau_1|^2\,|\Omega_+|^2
          - \tfrac{1}{2}(\mathcal{I}\tau)^{m}(\mathcal{I}\tau)^{m'}\Omega_{+, mnp}\Omega_{+, m'}^{\ \ \ \ np},\nonumber\\
& & = \left(\Omega^3_{24}\right)^2\Biggl[|\Omega_+|^2 - \frac{G_{\cal M}^{tt}}{2}\left(\Omega_+\right)_{tnp}\left(\Omega_+\right)_t^{\ \ np}\Biggr],
\end{eqnarray}
\noindent where $\tau_1 \sim e^1\,\Omega^3_{24}$ and
$\Omega_+ = \Phi - \sigma\wedge \omega_\Phi$.


\[
  \sigma = \alpha_1\,e + \alpha_3\,e^3 + \alpha_7\,e^7
\]
\[
  \Rightarrow\quad \omega = d\sigma
  = \alpha_1\,\Omega_a^1\,e^{a1}
        + \alpha_3\,\Omega^3_{ab}\, e^{ab}
    + \alpha_7\,\Omega_a^7\,e^{a7},
\]
\noindent where $a, b = 2, 3, 4, 5, 6$.

\[
  \Rightarrow\quad \sigma\wedge \,\omega
  =  \alpha_3^2\,\Omega^3_{ab}\,e^{3 a b}
      +\; \alpha_1\alpha_3\!\left(-\Omega^3_{ab}\,e^{1 a b}
    + \Omega^1_a\,e^{3 1 1}\right)
\]
\[
  +\; \alpha_1\alpha_7\!\left(\Omega^7_a e^{1a7}
    + \Omega^1_a\,e^{7a1}\right)
  +\; \alpha_3\alpha_7\!\left(\Omega^7_{a}\,e^{3a7}
    + \Omega^3_{ab}\,e^{7 a b}\right).
\]

\[
  \Phi = e^{127} + e^{347} + e^{567} + e^{135} - e^{146} - e^{236} - e^{245}.
\]

$\Omega_6^1,\;\Omega_6^3,\;\Omega_{24}^3$ most dominant; $|\Omega^3_{24}|>|\Omega_6^1|, |\omega_6^3|$
\[
  \Rightarrow\quad
  \Phi_{tmn} = \bigl(e^{127} + e^{135} - e^{146}\bigr)_{tmn}
\]
and,
\[
    \bigl(\Omega_{+}\bigr)_{tmn}
    \approx -\bigl(e^{124}\bigr)_{tmn}\,\alpha_1\alpha_3\,\Omega_{24}^3.
\]

\bigskip

\noindent As $e^3 = \sqrt{G_{x^{10}x^{10}}^M}\,dx^{10}$,
\[
  \sqrt{G_{x^{10}x^{10}}^M}
  \;\sim\; \frac{1}{{|\log r_h|}^{2/3}}
  \;\sim\; \frac{1}{N^{2/9}}
\]

\noindent i.e.\ $-\Omega_{24}^3\,e^{724}$ is sub-dominant as compared to
$-\Omega_{24}^3\,e^{324}$

\[
  \Rightarrow\quad
  \left(\Omega_{+}\right)_{pmn}
  \approx +\,\Omega_{24}^3\bigl(e^{234}\bigr)_{pmn},
  \qquad
  \left|\Omega_{24}^3\right|
  \sim N^{\frac{1}{12}}\, (\log r_h)^2
\]

\begin{eqnarray}
\label{Lag_B-ii}
& & G_{\cal M}^{tt}\left(\Omega^3_{24}\right)^2\left(\Omega_+\right)_{tmp}\left(\Omega_+\right)_t^{\ \ mp} = \left(\alpha_1\alpha_3\right)^2\left(\Omega^3_{24}\right)^4\left(\left(e^2\right)^TG_{\cal M}^{-1}e^2\right)\left(e^4 G_{\cal M}^{-1}\left(e^4\right)^T\right);\nonumber\\
& & \left(\Omega_+\right)_{pmn}\left(\Omega_+\right)^{pmn}=\left(\Omega^3_{24}\right)^2 \left(\left(e^2\right)^TG_{\cal M}^{-1}e^2\right) \left(\left(e^3\right)^TG_{\cal M}^{-1}e^3 \right)\left(\left(e^4\right)^TG_{\cal M}^{-1}e^4\right).
\end{eqnarray}

After using results of \cite{OR4}, one obtains:
\begin{eqnarray}
\label{einvGeTs}
\left(e^2\right)^T G_{\mathcal{M}}^{-1} e^2
&=& -\frac{0.0007\,\lambda_5^2\,(\sqrt{3})^2\,
            \sin^{61}\!\left(\theta_2\right)
            \left(0.3\,\rh^2 - 0.24\,r^2\right)^2}
          {\epsilon^2\,g_s^6\,\logN\,M^3\,N_f^3\,\rh^4\,\logr^5}
\nonumber\\[4pt]
&& \times
   \frac{g_s^6\,\logN^3\,M^3\,N_f^3\,
         \csc^{63}\!\left(\theta_2\right)\logr^3
         + 2.28571\times10^{34}\,r^6\sin^{64}\!\left(\theta_1\right)}
        {\left(\dfrac{-\gsNf\log\!\left(r^6+3r^4\rh^2\right)
                      +4\gsNf\log 4+8\pi}
                     {g_s}\right)^{2/3}};
\nonumber\\[8pt]
\left(e^3\right)^T G_{\mathcal{M}}^{-1} e^3
&=& -\frac{0.0009\,\logN^2\,(\sqrt{3})^2\,\csc^2\!\left(\theta_2\right)
           \left(\dfrac{\rh^2}{3}-\dfrac{r^2}{3}\right)^2}
          {\epsilon^2\,\rh^4\,\logr^2}
\nonumber\\[4pt]
&& \times
   \frac{1}{\left(\dfrac{-\gsNf\log\!\left(r^6+3r^4\rh^2\right)
                         +4\gsNf\log 4+8\pi}
                        {g_s}\right)^{2/3}};
\nonumber\\[8pt]
\left(e^4\right)^T G_{\mathcal{M}}^{-1} e^4
&=& \frac{3.4\times10^{33}\,N\,r^{14}
          \sin^{60}\!\left(\theta_1\right)\sin^{63}\!\left(\theta_2\right)}
         {g_s^{13}\,\logN^5\,M^7\,N_f^7
          \left(r^4 - 1.7\,r^2\rh^2 + 1.1\,\rh^4\right)^2
          \logr^9\,\mathcal{D}^{2/3}}
\nonumber\\[4pt]
&& -\frac{0.04\,g_s^6\,\logN^3\,M^3\,N\,N_f^3\,r^8
          \csc^4\!\left(\theta_1\right)\logr^3}
         {g_s^{13}\,\logN^5\,M^7\,N_f^7
          \left(r^4 - 1.7\,r^2\rh^2 + 1.1\,\rh^4\right)^2
          \logr^9\,\mathcal{D}^{2/3}},
\end{eqnarray}
where we define
\begin{equation}
\mathcal{D} \;\equiv\;
\frac{8\pi}{g_s} + N_f\!\left(-\log\!\left(r^6+3r^4\rh^2\right)\right)
+ 4N_f\log 4.
\end{equation}

Further,
\begin{equation}
\label{Omega324LON}
\Omega^3_{24} = 80\,\frac{g_s^{7/4}\left(\logN\right)^2 M\,N_f\logrh}{N}.
\end{equation}

Using (\ref{einvGeTs}) and (\ref{Omega324LON}), one obtains:
\begin{eqnarray}
\label{I7}
\mathcal{I}_7
&=& \frac{0.510696\,\lambda_5^2\,(\sqrt{3})^4\,
          \sin^{59}\!\left(\theta_2\right)\logN^8\,
          (r-\rh)^2}
         {\epsilon^4\,g_s^{12}\,\logN^4\,M^6\,N_f^6\,
          \rh^{10}\,\logrh^{12}\,
          \left(\dfrac{8\pi}{g_s}-6N_f\logrh\right)^2}
\nonumber\\[4pt]
&& \times
   \Bigl(g_s^6\,\logN^3\,M^3\,N_f^3\,
         \csc^{63}\!\left(\theta_2\right)\logrh^3
         + 2.28571\times10^{34}\,\rh^6\sin^{64}\!\left(\theta_1\right)\Bigr)
\nonumber\\[4pt]
&& \times
   \Bigl(3.4314\times10^{33}\,\rh^{14}
         \sin^{60}\!\left(\theta_1\right)\sin^{63}\!\left(\theta_2\right)
\nonumber\\[2pt]
&& \quad
         -\,0.0369536\,g_s^6\,\logN^3\,M^3\,N_f^3\,\rh^8\,
         \csc^4\!\left(\theta_1\right)\logrh^3\Bigr),
\end{eqnarray}
using which one obtains:
\begin{eqnarray}
\label{I7overGrr}
\frac{\mathcal{I}_7}{G_{rr}^{\mathcal{M}}}
&=& \frac{0.00325329\,\lambda_5^2\,\logN^4\,
          \sqrt{\dfrac{1}{N}}\,(\sqrt{3})^4\,
          \sin^{124}\!\left(\theta_1\right)\sin^{122}\!\left(\theta_2\right)
          (r - \rh)^3}
         {\epsilon^4\,g_s^{19/2}\,M^6\,N_f^6\,\rh\,\logrh^{12}}
\nonumber\\[4pt]
&& \times
   \frac{\left(\dfrac{8\pi - \gsNf\log\!\left(4\rh^6\right)}{g_s}\right)^{1/3}}
        {\left(\gsNf\logrh - 4.18879\right)^2
         \left(\gsNf\log\!\left(4\rh^6\right) - 25.1327\right)}
\nonumber\\[4pt]
&& \times
   \Bigl(g_s^6\,\logN^3\,M^3\,N_f^3\,
         \csc^{64}\!\left(\theta_1\right)\csc^{63}\!\left(\theta_2\right)\logrh^3
         - 9.28571\times10^{34}\,\rh^6\Bigr)
\nonumber\\[4pt]
&& \times
   \Bigl(g_s^6\,\logN^3\,M^3\,N_f^3\,
         \csc^{63}\!\left(\theta_2\right)\csc^{64}\!\left(\theta_1\right)\logrh^3
         + 2.28571\times10^{34}\,\rh^6\Bigr)
\nonumber\\[4pt]
&& +\; \mathcal{O}\!\left(\left(\frac{1}{N}\right)^{3/2}\right).
\end{eqnarray}
Further,
\begin{eqnarray}
\label{gM7 plus 3}
g_7 &=& \frac{g_s^{7/2}\,M^2\,N^{19/10}\,N_f^5\,\rh^2\,
              (r-\rh)\,\logrh^7}
             {3456\sqrt[3]{2}\cdot 3^{2/3}\pi^{31/6}\,r\,
              \alpha_{\theta_1}^6\,\alpha_{\theta_2}^4\,
              \sqrt[3]{-N_f\logrh}};
\nonumber\\[6pt]
g_3 &=& \frac{9\left(\dfrac{1}{N}\right)^{3/2} r^6}
             {512\,\pi^{7/2}\,g_s^{3/2}}
        \left(-N_f\log\!\left(9a^2 r^4+r^6\right)
              +\frac{8\pi}{g_s}
              +2\logN\,N_f
              -4N_f\log\!\left(\alpha_{\theta_1}\right)\right.
\nonumber\\[4pt]
&&\qquad\qquad\qquad\left.
              -\,4N_f\log\!\left(\alpha_{\theta_2}\right)
              +4N_f\log 4
        \right)^2
        +\mathcal{O}\!\left(\left(\frac{1}{N}\right)^{5/2}\right).
\end{eqnarray}

Using (\ref{I7overGrr}) and (\ref{gM7 plus 3}), one obtains:
\begin{eqnarray}
\label{Krr}
\mathcal{K}_{rr}
&=& -\int_{M_7}\!d^7y\;
    \frac{1.8\times10^{-6}\,\lambda_5^2\,\logN^4\,(\sqrt{3})^4\,
          (g_s N)^{3/4}}
         {\epsilon^4\,g_s^{35/4}\,M^5\,N^{3/4}\,N_f^{9/2}\,
          \sqrt{r}\,\logrh^{13}\,
          \left(\gsNf\logrh - 4.18879\right)^2}
\nonumber\\[4pt]
&& \times
   \sin^{122}\!\left(\theta_1\right)\sin^{121}\!\left(\theta_2\right)
   (r-\rh)^{7/2}(-\logrh)^{7/2}
   \sqrt[6]{-N_f\logrh}
\nonumber\\[4pt]
&& \times
   \Bigl(g_s^6\,\logN^3\,M^3\,N_f^3\,
         \csc^{64}\!\left(\theta_1\right)\csc^{63}\!\left(\theta_2\right)\logrh^3
         - 9.28571\times10^{34}\,\rh^6\Bigr)
\nonumber\\[4pt]
&& \times
   \Bigl(g_s^6\,\logN^3\,M^3\,N_f^3\,
         \csc^{63}\!\left(\theta_2\right)\csc^{64}\!\left(\theta_1\right)\logrh^3
         + 2.28571\times10^{34}\,\rh^6\Bigr).
\end{eqnarray}

Assuming $\omega_{x^1} = \omega_{x^2} = \omega_{x^3}$, one obtains:
\begin{eqnarray}
\label{Nomega}
\mathcal{N}_\omega
&=& \frac{128 g_s N\omega^2\pi^{7/3}}{3^{4/3}\,r^4
    \left(2 N_f\log N -6 N_f\log\! r \right)^{4/3}},
\end{eqnarray}
using which one obtains:
\begin{eqnarray}
\label{Km}
\hskip -0.5 in  \mathcal{K}_m
&=& -\int_{M_7}\!d^7y\;
    \frac{8.7\times10^{-6}\,\lambda_5^2\,\omega^2\,r^{7/2}\,
          (\sqrt{3})^4\,(g_s N)^{3/4}}
         {\epsilon^4\,g_s^{45/4}\,\logN^4\,M^5\,N^{5/4}\,N_f^{11/3}\,
          \rh^9\,|\logrh|^{73/6}}
\nonumber\\[4pt]
\hskip -0.5 in && \times
   \frac{\sin^{58}\!\left(\theta_2\right)\csc^2\!\left(\theta_1\right)\,
         \logN^8\,(r-\rh)^{5/2}|\logrh|^{7/2}\,
         }
        {\left(2\log N-N_f\log\!\left(r^6+3r^4\rh^2\right)\right)^{10/3}}\left( \frac{4096g_s^2N^2\pi^{14/3}}{6^{8/3}N_f^{8/3}r^8}\right)
\nonumber\\[4pt]
\hskip -0.5 in&& \times
   \Bigl(g_s^6\,\logN^3\,M^3\,N_f^3\,
         \csc^{63}\!\left(\theta_2\right)\logrh^3
         + 2.28571\times10^{34}\,\rh^6\sin^{64}\!\left(\theta_1\right)\Bigr)
\nonumber\\[4pt]
\hskip -0.5 in&& \times
   \Bigl(3.4314\times10^{33}\,\rh^{14}
         \sin^{60}\!\left(\theta_1\right)\sin^{63}\!\left(\theta_2\right)
\nonumber\\[2pt]
\hskip -0.5 in&&\quad
         -\,0.0369536\,g_s^6\,\logN^3\,M^3\,N_f^3\,\rh^8\,
         \csc^4\!\left(\theta_1\right)\logrh^3\Bigr).
\end{eqnarray}

One can show,
\begin{eqnarray}
\label{Kmtheta12integral-finite}
&&\lim_{\delta_{1,2}\rightarrow0^+}
  \int_{\delta_2}^{\pi-\delta_2}\!\!d\theta_2
  \int_{\delta_1}^{\pi-\delta_1}\!\!d\theta_1\;
  \sin^{58}\!\left(\theta_2\right)\csc^2\!\left(\theta_1\right)
\nonumber\\[4pt]
&&\quad\times
  \Bigl(g_s^6\,\logN^3\,M^3\,N_f^3\,
        \csc^{63}\!\left(\theta_2\right)\logrh^3
        + 2.28571\times10^{34}\,\rh^6\sin^{64}\!\left(\theta_1\right)\Bigr)
\nonumber\\[4pt]
&&\quad\times
  \Bigl(3.4314\times10^{33}\,\rh^{14}
        \sin^{60}\!\left(\theta_1\right)\sin^{63}\!\left(\theta_2\right)
\nonumber\\[2pt]
&&\qquad
        -\,0.0369536\,g_s^6\,\logN^3\,M^3\,N_f^3\,\rh^8\,
        \csc^4\!\left(\theta_1\right)\logrh^3\Bigr)
  \bigg|^{\rm finite}
\nonumber\\[6pt]
&&= \frac{2\,\kappa_{\phi_1\phi_2\psi x^{10}}\,\rh^8}{T}
   \Bigl(-2\times10^{156}\,g_s^{12}\,\logN^6\,M^6\,N_f^6\,\logrh^6
\nonumber\\[4pt]
&&\qquad
         +\,1.40112\times10^{32}\,g_s^6\,\logN^3\,M^3\,N_f^3\,\rh^6\,\logrh^3
         + 1.01399\times10^{66}\,\rh^{12}\Bigr),
\end{eqnarray}
where $\kappa_{\phi_1\phi_2\psi x^{10}} = 4\pi\left(2\pi\right)^3$.
Similarly, in the context of $\mathcal{K}_{rr}$,
\begin{eqnarray}
\label{Krrtheta1theta2-finite}
&&\lim_{\delta_{3,4}\rightarrow0^+}
  \int_{\delta_3}^{\pi-\delta_3}\!\!d\theta_2
  \int_{\delta_4}^{\pi-\delta_4}\!\!d\theta_1\;
  \sin^{122}\!\left(\theta_1\right)\sin^{121}\!\left(\theta_2\right)
\nonumber\\[4pt]
&&\quad\times
  \Bigl(g_s^6\,\logN^3\,M^3\,N_f^3\,
        \csc^{64}\!\left(\theta_1\right)\csc^{63}\!\left(\theta_2\right)\logrh^3
        - 9.28571\times10^{34}\,\rh^6\Bigr)
\nonumber\\[4pt]
&&\quad\times
  \Bigl(g_s^6\,\logN^3\,M^3\,N_f^3\,
        \csc^{63}\!\left(\theta_2\right)\csc^{64}\!\left(\theta_1\right)\logrh^3
        + 2.28571\times10^{34}\,\rh^6\Bigr)
\nonumber\\[6pt]
&&= \frac{\kappa_{\phi_1\phi_2\psi x^{10}}}{T}
   \Bigl(7.51807\times10^{33}\,g_s^6\,\logN^3\,M^3\,N_f^3\,\rh^6\,\logrh^3
         + 5.46547\times10^{67}\,\rh^{12}\Bigr).
\end{eqnarray}

Therefore, one obtains:
\begin{eqnarray}
\label{Nfeff+ders coeff}
\frac{\left.\sqrt{g_3\,G_{rr}^{\mathcal{M}}}\,\mathcal{K}_m(r)\right|_{r=r_h}}{ (r-\rh)^2}
&=& \frac{{\cal K}_{\sqrt{g_3\,G_{rr}^{\mathcal{M}}}\,\mathcal{K}_m}\,\lambda_5^2\,
           \kappa_{\phi_1\phi_2\psi x^{10}}\,N\,\omega^2\,
           (\sqrt{3})^4\,(\logN)^{10}g_s^{3}M\,N_f^{1/3}}
          {\epsilon^4\,\rh^{7/2}\,
           T\,|\logrh|^{32/3}\,} \equiv\alpha_3  \nonumber\\[4pt]
-\frac{\left.\sqrt{g_3\,G_{rr}^{\mathcal{M}}}\,|\omega_1|^2\,\mathcal{K}_{rr}\right|_{r=r_h}}{(r-\rh)^3}
&=& \frac{{\cal K}_{\sqrt{g_3\,G_{rr}^{\mathcal{M}}}\,|\omega_1|^2\,\mathcal{K}_{rr}}\,\lambda_5^2\,
          \kappa_{\phi_1\phi_2\psi x^{10}}\,\logN^4\,\omega^2\,\rh^{4}\,
          (\sqrt{3})^4\,\sqrt{N}}
         {\epsilon^4\,g_s^{7/2}\,M^2\,N_f^{11/3}\,T\,
          |\logrh|^{25/3}}\equiv \alpha_2,
\end{eqnarray}
where ${\cal K}_{\sqrt{g_3\,G_{rr}^{\mathcal{M}}}\,\mathcal{K}_m}\sim{\cal K}_{\sqrt{g_3\,G_{rr}^{\mathcal{M}}}\,|\omega_1|^2\,\mathcal{K}_{rr}}^6\gg1$.

One thus obtains the following differential equation:
\begin{equation}
\label{B(r)-EOM}
{\alpha_1} (r-{r_h})^2 {\cal B}_{\rm KK}'(r)+{\alpha_2}
   (r-{r_h})^3 {\cal B}_{\rm KK}''(r)+{\alpha_3} {\cal B}_{\rm KK}(r)
   (r-{r_h})^2=0,
\end{equation}
where $\alpha_1 = 3 \alpha_2$.

The general solution of \eqref{eq:NeffEOM} for $r>\rh$ is
\begin{equation}
\mathcal{B}_{\rm KK}(r) = \left(\frac{\ath(\rh-r)}{\ade}\right)^{\!(\ade-\anu)/(2\ade)}
\left[
  C_1  I_{\nu}\!\left(2\sqrt{\frac{\ath(\rh-r)}{\ade}}\right)
  + C_2
  K_{\nu}\!\left(2\sqrt{\frac{\ath(\rh-r)}{\ade}}\right)
\right],
\label{eq:solution}
\end{equation}
where $\nu = \anu/\ade - 1 = 2$. 
The two conditions imposed throughout are
\begin{equation}
\textbf{(A)}\;\;{\cal B}_{\rm KK}(r)\to 0\quad(r\to+\infty),\qquad
\textbf{(B)}\;\;0<|{\cal B}_{\rm KK}(\rh)|<+\infty.
\label{eq:condAB}
\end{equation}
For $r > \rh$ the quantity $\rh - r < 0$, so $z^2 =
4\ath(\rh-r)/\ade < 0$ as $\ath/\ade > 0$. As $\ath/\ade > 0$, so one defines $z = iw$ with $w = 2\sqrt{\ath(r-\rh)/\ade} \in \mathbb{R}^+$ for $r > \rh$. Hence, (\ref{eq:solution}) is effectively expressed in terms of $J_\nu(w)$ and $Y_\nu(w)$:
\begin{equation}
\mathcal{B}_{\rm KK}(r) = \left(\frac{\ath(\rh-r)}{\ade}\right)^{\!(\ade-\anu)/(2\ade)}
\left[
  \tilde{C}_1  J_{\nu}\!\left(2\sqrt{\frac{\ath(r - \rh)}{\ade}}\right)
  + \tilde{C}_2
  Y_{\nu}\!\left(2\sqrt{\frac{\ath(r -\rh)}{\ade}}\right)
\right],
\label{eq:solution-JY}
\end{equation}

For every $\nu > 0$, the net exponent of $(r-\rh)$ contributed by the
outer prefactor \eqref{eq:solution} and the leading power of $I_\nu$
or $J_\nu$ in \eqref{eq:solution} is identically zero:
\begin{equation}
\frac{\ade-\anu}{2\ade} + \frac{\nu}{2} = 0.
\label{eq:identity}
\end{equation}
Consequently, $\mathrm{Prefactor}(r)\cdot I_\nu(z)$ and
$\mathrm{Prefactor}(r)\cdot J_\nu(w)$ both approach a finite nonzero
constant as $r \to \rh^+$, for any $\nu > 0$.

For $K_\nu$ and $Y_\nu$, the corresponding net exponent is
\begin{equation}
\frac{\ade-\anu}{2\ade} - \frac{\nu}{2} = \frac{\ade-\anu}{\ade} - \nu
= \frac{-2(\anu - \ade)}{\ade},
\label{eq:Kexponent}
\end{equation}
which is negative for $\anu > \ade$ (divergence), zero for $\anu =
\ade$ (logarithmic divergence in $K_0, Y_0$), and positive for
$\anu < \ade$ (the function vanishes). Hence, for normalizable solutions at the horizon, $C_2 = \tilde{C}_2=0$. Horizon normalizability picks out the following solution from (\ref{eq:solution-JY}):
\begin{equation}
{\cal B}_{\rm KK}(r) = {\cal B}_{\rm KK}(r=r_h) \frac{J_2\!\left(2\sqrt{\frac{\ath}{\ade}(r-\rh)}\right)}{r - r_h},
\quad r > \rh,
\label{eq:solution-final}
\end{equation}
where one is free to choose $B(r=r_h)$.
\end{proof}

Consider turning on an $r$-dependent field ${\cal B}_{D6}(r): F_{x^1x^2} = \partial_{[x^1}A_{x^2]}, A_{x^1} = x^2 {\cal B}_{D6}(r)/2, A_{x^2} = - x^1{\cal B}_{D6}(r)/2$, on the world-volume $\Sigma^{(1+6)}\left(\cong S^1_t\times_w\mathbb{R}^3\right)\times_w S^2_{\rm squashed}$ of the dual type IIA flavor $D6$-branes results in the DBI action $\int_{\Sigma^{(1+6)}}e^{-\Phi^{IIA}}\sqrt{i^*g^{IIA} + F}, i:\Sigma^{(1+6)}\hookrightarrow M_{11}$.

In the IR, the EOM for ${\cal B}_{D6}(Z)$ can be written as
\begin{equation}
\label{IR-EOM-BD6}
 r_h^{2}\big[48\,Z B''(Z) + (196 Z+48) B'(Z)\big]
- 16\pi g_s N (30Z+9) {\cal B}_{D6}(Z) = 0 \, ,
\end{equation}
whose general solution is
{\footnotesize
\begin{eqnarray}
{\cal B}_{D6}(Z) &=& \exp\!\left(-\frac{Z\big(\sqrt{5760\pi g_sN + 2401 r_h^2} + 49 r_h\big)}{24 r_h}\right) \nonumber\\
&&\times\Bigg[c_1\, U\!\left(\frac{36 g_s\pi N}{r_h\sqrt{2401 r_h^2+5760 g_sN\pi}}+\frac{49 r_h}{2\sqrt{2401r_h^2+5760g_sN\pi}}+\frac12,\;1,\;\frac{Z\sqrt{2401 r_h^2+5760 g_sN\pi}}{12 r_h}\right) \nonumber\\
&&\hphantom{\times\Bigg[}+\,c_2\, L_{-\frac{36 g_s\pi N}{r_h\sqrt{2401 r_h^2+5760g_sN\pi}}-\frac{49 r_h}{2\sqrt{2401r_h^2+5760g_sN\pi}}-\frac12}\!\left(\frac{Z\sqrt{2401 r_h^2+5760g_sN\pi}}{12 r_h}\right)\Bigg] \, ,
\end{eqnarray}
}
where $U(a,b,x)$ is Tricomi's confluent hypergeometric function and $L_\nu(x)$ the generalized Laguerre function of order $\nu$.

Expanding this solution about $Z=0$ shows that the coefficient of $c_1$ carries a logarithmic divergence, $\log Z$. Regularity at the horizon therefore requires, $c_1 = 0$,leaving the regular IR solution
\begin{equation}
{\cal B}_{D6}(Z) = C_2\, e^{-AZ}\, L_\nu(BZ) \, ,
\end{equation}
with
\begin{align}
A &= \frac{49 r_h + \sqrt{5760 g_sN\pi + 2401 r_h^2}}{24 r_h} \, , \\
B &= \frac{\sqrt{5760 g_sN\pi + 2401 r_h^2}}{12 r_h} \, , \\
\nu &= -\frac12 - \frac{36 g_sN\pi}{ r_h\sqrt{5760g_sN\pi+2401 r_h^2}} - \frac{49 r_h}{2\sqrt{5760g_sN\pi+2401 r_h^2}} \, .
\end{align}

Expanding for small $Z$,
\begin{equation}
e^{-AZ} = 1 - AZ + \mathcal{O}(Z^2) \, , \qquad
L_\nu(BZ) = L_\nu(0) + BZ\,L_\nu'(0) + \mathcal{O}(Z^2) \, ,
\end{equation}
and using the identities $L_\nu(0)=1$ and $L_\nu'(0)=-\nu$, the solution reduces to
\begin{equation}
{\cal B}_{D6}(Z) = C_2\big[1 + (-A - B\nu)\,Z + \mathcal{O}(Z^2)\big] \, .
\end{equation}
Substituting the explicit expressions for $A$, $B$, and $\nu$, the coefficient of the linear term simplifies to
\begin{equation}
-A - B\nu = \frac{3 g_sN\pi}{ r_h^2} \, ,
\end{equation}
yielding the IR solution
\begin{equation}
\,{\cal B}_{D6}(Z) = B(Z=0)\left(1 + \frac{3g_sN\pi}{ r_h^2}\,Z\right) + \mathcal{O}(Z^2)\, .
\end{equation}

In the UV, the equation of motion for ${\cal B}_{D6}(Z)$ reduces to
\begin{equation}
 r_h^{2}\Big[e^{8Z}B''(Z) + 3e^{8Z}B'(Z)\Big] - 16\pi g_s N\, e^{6Z}\,{\cal B}_{D6}(Z) = 0 \, ,
\end{equation}
whose general solution takes the form
\begin{equation}
\begin{aligned}
{\cal B}_{D6}(Z)=\;&
\frac{2\sqrt{g_sN\pi}}{r_h}\,e^{-Z}C_1
\left[
-\frac{r_h\cosh(4\xi)}{2\sqrt{g_sN\pi}}
+2\sinh(4\xi)
\right]\\
&-\frac{3i\sqrt{g_sN\pi}}{4r_h}\,e^{-Z}C_2
\left[
2\cosh(4\xi)
-\frac{r_h\sinh(4\xi)}{2\sqrt{g_sN\pi}}
\right],
\end{aligned}
\end{equation}
where
\begin{equation}
\xi=\frac{\sqrt{g_sN\pi}}{r_h}\,e^{-Z}.
\end{equation}

Using $\cosh(4\xi)=\tfrac12(e^{4\xi}+e^{-4\xi})$ and $\sinh(4\xi)=\tfrac12(e^{4\xi}-e^{-4\xi})$, the solution decomposes cleanly into exponentially growing and decaying pieces,
\begin{equation}
{\cal B}_{D6}(Z)
=
-\frac{1}{16}(4\xi-1)(-8C_1+3iC_2)\,e^{4\xi}
+\frac{1}{16}(4\xi+1)(-8C_1-3iC_2)\,e^{-4\xi}.
\end{equation}

Regularity requires the exponentially growing mode to vanish. We consider the two natural conditions that achieve this.

\paragraph{Case 1: $8C_1-3iC_2=0$.} 
One gets,
\begin{equation}
{\cal B}_{D6}(Z)
=
-\frac{1}{16}(4\xi+1)(8C_1+3iC_2)\,e^{-4\xi}.
\end{equation}
Imposing the constraint, this simplifies to
\begin{equation}
{\cal B}_{D6}(Z)
=
-C_1(1+4\xi)\,e^{-4\xi},
\qquad
\xi=\frac{\sqrt{g_sN\pi}}{r_h}\,e^{-Z},
\end{equation}
or equivalently,
\begin{equation}
{\cal B}_{D6}(Z)
=
-C_1
\left(
1+\frac{4\sqrt{g_sN\pi}}{r_h}e^{-Z}
\right)
\exp\!\left[
-\frac{4\sqrt{g_sN\pi}}{r_h}e^{-Z}
\right].
\end{equation}

\paragraph{Case 2: $8C_1+3iC_2=0$.}
This results in
\begin{equation}
{\cal B}_{D6}(Z)\propto \xi\,C_1\,e^{4\xi},
\qquad
\xi=\frac{\sqrt{g_sN\pi}}{r_h}\,e^{-Z}.
\end{equation}
Redefining $\frac{\sqrt{g_sN\pi}}{r_h}C_1\to {\cal B}_{D6}(r_h)$, the asymptotic solution becomes
\begin{equation}
{\cal B}_{D6}(Z)\sim
\, {\cal B}_{D6}(r_h)\,e^{-Z}
\exp\!\left(
\frac{4\sqrt{g_sN\pi}}{r_h}e^{-Z}
\right).
\label{BD6-UV}
\end{equation}
Assuming one performs a large-$Z$ expansion first, ${\cal B}_{D6}(r\in{\rm UV})\sim\frac{1}{r - r_h}$ and from (\ref{eq:solution-final}), ${\cal B}_{KK}(r\in{\rm UV})\sim\frac{1}{(r - r_h)^{5/4}}\sim{\cal B}_{D6}^{5/4}(r\in{\rm UV})$; we further identity  ${\cal B}_{\rm KK}(r_h) = {\cal B}_{D6}^{5/4}(r=r_h)$.

\section{Conclusion and outlook}
\label{Conclusion}

The results of this paper, consolidated in Theorem~\ref{thm:mainresult}, identify a contact structure(derivable from a $G_2$-structure)-induced symplectic ``nearly'' half-flat transverse
$SU(3)$-structure on a closed seven-fold in the Infra Red (i.e., small values of the conifold radial coordinate $r$) in the finite-$N$ MQGP ${\cal M}$-theory dual of thermal QCD-like theories and exhibit one of its concrete dynamical consequences - geometrical origin of a weak ``magnetic'' field. We summarize the
mathematical content and indicate the questions it leaves open.

On the geometric side, we have shown
(Proposition~\ref{proposition}, via
Lemmas~\ref{LemmaSymplecticSU3}--\ref{HalfFlatSymplecticSU3}) that the transverse $SU(3)$
structure $(\omega_\Phi,\Omega)$ induced from the Contact 3-Structure of the $G_2$-structure
seven-fold $\mathcal{M}_7$ (a warped product of the ${\cal M}$-theory circle $S^1_{\mathcal{M}}$ with a non-K\"{a}hler six-fold, which is itself a thermal-circle fibration over a non-Einsteinian deformation of the Sasaki--Einstein coset $T^{1,1}$) has, in the MQGP limit (\ref{eq:MQGP}) and in the infrared, a single non-vanishing
intrinsic torsion class, $W_\Phi^{SU(3)} = W_2^-$. Further, $d\omega_\Phi = 0$ and
$d\Omega_+ = \mathcal{O}\!\left(e^{-N^{1/3}/\mathcal{O}(1)}\right)$ (cf.\ \eqref{eq:dOmegaIR}), so that the
structure is symplectic and \emph{nearly} half-flat, the departure from exact half-flatness
being exponentially suppressed in $N^{1/3}$ in the MQGP limit (\ref{eq:MQGP}). This places the transverse geometry unambiguously within the
Chiossi--Salamon classification of $SU(3)$ structures and completes the torsion-class
identification for this dual: the half-flat locus is recovered exactly in the strict large-$N$
limit, and the finite-$N$-suppressed corrections are exponential-in-$N^{1/3}$-suppressed and hence negligible. We emphasize that this is a statement about the intrinsic torsion alone in the IR.

On the analytic side, the same half-flat structure was shown to control a genuine field
equation. Kaluza--Klein reduction of the eleven-dimensional Hodge dual of the differential of the positive $G_2$-structure three-form on $\mathcal{M}_7$,
organized through the $G_2$/symplectic half-flat $SU(3)$ data, yields a weak four-dimensional
magnetic field whose squared norm factorizes cleanly into external and internal contributions
(Proposition~\ref{prop:normfact}) and whose radial profile $\mathcal{B}_{\rm KK}(r)$ satisfies a
Bessel-type ordinary differential equation with a regular singular point at the horizon
(Proposition~\ref{prop:ODE}). Normalisability and infrared regularity select
$J_2$  as the solution that is normalisable at the horizon
and decays for $r\gg r_h$. The harmonicity of the aforementioned internal factor is not assumed but \emph{derived}, and is enforced by the same exponential-in-$N$ suppression that underlies the geometric result.

Taken together, the two halves of Theorem~\ref{thm:mainresult} show that a purely
torsion-theoretic statement about an $SU(3)$ structure leads up to a well-posed,
analytically solvable boundary value problem, with the same  parameter $g_s M^2/N<1$ and $r_h\sim e^{-\frac{N^{1/3}}{{\cal O}(1)}}\ll1, |\log r_h|\sim N^{1/3}>1$, governing both the geometry and the dynamics. The interest of the result for the present
context is precisely that the bridge between the symplectic nearly half-flat $SU(3)$-structure classification and the field equation is explicit and controlled at every step.

Several directions merit further study. First, the torsion-class identification has been
established to leading non-trivial order in $g_s M^2/N$; a systematic treatment of the
higher-order corrections, and in particular a proof that no class other than $W_2^-$ is
activated to all orders, would sharpen Proposition~\ref{proposition} into an exact statement. Second, the Bessel-type reduction relies on the integrated coefficient functions $\anu,\ade,\ath$ respectively of $(r - r_h)^2{\cal B}_{\rm KK}(r), (r - r_h)^3{\cal B}''_{\rm KK}(r), (r - r_h)^2{\cal B}_{\rm KK}(r)$ inheriting definite signs from the background; a coordinate-invariant characterisation of the ratio $\ath/\ade$ in terms of $G_2$-structure data would
clarify the geometric origin of the $J_2$ selection. Finally, the symplectic nearly half-flat
structure isolated here is a natural starting point for studying flux deformations and the
associated moduli, and for comparison with half-flat structures arising in other
compactifications. 

\section*{Acknowledgements}

The author thanks the Department of Physics, McGill University, for its warm hospitality during a visit in which this work was initiated, and Keshav Dasgupta in particular for  discussions and for encouragement to write up these results as a separate paper, as well as IIT Roorkee for travel support. The author acknowledges the use of Claude Opus 4.8~(Anthropic) as an AI research assistant for help with \LaTeX\ typesetting, organisation of results (that are the author's) into the lemma--proposition--theorem framework. The \textsf{TikZ} code for Fig.~1 was generated with the assistance of ChatGPT~(OpenAI). 

\appendix
\section{$\Omega^1_a, \Omega^7_a, \Omega^a_{bc}$}
\label{Omegas}
\setcounter{equation}{0} \seceqaa

In this appendix, we list out the expresions for $\Omega^1_a, \Omega^7_a, \Omega^a_{bc}$ inclusive of ${\cal O}(\beta)$-corrections as derived in \cite{ACMS} (arXiv version 1), relevant to (\ref{Omega13Omega3ab}) - (\ref{Omega3bcOmega3dfantisymm}).

\noindent$\bullet{\bf \Omega^1_a}$ Components

{\scriptsize
\begin{eqnarray}
\label{Omega1a}
& & \hskip -0.5in \Omega^1_2=\frac{80 N^{7/20} \alpha _{\theta_2} \left(2 a^2+r^2\right)}{M r^2 \alpha _{\theta_1}^2 N_f g_s^{7/4} \log (r) \left(2\tilde{\cal A}\right)} -\beta^{1/4}\left(\frac{\tilde{\omega}_{12} {\cal C}_q \sqrt{{\cal C}_{zz}^{(1)}} N^{17/20} \sqrt{\alpha _{\theta_1}}}{\alpha _{\theta_2}^4 \sqrt[4]{g_s} \left(\tilde{\cal A}\right)}\right)\nonumber\\
   & & \hskip -0.5in \Omega^1_3=\frac{56.7618 N^{7/20} \alpha _{\theta_2}}{M \alpha _{\theta_1}^2 N_f g_s^{7/4} \log (r) \left(\tilde{\cal A}\right)}  -\beta^{1/4}\left(\frac{{\tilde{\omega}_{13}} {\cal C}_q \sqrt{{\cal C}_{zz}^{(1)}} N^{17/20} \sqrt{\alpha _{\theta_1}}}{\alpha _{\theta_2}^4 \sqrt[4]{g_s} \left(\tilde{A}\right)}\right) \nonumber\\
   & & \hskip -0.5in \Omega^1_4=\frac{15.6835 M \left(r^2-2.85714 a^2\right) N_f g_s^{7/4} \log ^2(N) \log (r)}{\sqrt[4]{N} r^2 \left(\tilde{A}\right)}  -\beta^{1/4}\left(\frac{\tilde{\omega}_{14} {\cal C}_q \sqrt{{\cal C}_{zz}^{(1)}} N^{11/20} \sqrt{\alpha _{\theta_1}} N_f}{\alpha _{\theta_2}^3 \sqrt[4]{g_s} \log (N) \sqrt[3]{N_f \left(2 \log (N)-\log
   \left(9 a^2 r^4+r^6\right)\right)} \left(N_f \left(2\tilde{A}\right)\right){}^{2/3}}\right)\nonumber\\   
   & & \hskip -0.5in \Omega^1_5=-\frac{88.059 N^{7/20} \alpha _{\theta_2}}{M \alpha _{\theta_1}^2 N_f g_s^{7/4} \log (r) \left(\tilde{A}\right)} +\beta^{1/4}\left(-\frac{\tilde{\omega}_{15} {\cal C}_q \sqrt{{\cal C}_{zz}^{(1)}} N^{17/20} \sqrt{\alpha _{\theta_1}}}{\alpha _{\theta_2}^4 \sqrt[4]{g_s} \left(\tilde{A}\right)}\right)\nonumber\\
   & & \hskip -0.5in \Omega^1_6=-\frac{1.29 M \sqrt[20]{N} \left(a^2-0.5 r^2\right) N_f^2 g_s^{7/4} \log (N) \log (r)}{r^2 \alpha _{\theta_2} \sqrt[3]{N_f \left(2 \log (N)-\log \left(9 a^2 r^4+r^6\right)\right)} \left(N_f
   \left(-\log \left(9 a^2 r^4+r^6\right)-4 \left(\log \left(\alpha _{\theta_1}\right)+\log \left(\alpha _{\theta_2}\right)\right)+2 \log (N)\right)\right){}^{2/3}} \nonumber\\
   & & \hskip -0.5in +\beta^{1/4}\left(\frac{\tilde{\omega}_{16} \sqrt[4]{\beta } {\cal C}_q \sqrt{{\cal C}_{zz}^{(1)}} N^{17/20} \sqrt{\alpha _{\theta_1}} N_f}{\alpha _{\theta_2}^4 \sqrt[4]{g_s} \log (N) \sqrt[3]{N_f
   \left(2 \log (N)-\log \left(9 a^2 r^4+r^6\right)\right)} \left(N_f \left(2\tilde{A}\right)\right){}^{2/3}}\right),\nonumber\\
\end{eqnarray}
}
where $\tilde{\cal A} \equiv  \log\left(\frac{N}{\sqrt{\left(9 a^2 r^4+r^6\right)}\alpha^2_{\theta_1}\alpha^2_{\theta_2}}\right)$.

\noindent$\bullet{\bf\Omega^7_a}$ Components
{\scriptsize
\begin{eqnarray}
\label{Omega7a}
& & \hskip -0.5in\Omega^7_2=-\frac{160 N^{7/20} \alpha _{\theta_2} \left(2 a^2+r^2\right)}{M r^2 \alpha _{\theta_1}^2 N_f g_s^{7/4} \log (r) \left(2\tilde{\cal A}\right)} -\beta^{1/4}\tilde{\omega}_{72}\left(\frac{{\cal C}_q \sqrt{{\cal C}_{zz}^{(1)}} N^{17/20} \sqrt{\alpha _{\theta_1}} \left(\left(-{6.1} a^2-{0.7} r^2\right) \log (N)+\left({3.06} a^2+{0.3} r^2\right) \log \left(9 a^2 r^4+r^6\right)\right)}{\alpha _{\theta_2}^4 \sqrt[4]{g_s} \log (N) \left(\tilde{\cal A}\right) \left(\tilde{D}\right)}\right) \nonumber\\
   & &\hskip -0.5 in\Omega^7_3=-\frac{3784.1 N^{7/20} \alpha _{\theta_2}}{M r^2 \alpha _{\theta_1}^2 N_f^2 g_s^{11/4} \log (r) \left(\tilde{\cal A}\right){}^2}  \times  \Biggl[(r N_f g_s \left(a^2 \log (r) \left(\tilde{\cal A}\right) -0.01 r \log \left(9 a^2 r^4+r^6\right) +0.03 r \log
   (N)-0.07 r \log \left(\alpha _{\theta_1}\alpha _{\theta_2}\right)\right)\Biggr] \nonumber\\
   & & \hskip -0.5in +\beta^{1/4}\tilde{\omega}_{73}\left(\frac{{\cal C}_q \sqrt{{\cal C}_{zz}^{(1)}} N^{17/20} \sqrt{\alpha _{\theta_1}}}{\alpha _{\theta_2}^4 \sqrt[4]{g_s} \log (N) \left(\tilde{D}\right)}\right) \nonumber\\
 & & \hskip -0.5in  \times\left(\frac{N_f g_s \left(({2.5} a^2 + 0.3 r^2) \log \left(\alpha _{\theta_1}\alpha_{\theta_2}\right)+\left(-{1.2} a^2-{0.1} r^2\right) \log (N)+\left({0.6} a^2+{0.1} r^2\right) \log \left(9 a^2
   r^4+r^6\right)-{0.4}
   r^2\right)-{1.7} r^2}{N_f g_s \left(\tilde{\cal A}\right)}\right)\nonumber\\
   & & \hskip -0.5 in \Omega^7_4=\frac{5870.7 N^{7/20} \alpha _{\theta_2}}{M r^2 \alpha _{\theta_1}^2 N_f^2 g_s^{11/4} \log (r) \left(\tilde{\cal A}\right){}^2} \Biggl[r N_f g_s \left(a^2 \log (r) \left(\tilde{\cal A}\right)-0.01 r \log \left(9 a^2 r^4+r^6\right) +0.03 r \log
   (N)-0.07 r \log \left(\alpha _{\theta_1}\alpha_{\theta_2}\right)\right)\Biggr] \nonumber\\
   & & \hskip -0.5in -\beta^{1/4}\tilde{\omega}_{74}\Biggl[\frac{{\cal C}_q \sqrt{{\cal C}_{zz}^{(1)}} N^{17/20} \sqrt{\alpha _{\theta_1}}}{\alpha _{\theta_2}^4 \sqrt[4]{g_s} \log (N) \left(\tilde{D}\right)}\Biggr] \left(\frac{-({19} a^2 + 2.2 r^2) \log \left(\alpha _{\theta_1}\alpha_{\theta_2}\right)+\left( a^2+{1.1} r^2\right) \log (N)+\left(-{4.9} a^2-{0.6} r^2\right) \log \left(9 a^2 r^4+r^6\right)}{\tilde{\cal A}}\right) \nonumber
\end{eqnarray}
}

{\scriptsize
\begin{eqnarray}
   & &\hskip -0.5in \Omega^7_5=\frac{176.121 N^{7/20} \alpha _{\theta_2}}{M r \alpha _{\theta_1}^2 N_f g_s^{7/4} \log (r) \left(\tilde{\cal A}\right){}^2} \Biggl[\log (N) \left(33.3 a^2 \log (r)+ r\right)+\left(-16.7 a^2 \log (r)-0.3 r\right) \log \left(9 a^2 r^4+r^6\right)-66.7 a^2 \log (r) \log \left(\alpha _{\theta_1}\alpha
   _{\theta_2}\right) \nonumber\\
  & &\hskip -0.5in -2.3 r \log \left(\alpha _{\theta_1}\alpha _{\theta_2}\right)\Biggr]  -\beta^{1/4}\tilde{\omega}_{75}\Biggl[\frac{10. {\cal C}_q \sqrt{{\cal C}_{zz}^{(1)}} N^{17/20} \sqrt{\alpha _{\theta_1}}}{\alpha _{\theta_2}^4 \sqrt[4]{g_s} \log (N) \left(\tilde{\cal A}\right)} \Biggr]
    \nonumber\\
   & &\hskip -0.5in \times \Biggl[\frac{N_f g_s \left(-({19} a^2 + 2.2 r^2)\log \left(\alpha_{\theta_1}\alpha _{\theta_2}\right)+\left({9.9}
   a^2+{1.1} r^2\right) \log (N)+\left(-{4.9} a^2-{0.6} r^2\right) \log \left(9 a^2 r^4+r^6\right)+{28} a^2+{3.1} r^2\right)+{120} a^2+{14} r^2}{N_f g_s \left(-(20 a^2 + 2 r^2) \log \left(\alpha _{\theta_1}\alpha _{\theta
   _2}\right)+\left(10. a^2+r^2\right) \log (N)+\left(-5 a^2-0.5 r^2\right) \log \left(9 a^2 r^4+r^6\right)+3. r^2\right)+14. r^2}\Biggr] \nonumber\\
& & \hskip -0.5in \Omega^7_6=\frac{12.9 M \sqrt[20]{N} N_f g_s^{7/4} \log (N) \log (r)}{r^2 \alpha _{\theta_2} \left(\tilde{A}\right){}^2} \Biggl[-0.2 a^2 \log \left(\alpha _{\theta_1}\right)-0.2 a^2 \log \left(\alpha _{\theta_2}\right)+\log (N) \left(a^2 r \log (r)+0.1 a^2-0.05 r^2\right)-2. a^2 r \log (r) \log \left(\alpha _{\theta_1}\right)\nonumber\\
& & \hskip -0.5in +\left(-0.5 a^2 r \log (r)-0.05 a^2+0.02 r^2\right) \log \left(9 a^2
   r^4+r^6\right)-2. a^2 r \log (r) \log \left(\alpha _{\theta_2}\right)+0.1 r^2 \log \left(\alpha _{\theta_1}\right)+0.1 r^2 \log \left(\alpha _{\theta_2}\right)\Biggr] \nonumber\\
   & & \hskip -0.5in - \beta^{1/4}\tilde{\omega}_{76}\left(\frac{\sqrt{{\cal C}_{x^{10}x^{10}}^{(1)}} {\cal C}_q N^{17/20} \sqrt{\alpha _{\theta_1}}}{\alpha _{\theta_2}^4 \sqrt[4]{g_s} \log (N) \left(\tilde{\cal A}\right)}\right) \Biggl[{6.6} a^2 \log \left(\alpha _{\theta_1}\right)+{6.6} a^2 \log \left(\alpha _{\theta_2}\right)+\left(-{3.3}
   a^2-{0.4} r^2\right) \log (N) \nonumber\\
  & & \hskip -0.5in  +\left({1.6} a^2+{0.2} r^2\right) \log \left(9 a^2 r^4+r^6\right)+{0.7} r^2 \log \left(\alpha _{\theta_1}\right)+{0.7} r^2 \log \left(\alpha _{\theta_2}\right)\Biggr]\frac{1}{\tilde{D}}
\end{eqnarray}
}
where $\tilde{D}\equiv(a^2+0.1r^2)\tilde{A}$.

\noindent (ii) $\Omega^3_{bc}$
{\footnotesize
\begin{eqnarray}   
\label{Omega3bc}
   & & \hskip -0.5in \Omega^3_{23}=-\frac{0.5 M N^{9/20} \left(r^2-3 a^2\right) \left(r^2-2 a^2\right) \left(2 a^2+r^2\right) N_f g_s^{7/4} \log ^2(N) \log (r)}{r^6 \alpha _{\theta_1}^2 \alpha _{\theta_2}}+\beta^{1/4}\tilde{\omega}^3_{23}\Biggl[-\frac{ {\cal C}_q \sqrt{{\cal C}_{zz}^{(1)}} N^{5/4}}{r^4 \alpha _{\theta_1}^{3/2} \alpha _{\theta_2}^4 \sqrt[4]{g_s} \log (N)}\nonumber\\
   & & \hskip -0.5in \times \left(33.3 a^2 r \log (r)-0.3 a^2+r^2\right) \left(\left(r^2-3.3 a^2\right) \log (N)-87.7 a^2 r \log(r)\right)\Biggr] \nonumber\\
   & & \hskip -0.5in \Omega^3_{24}=\frac{M \left(-988.8 a^4+352.3 a^2 r^2+72.4 r^4\right) N_f g_s^{7/4} \log ^2(N) \log (r)}{\sqrt[4]{N} r^4}-\beta^{1/4}\tilde{\omega}^3_{24}\Biggl[\frac{{\cal C}_q \sqrt{{\cal C}_{zz}^{(1)}} N^{19/20} \left(r^2-3 a^2\right) \left(2 a^2+r^2\right)}{r^4 \alpha _{\theta_1}^{3/2} \alpha _{\theta_2}^3 \sqrt[4]{g_s}}\Biggr] \nonumber\\
   & & \nonumber\\
& & \hskip -.5in \Omega^3_{25}=\frac{0.3 M N^{9/20} \left(r^2-3. a^2\right) \left(r^2-2. a^2\right) \left(2. a^2+r^2\right) N_f g_s^{7/4} \log ^2(N) \log (r)}{r^6 \alpha _{\theta_1}^2 \alpha _{\theta_2}}+\beta^{1/4}\tilde{\omega}^3_{25}\Biggl[\frac{ {\cal C}_q \sqrt{{\cal C}_{zz}^{(1)}} N^{5/4} }{r^4 \alpha
   _{\theta_1}^{3/2} \alpha _{\theta_2}^4 \sqrt[4]{g_s} \log (N)} \nonumber\\
   & & \hskip-0.5in \left(100. a^2 r \log (r)-4. a^2+r^2\right) \left(\left(r^2-3.3 a^2\right) \log (N)-87.6667 a^2 r \log (r)\right)\Biggr]\nonumber\\
   & & \hskip -0.5in \Omega^3_{26}=\frac{0.3 M N^{9/20} \left(r^2-3 a^2\right) \left(r^2-2 a^2\right) \left(2 a^2+r^2\right) N_f g_s^{7/4} \log ^2(N) \log (r)}{r^6 \alpha _{\theta_1}^2 \alpha _{\theta_2}} \nonumber\\
   & & \hskip -0.5in+ \beta^{1/4}\tilde{\omega}^3_{26}\Biggl[\frac{ {\cal C}_q \sqrt{{\cal C}_{zz}^{(1)}} N^{5/4} \left(2. a^2+r^2\right) \left(\left(r^2-3. a^2\right) \log (N)-78.9 a^2 r \log (r)\right)}{r^4 \alpha _{\theta_1}^{3/2} \alpha _{\theta
   _2}^4 \sqrt[4]{g_s} \log (N)}\Biggr]
   \nonumber\\
   & & \hskip -0.5in \Omega^3_{34}=-\frac{10.4 M^3 \left(1.7 a^2-0.8 r^2\right) \left(a^2-0.3 r^2\right) \left(0.2 a^2-0.07 r^2\right) N_f^3 g_s^{21/4} \log ^4(N) \log ^3(r)}{N^{3/20} r^6 \alpha _{\theta_2}^2} \nonumber\\
   & & \hskip -0.5in-\beta^{1/4}\tilde{\omega}^3_{34}\Biggl[\frac{ {\cal C}_q \sqrt{{\cal C}_{zz}^{(1)}} N^{19/20} \left(33.3 a^2 r \log (r)+r^2\right) \left(\left(r^2-3.3
   a^2\right) \log (N)-87.7 a^2 r \log (r)\right)}{r^4 \alpha _{\theta_1}^{3/2} \alpha _{\theta_2}^3 \sqrt[4]{g_s} \log (N)}\Biggr]\nonumber
   \end{eqnarray}
   \begin{eqnarray}   
\label{Omega3bc}
& & \hskip -0.5in \Omega^3_{35}=-\frac{0.5 M N^{9/20} \left(r^2-3.3 a^2\right) \left(r^2-3. a^2\right) N_f g_s^{7/4} \log ^2(N) \log (r)}{r^4 \alpha _{\theta_1}^2 \alpha _{\theta_2}}\nonumber\\
   & & \hskip -0.5in -\beta^{1/4}\tilde{\omega}^3_{35}\Biggl[\frac{\sqrt{{\cal C}_{x^{10}x^{10}}^{(1)}} {\cal C}_q N^{5/4} \left(33.3 a^2 r \log (r)+r^2\right) \left(\left(1.
   r^2-3.3 a^2\right) \log (N)-87.7 a^2 r \log (r)\right)}{r^4 \alpha _{\theta_1}^{3/2} \alpha _{\theta_2}^4 \sqrt[4]{g_s} \log (N)}\Biggr]\nonumber\\
   & & \hskip -0.5in \Omega^3_{36}=\frac{0.4 M N^{9/20} \left(r^2-3. a^2\right) \left(r^2-2. a^2\right) N_f g_s^{7/4} \log ^2(N) \log (r)}{r^4 \alpha _{\theta_1}^2 \alpha _{\theta_2}} \nonumber\\
   & & \hskip -0.5in -\beta^{1/4}\tilde{\omega}^3_{36}\Biggl[\frac{ {\cal C}_q \sqrt{{\cal C}_{zz}^{(1)}} N^{5/4} \left(a^2-0.3 r^2\right) \left(a^2 \log (r)+0.03 r\right)}{r^3 \alpha _{\theta_1}^{3/2} \alpha _{\theta_2}^4 \sqrt[4]{g_s}}\Biggr]\nonumber\\
& & \hskip -0.5in \Omega^3_{45}=  \frac{178.2 M^3 \left(a^2-0.3 r^2\right) \left(0.2 a^2-0.07 r^2\right) \left(0.03 r^2-0.06 a^2\right) N_f^3 g_s^{21/4} \log ^4(N) \log ^3(r)}{N^{3/20} r^6 \alpha _{\theta_2}^2}\nonumber\\
   & & \hskip -0.5in -\beta^{1/4}\tilde{\omega}^3_{45}\Biggl[ \frac{ {\cal C}_q \sqrt{{\cal C}_{zz}^{(1)}} N^{19/20} \left(33.3 a^2 r \log (r)+r^2\right) \left(\left(r^2-3.3
   a^2\right) \log (N)-87.7 a^2 r \log (r)\right)}{r^4 \alpha _{\theta_1}^{3/2} \alpha _{\theta_2}^3 \sqrt[4]{g_s} \log (N)}\Biggr]\nonumber\\
   & & \hskip -0.5in \Omega^3_{46}=\frac{0.09 M^3 \left(3.3 a^2-r^2\right)^2 \left(r^2-2. a^2\right) N_f^3 g_s^{21/4} \log ^4(N) \log ^3(r)}{N^{3/20} r^6 \alpha _{\theta_2}^2}
   \nonumber\\
   & & \hskip -0.5in +\beta^{1/4}\tilde{\omega}^3_{46}\Biggl[ \frac{ {\cal C}_q \sqrt{{\cal C}_{zz}^{(1)}} M^2 N^{13/20} \sqrt{\alpha _{\theta_1}} N_f^2 g_s^{13/4} \log (N) \log ^2(r) \left(35.7143 a^2 r \log (r)-2.8 a^2+r^2\right)
   }{r^4 \alpha _{\theta_2}^5}\nonumber\\
   & & \hskip -0.5in \times \left(\left(r^2-3.3 a^2\right) \log (N)-87.7 a^2 r \log (r)\right)\Biggr]\nonumber\\
   & & \hskip -0.5in \Omega^3_{56}=-\frac{0.5 M N^{9/20} \left(r^2-3. a^2\right) \left(r^2-2. a^2\right) N_f g_s^{7/4} \log ^2(N) \log (r)}{r^4 \alpha _{\theta_1}^2 \alpha _{\theta_2}}+\beta^{1/4}\tilde{\omega}^3_{56}\Biggl[\frac{ {\cal C}_q \sqrt{{\cal C}_{zz}^{(1)}} N^{5/4} \left(a^2-0.3 r^2\right)}{r^2 \alpha _{\theta_1}^{3/2} \alpha _{\theta_2}^4 \sqrt[4]{g_s}}\Biggr]\nonumber\\
\end{eqnarray}
}
In (\ref{Omega3bc}), $\left|\tilde{\omega}^3_{bc}\right|\ll1$.

From (\ref{Omega1a}) - (\ref{Omega3bc}), one notes that in the MQGP limit (\ref{eq:MQGP}),
\begin{eqnarray}
\label{similar-Omega^1_a-beta0}
& & \left(\Omega^1_2\right)^{\beta^0} \sim \left(\Omega^1_3\right)^{\beta^0} \sim
-\left(\Omega^1_5\right)^{\beta^0},
\end{eqnarray}
\begin{eqnarray}
\label{similar-Omega^7_a-beta0}
& & \left(\Omega^7_2\right)^{\beta^0} \sim - \left(\Omega^1_2\right)^{\beta^0};\nonumber\\
& & \left(\Omega^7_3\right)^{\beta^0} \sim - \left(\Omega^7_4\right)^{\beta^0} \sim
- \left(\Omega^7_5\right)^{\beta^0},
\end{eqnarray}
and,
\begin{eqnarray}
\label{similar-Omega^a_bc-beta0}
& & \left(\Omega^3_{23}\right)^{\beta^0} \sim \left(\Omega^3_{24}\right)^{\beta^0}
\sim \left(\Omega^3_{25}\right)^{\beta^0} \sim \left(\Omega^3_{26}\right)^{\beta^0}
\sim \left(\Omega^3_{35}\right)^{\beta^0} \sim \left(\Omega^3_{36}\right)^{\beta^0}
\sim \left(\Omega^3_{56}\right)^{\beta^0},\nonumber\\
& & \left(\Omega^3_{34}\right)^{\beta^0} \sim \left(\Omega^3_{45}\right)^{\beta^0} \sim \left(\Omega^3_{46}\right)^{\beta^0}.
\nonumber\\
\end{eqnarray}

\section{$\HomG(\mathbf{27},\mathbf{7}) = 0$}
\label{HomG27270}
\setcounter{equation}{0} \seceqbb

\begin{definition}[$G_2$-equivariant maps]
Let $V$ and $W$ be finite-dimensional real representations of $G_2$, with  group homomorphisms defined via
$_V:G_2\to\mathrm{GL}(V)$ and $_W:G_2\to\mathrm{GL}(W)$.
The space of \emph{$G_2$-equivariant} linear maps is given by:
\[
  \HomG(V,W)
  \;=\;
  \bigl\{\,f:V\to W\;{linear}\;\bigm|\;
    f\bigl(_V(g)\,v\bigr) = _W(g)\,f(v)
    \;\;\forall\,g\in G_2,\;v\in V
  \,\bigr\}.
\]
\end{definition}

\noindent
The condition $f(_V(g)v)=_W(g)f(v)$ is referred to as \emph{equivariance}. If $g\cdot v$ represents the $G_2$-action, the condition is equivalent to $f(g\cdot v)=g\cdot f(v)$, i.e. the following diagram is commutative: $\forall g\in G_2$,
  \begin{tikzcd}[column sep=2.8cm, row sep=1.4cm]
    V \arrow[r,"f"] \arrow[d,"_V(g)"'] & W \arrow[d,"_W(g)"] \\
    V \arrow[r,"f"'] & W
  \end{tikzcd}
commutes.
In matrix form, if $F$ corresponds to the $(\dim W)\times(\dim V)$ matrix of $f$, equivariance reads
\[
  F\,_V(g) = _W(g)\,F
  \qquad\forall\,g\in G_2.
\]

\begin{lemma}
\label{thm:schur}
Let $G$ be a group and let $V$, $W$ be its irreducible representations.
\begin{enumerate}
\item[\normalfont(i)] {\bf (Over any field.)}
  If $V\not\cong W$, then $\mathrm{Hom}_G(V,W)=0$: every
  equivariant linear map $f:V\to W$ is identically zero.
\item[\normalfont(ii)] {\bf (Over $\mathbb{C}$ only.)}
  If $V\cong W$ and ${\cal F}=\mathbb{C}$, then
  $\mathrm{Hom}_G(V,V)=\mathbb{C}\cdot\mathrm{Id}_V$: every equivariant
  $f:V\to V$, is a scalar multiple of the identity, $f=\lambda\,\mathrm{Id}_V$
  for a unique $\lambda\in\mathbb{C}$.
\end{enumerate}
\end{lemma}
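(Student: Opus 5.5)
The plan is to prove the two parts of the statement separately. Part (i) uses only linear algebra. Part (ii) uses the standard kernel/image argument for Schur's lemma, with one extra remark about endomorphisms. Throughout, "representation" means a representation of $G$ on a vector space over a field ${\cal F}$, and "irreducible" means the space is nonzero and has no $G$-invariant subspaces other than $0$ and itself.

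For (i), take any $f\in\mathrm{Hom}_G(V,W)$.

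\begin{enumerate}
\item The kernel $\ker f\subseteq V$ is $G$-invariant. If $f(v)=0$, then $f(g\cdot v)=g\cdot f(v)=0$.
\item The image $\mathrm{im} f\subseteq W$ is $G$-invariant, since $g\cdot f(v)=f(g\cdot v)$.
\item By irreducibility of $V$, either $\ker f=V$ or $\ker f=0$. By irreducibility of $W$, either $\mathrm{im} f=0$ or $\mathrm{im} f=W$.
\item If $f\neq0$, then $\ker f\neq V$, so $\ker f=0$. Also $\mathrm{im} f\neq 0$, so $\mathrm{im} f=W$.
\item Then $f$ is a bijective linear map that commutes with the $G$-action, i.e.\ an isomorphism of representations $V\cong W$. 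This contradicts $V\not\cong W$, so $f=0$.
\end{enumerate}

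None of these steps uses properties of ${\cal F}$, and no finite-dimensionality is needed, which matches the "(over any field)" label.

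For (ii), assume $V\cong W$ over $\mathbb{C}$. Fixing an isomorphism, it suffices to treat endomorphisms $f:V\to V$. I will assume $V$ is finite-dimensional, as in the paper's definition of $G_2$-equivariant maps.

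\begin{enumerate}
\item Since $\mathbb{C}$ is algebraically closed, the characteristic polynomial of $f$ has a root $\lambda$. So $f$ has an eigenvalue $\lambda\in\mathbb{C}$.
\item The map $f-\lambda\,\mathrm{Id}_V$ is again $G$-equivariant, because $\mathrm{Id}_V$ commutes with the action.
\item Its kernel is nonzero, as it contains an eigenvector. By the kernel argument from part (i), that kernel is all of $V$, so $f=\lambda\,\mathrm{Id}_V$.
\item The scalar $\lambda$ is unique because $V\neq0$.
\item Conversely, every $\lambda\,\mathrm{Id}_V$ is equivariant. Hence $\mathrm{Hom}_G(V,V)=\mathbb{C}\cdot\mathrm{Id}_V$.
\end{enumerate}

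The main obstacle is the scope of the hypotheses in (ii), not the argument itself. Over $\mathbb{R}$ the claim fails: the rotation action of $U(1)$ on $\mathbb{R}^2$ is irreducible, yet it commutes with the rotation by $\pi/2$, which is not a real scalar. So I would explicitly flag that (ii) needs both algebraic closure and finite dimension. If infinite-dimensional $V$ were intended, I would first look for a countable-dimension Dixmier-type argument, but I would restrict to the finite-dimensional case as the paper's usage requires.

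For the intended application in Appendix~\ref{HomG27270}, only part (i) is needed. It applies to the real irreducible representations $\mathbf{27}$ and $\mathbf{7}$ of $G_2$, which are non-isomorphic because their dimensions differ.
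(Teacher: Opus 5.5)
Your proposal is correct and follows essentially the same route as the paper. For (i) you use $G$-invariance of kernel and image together with irreducibility; for (ii) you obtain an eigenvalue $\lambda$ from the algebraic closure of $\mathbb{C}$ and then apply the kernel argument to the equivariant map $f-\lambda\,\mathrm{Id}_V$. Your explicit finite-dimensionality hypothesis in (ii) is a worthwhile clarification, since the paper's appeal to a characteristic polynomial of degree $\dim_{\mathbb{C}}V$ tacitly assumes it as well.
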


\begin{proof}[Proof of part (i)]
If $f:V\to W$ be $G$-equivariant, we show $f=0$ by analysing its kernel and image.

\medskip
\noindent
{\bf The kernel is $G$-invariant.}
If $v\in\ker(f)$, then  $\forall g\in G$:
\[
  f(g\cdot v) = g\cdot f(v) = g\cdot 0 = 0,
\]
so $g\cdot v\in\ker(f)$.
Hence $\ker(f)\subseteq V$ is a $G$-invariant subspace.

\medskip
\noindent
{\bf The image is $G$-invariant.}
If $w=f(v)\in\Im m(f)$, then $\forall g\in G$:
\[
  g\cdot w = g\cdot f(v) = f(g\cdot v)\in\Im m(f).
\]
Hence $\Im m(f)\subseteq W$ is a $G$-invariant subspace.

\medskip
\noindent
{\bf Applying irreducibility.}
Since $V$ is irreducible, its only $G$-invariant subspaces are $\{0\}$ and $V$.
Therefore:
\[
  \ker(f)=V \qquad{or}\qquad \ker(f)=\{0\}.
\]
{\bf Case 1:} $\ker(f)=V$.
Then $f=0$. 

\medskip
\noindent
{\bf Case 2:} $\ker(f)=\{0\}$.
Then $f$ is injective. Since $W$ is irreducible and $\Im m(f)\neq\{0\}$ is a $G$-invariant subspace of $W$, $\Im m(f)=W$.
Therefore $f:V\xrightarrow{\sim}W$ is an isomorphism, implying $V\cong W$.

This contradicts our assumption $V\not\cong W$. Hence Case 2 is impossible, and $f=0$ in all cases.
\end{proof}

\begin{proof}[Proof of part (ii)]
Let $f:V\to V$ be $G$-equivariant, with ${\cal F}=\mathbb{C}, {\cal F}$ being the ground field over which representations are defined, and $V$ irreducible.

\medskip
\noindent
{\bf $f$ has a complex eigenvalue.}
As ${\cal F}=\mathbb{C}$, the characteristic polynomial $\det(f-\lambda\,\mathrm{Id}_V)\in\mathbb{C}[\lambda]$
has degree $n=\dim_\mathbb{C} V\geq 1$. By the \emph{fundamental theorem of algebra}, every non-constant complex polynomial
has a root, so there exists $\lambda\in\mathbb{C}$ with $\det(f-\lambda\,\mathrm{Id}_V)=0$, meaning $f-\lambda\,\mathrm{Id}_V$ is not invertible.

\medskip
\noindent
{\bf $f-\lambda\,\mathrm{Id}_V$ is $G$-equivariant.}
So, $\forall g\in G$ and $v\in V$:
\[
  (f-\lambda\,\mathrm{Id}_V)(g\cdot v)
  = f(g\cdot v) - \lambda\,g\cdot v
  = g\cdot f(v) - \lambda\,g\cdot v
  = g\cdot(f-\lambda\,\mathrm{Id}_V)(v).
\]

\medskip
\noindent
{\bf $f=\lambda\,\mathrm{Id}_V$.}
The kernel $\ker(f-\lambda\,\mathrm{Id}_V)$ is a $G$-invariant subspace of $V$ 
(by the kernel argument of part (i)). Since $f-\lambda\,\mathrm{Id}_V$ is not invertible, $\ker(f-\lambda\,\mathrm{Id}_V)\neq\{0\}$.
By irreducibility: $\ker(f-\lambda\,\mathrm{Id}_V)=V$. Therefore, $(f-\lambda\,\mathrm{Id}_V)(v)=0$ for all $v\in V$, obtaining:
\[
  \boxed{f = \lambda\,\mathrm{Id}_V.}
\]

\end{proof}

If $V$ and $W$ are irreducible $G$-representations of \emph{different dimensions},
then $V\not\cong W$ automatically. So $\mathrm{Hom}_G(V,W)=0$ by Theorem~\ref{thm:schur}(i).

Since $\dim\mathbf{27} = 27 \neq 7 = \dim\mathbf{7}$, we have:
\[
  \mathbf{27} \;\not\cong\; \mathbf{7}.
\]
(Isomorphic representations must have equal dimension.) Therefore, 
\begin{lemma}
\label{HomG22770}
$\HomG(\mathbf{27},\mathbf{7}) = 0.$
\end{lemma}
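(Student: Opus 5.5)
The plan is to obtain $\HomG(\mathbf{27},\mathbf{7})=0$ as a direct application of Lemma~\ref{thm:schur}(i). That requires two facts: that $\mathbf{27}$ and $\mathbf{7}$ are irreducible real $G_2$-representations, and that they are non-isomorphic.

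\textbf{Step 1: irreducibility.} I will realise both spaces concretely. Take $\mathbf{7}$ to be the standard representation $\mathbb{R}^7\cong\Lambda^1$, on which $G_2\subset SO(7)$ acts transitively on the unit sphere $S^6$. A proper invariant subspace would meet $S^6$ in a proper invariant subset, which transitivity forbids, so $\mathbf{7}$ is irreducible.

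Take $\mathbf{27}$ to be $\Lambda^3_{\mathbf{27}}\cong S^2_0(\mathbb{R}^7)$, the traceless symmetric $2$-tensors, via the standard $G_2$-equivariant isomorphism $h\mapsto i_\vp(h)$ of Karigiannis. For irreducibility of $S^2_0$ under $G_2$ I will quote the branching rule of the $SO(7)$-irreducible $\mathbf{27}$ to $G_2$, which gives the $G_2$-irrep of highest weight $2\varpi_1$. If a self-contained check is wanted, I will use a Weyl dimension formula or character count: the complexification of $S^2(\mathbb{C}^7)$ decomposes under $G_2$ as $\mathbf{1}\oplus\mathbf{27}$, and its invariant bilinear forms are spanned by $g$ alone. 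Together these show the traceless part is irreducible. Since $\mathbf{27}$ is of real type, irreducibility over $\mathbb{C}$ descends to $\mathbb{R}$.

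\textbf{Step 2: non-isomorphism.} Any isomorphism of representations is in particular a linear bijection. Hence it preserves dimension, and $27\neq 7$ gives $\mathbf{27}\not\cong\mathbf{7}$.

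\textbf{Step 3: conclusion.} Lemma~\ref{thm:schur}(i) is stated over any field, so it applies over $\mathbb{R}$ and gives $\HomG(\mathbf{27},\mathbf{7})=0$.

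Steps~2 and~3 are immediate. The only substantive point, and the step I expect to be the main obstacle, is the irreducibility of the $27$-dimensional piece. For that I will rely on the standard representation-theoretic decomposition $\Lambda^3=\Lambda^3_{\mathbf 1}\oplus\Lambda^3_{\mathbf 7}\oplus\Lambda^3_{\mathbf{27}}$ already invoked in the text rather than rederive the branching rules.
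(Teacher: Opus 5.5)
Your proposal is correct and follows the paper's own argument: apply Schur's lemma (i) to two irreducible representations, which are non-isomorphic because $27\neq 7$. The paper simply assumes that $\mathbf{27}$ and $\mathbf{7}$ are irreducible, so your Step 1 adds justification without changing the route. That step uses transitivity of $G_2$ on $S^6$, and the splitting $S^2(\mathbb{R}^7)=\mathbf{1}\oplus\mathbf{27}$ via the highest weight $2\varpi_1$ and Weyl's dimension formula. One small correction: irreducibility of the complexification descends to $\mathbb{R}$ without any real-type hypothesis, so you can drop that assumption.
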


Let $M_7$ be a 7-manifold with a $G_2$-structure $\varphi$, and let
$\Lambda^3_{\mathbf{27}}\subset\Omega^3(M_7)$ be the $\mathbf{27}$-component
of the $G_2$-representation decomposition of 3-forms.

Consider the wedge-product map
\[
  W:\Omega^3(M_7)\longrightarrow\Omega^6(M_7),
  \qquad
  \alpha\longmapsto\alpha\wedge\varphi.
\]

\begin{lemma}[$W$ is $G_2$-equivariant]
Now, $\forall g\in G_2$ and $\alpha\in\Omega^3(M_7)$:
$W(g\cdot\alpha)=g\cdot W(\alpha)$.
\end{lemma}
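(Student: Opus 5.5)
The claim is that $W(\alpha)=\alpha\wedge\varphi$ is $G_2$-equivariant. Equivariance follows from two facts, naturality of the wedge product under linear maps and $G_2$-invariance of $\varphi$, so I will argue pointwise on a fixed tangent space $T_pM_7\cong\mathbb{R}^7$ with $G_2\subset GL(7,\mathbb{R})$ acting on forms by pullback. Concretely, for $g\in G_2$, set $g\cdot\alpha\equiv (g^{-1})^*\alpha$ on $\Lambda^k(T_p^*M_7)$. This is the action used throughout to decompose $\Omega^3(M_7)=\Lambda^3_{\mathbf 1}\oplus\Lambda^3_{\mathbf 7}\oplus\Lambda^3_{\mathbf{27}}$. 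The global statement on $\Omega^3(M_7)$ then follows because $W$ acts fibrewise, since it is $C^\infty(M_7)$-linear.

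The first step is that pullback by any linear map $A$ is an algebra homomorphism of the exterior algebra:
\[
A^*(\alpha\wedge\beta)=A^*\alpha\wedge A^*\beta .
\]
For decomposable forms this is immediate from the definition $A^*(\xi^1\wedge\cdots\wedge\xi^k)=(\xi^1\circ A)\wedge\cdots\wedge(\xi^k\circ A)$. It extends to all forms by bilinearity. Equivalently, in components,
\[
(A^*\alpha)_{a_1\dots a_k}=A^{b_1}{}_{a_1}\cdots A^{b_k}{}_{a_k}\,\alpha_{b_1\dots b_k},
\]
and antisymmetrisation commutes with this contraction.

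The second step uses the definition of $G_2$ as the stabiliser of the positive three-form, $G_2=\{g\in GL(7,\mathbb{R}) : g^*\varphi=\varphi\}$. It gives $g\cdot\varphi=(g^{-1})^*\varphi=\varphi$ for all $g\in G_2$, because $g^{-1}\in G_2$.

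Combining the two steps:
\[
W(g\cdot\alpha)=(g^{-1})^*\alpha\wedge\varphi=(g^{-1})^*\alpha\wedge(g^{-1})^*\varphi=(g^{-1})^*(\alpha\wedge\varphi)=g\cdot W(\alpha).
\]
For the use made of this lemma in the proof of (\ref{tau3dotpsi0}), I would also check that the identification $\Omega^6\cong\Omega^1$ via $\star_7$ is equivariant. This holds because $G_2\subset SO(7)$ preserves both the metric $g_\varphi$ and the orientation, since both are determined by $\varphi$. Hence $\star_7$ commutes with the $G_2$-action, and $\star_7\circ W\big|_{\Lambda^3_{\mathbf{27}}}$ is an element of $\HomG(\mathbf{27},\mathbf{7})$, which vanishes by Lemma \ref{HomG22770}.

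No step is a genuine obstacle. The only point needing care is to use the same action convention consistently: left action via $(g^{-1})^*$ on forms and on the target $\Omega^6$. Otherwise, mismatched $g$ versus $g^{-1}$ conventions could make the diagram in the definition of equivariance appear not to commute.
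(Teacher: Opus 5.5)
Your proof is correct and follows the paper's argument. The paper uses the same two ingredients, compatibility of the $G_2$-action with the wedge product and the invariance $g^{-1}\cdot\varphi=\varphi$, in the line $(g\cdot\alpha)\wedge\varphi=g\cdot\bigl(\alpha\wedge(g^{-1}\cdot\varphi)\bigr)=g\cdot(\alpha\wedge\varphi)$; you simply make explicit the multiplicativity of pullback, the $(g^{-1})^*$ convention, and the $G_2$-equivariance of $\star_7$, which the paper uses only implicitly.
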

\begin{proof}
\[
  W(g\cdot\alpha)
  =(g\cdot\alpha)\wedge\varphi
  =g\cdot\!\bigl(\alpha\wedge(g^{-1}\cdot\varphi)\bigr)
  =g\cdot(\alpha\wedge\varphi)
  =g\cdot W(\alpha),
\]
where $g^{-1}\cdot\varphi=\varphi$ (every $g\in G_2$ preserves $\varphi$
by definition of $G_2$) has been used.
\end{proof}

\noindent
Under the identification $\Omega^6\cong\mathbf{7}$
(via the seven-dimensional Hodge dual: $\star_7:\Omega^6\xrightarrow{\sim}\Omega^1\cong\mathbf{7}$),
the restriction
\[
  W\big|_{\Lambda^3_{\mathbf{27}}}:
  \mathbf{27}\longrightarrow\mathbf{7}
\]
is a $G_2$-equivariant linear map, hence an element of
$\HomG(\mathbf{27},\mathbf{7})$.

By Lemma~\ref{HomG22770}, this space is zero:

\begin{corollary}\label{cor:main}
For every $\tau_3\in\Lambda^3_{\mathbf{27}}(M_7)$:
\[
  \tau_3\wedge\varphi = 0\;\in\;\Omega^6(M_7).
\]
\end{corollary}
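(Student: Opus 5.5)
The plan is to reduce the statement to pointwise linear algebra and then apply Lemma~\ref{thm:schur}(i) through Lemma~\ref{HomG22770}. Since $W(\alpha)=\alpha\wedge\varphi$ is $C^\infty(M_7)$-linear (tensorial), it suffices to fix a point $p\in M_7$ and work on $V=T_pM_7$. There we choose a $\varphi$-adapted coframe in which $\varphi_p$ takes the standard form $e^{127}+e^{347}+e^{567}+e^{135}-e^{146}-e^{236}-e^{245}$ used in (\ref{taus}). The stabiliser of $\varphi_p$ in $GL(V)$ is then a copy of $G_2\subset SO(7)$, acting on $\Lambda^kV^*$ by pull-back. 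The decomposition $\Lambda^3V^*=\Lambda^3_{\mathbf 1}\oplus\Lambda^3_{\mathbf 7}\oplus\Lambda^3_{\mathbf{27}}$ is the pointwise version of the one invoked in the statement, and $\tau_3|_p\in\Lambda^3_{\mathbf{27}}$.

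\textbf{Step 1: equivariance at $p$.} I will show that $W_p:\Lambda^3V^*\to\Lambda^6V^*$ is $G_2$-equivariant. This is the preceding lemma restricted to a point, and it uses only that pull-back is an algebra homomorphism and that $g^*\varphi_p=\varphi_p$. I will also check that $\star_7:\Lambda^6V^*\to V^*$ is $G_2$-equivariant, which holds because $G_2\subset SO(7)$ preserves both the metric and the orientation determined by $\varphi_p$. Composing gives an equivariant map $\star_7\circ W_p|_{\Lambda^3_{\mathbf{27}}}:\Lambda^3_{\mathbf{27}}\to V^*\cong\mathbf 7$.

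\textbf{Step 2: Schur.} Both $\Lambda^3_{\mathbf{27}}$ and $V^*$ are irreducible real $G_2$-representations, and their dimensions differ ($27\neq7$). Lemma~\ref{thm:schur}(i), whose kernel/image argument is valid over $\mathbb R$, then gives $\star_7\circ W_p|_{\Lambda^3_{\mathbf{27}}}=0$. Because $\star_7$ is injective, $\tau_3|_p\wedge\varphi_p=0$. Since $p$ was arbitrary, $\tau_3\wedge\varphi=0$ in $\Omega^6(M_7)$.

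\textbf{Main obstacle.} The weak point is the irreducibility input: the argument needs $\Lambda^3_{\mathbf{27}}$ and $\mathbf 7$ to be irreducible as \emph{real} representations, not merely after complexification. I would settle this by citing the standard fact that $\Lambda^3_{\mathbf{27}}\cong S^2_0V^*$ (traceless symmetric 2-tensors) via the map $h\mapsto h_{i}{}^{l}\varphi_{ljk}$ symmetrised, together with the real irreducibility of $S^2_0$ and of the fundamental $\mathbf 7$ of $G_2$ \cite{Karigiannis}.

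\textbf{Cross-check.} As a consistency test, independent of Schur, I would compute $W$ on the other two summands. On $\Lambda^3_{\mathbf 1}$ one has $\varphi\wedge\varphi=0$, since it is an odd form wedged with itself. On $\Lambda^3_{\mathbf 7}$ one has $(X\lrcorner\psi)\wedge\varphi=\pm4\star_7X^\flat$, which is an isomorphism onto $\Lambda^6$. This reproduces the known characterisation $\Lambda^3_{\mathbf{27}}=\{\gamma:\gamma\wedge\varphi=0,\ \gamma\wedge\psi=0\}$ and confirms that the corollary is compatible with the contraction identity (\ref{tau3dotpsi0}) used earlier.
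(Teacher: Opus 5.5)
Your proposal is correct and follows the paper's own route. The paper likewise shows that $\alpha\mapsto\alpha\wedge\varphi$ is $G_2$-equivariant, identifies $\Omega^6$ with $\mathbf{7}$ via $\star_7$, and concludes from $\mathrm{Hom}_{G_2}(\mathbf{27},\mathbf{7})=0$ by Schur's lemma, since $27\neq 7$. Your version is more careful: you work pointwise on $T_pM_7$, check that $\star_7$ is equivariant, and flag real irreducibility, of which only the irreducibility of $\Lambda^3_{\mathbf{27}}$ is actually needed because no injective linear map $\mathbb{R}^{27}\to\mathbb{R}^{7}$ exists.
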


\noindent
This is the identity used in deriving the contraction result
$(\tau_3)_{mnp}\,\psi_q{}^{mnp}=0$, as follows from the master identity
(proved by direct substitution of $\psi=\star_7\varphi$):
\[
  \alpha_{mnp}\,\psi_q{}^{mnp}
  =\frac{1}{6}\,\bigl[\star_7(\alpha\wedge\varphi)\bigr]_q
  \qquad\forall\;\alpha\in\Omega^3(M_7).
\]
Setting $\alpha=\tau_3$ and using Corollary~\ref{cor:main}:
\[
  (\tau_3)_{mnp}\,\psi_q{}^{mnp}
  =\frac{1}{6}\,\bigl[\star_7(\tau_3\wedge\varphi)\bigr]_q
  =\frac{1}{6}\,[\star_7(0)]_q
  = 0.
\]

\end{document}